\newenvironment{claimproof}[1]{\noindent{\bf Proof of Claim #1:\@}}{\hfill $\square$\\}
\newcommand\nix{\,\cdot\,}
\newcommand\normA{\norm{\rho_{\nix\star}}_2}
\newcommand\normB{\norm{\rho_{\star\nix}}_2}
\newcommand\normC{\norm{\rho}_2}
\newcommand\Forb{\cF}
\newcommand\contig{\triangleleft}
\newcommand\prc{\pi^{\mathrm{rc}}_{k,n,m}}
\newcommand\ppl{\pi^{\mathrm{pl}}_{k,n,m}}
\newcommand\Bal{\cB_{n,k}(\omega)}
\newcommand{\beq}{\begin{equation}} \newcommand{\eeq}{\end{equation}}
\newcommand\bal{{\mathrm{bal}}}
\newcommand\G{\vec G}
\newcommand\gnp{G(n,p)}
\newcommand\gnm{\cG(n,m)}
\numberwithin{equation}{section}
\def\vec#1{\mbox{\boldmath$#1$}}
\newcommand{\Zkc}{Z_{k}}
\newcommand{\Ztame}{\widetilde Z_{k,\omega}}
\newcommand{\Zbal}{Z_{k,\omega}}
\newcommand{\dc}{d_{k,\mathrm{cond}}}
\newcommand{\dk}{d_{k-\mathrm{col}}}
\newcommand\MPCPS{Mathematical Proceedings of the Cambridge Philosophical Society}
\newcommand\IPL{Information Processing Letters}
\newcommand\COMB{Combinatorica}
\DeclareMathOperator{\pr}{\mathbb P}
\newcommand\SIGMA{\vec\sigma}
\newcommand\TAU{\vec\tau}
\newtheorem{definition}{Definition}[section]
\newtheorem{claim}[definition]{Claim}
\newtheorem{theorem}[definition]{Theorem}
\newtheorem{lemma}[definition]{Lemma}
\newtheorem{proposition}[definition]{Proposition}
\newtheorem{corollary}[definition]{Corollary}
\newtheorem{fact}[definition]{Fact}
\newcommand\cA{\mathcal{A}}
\newcommand\cB{\mathcal{B}}
\newcommand\cC{\mathcal{C}}
\newcommand\cF{\mathcal{F}}
\newcommand\cG{\mathcal{G}}
\newcommand\cE{\mathcal{E}}
\newcommand\cH{\mathcal{H}}
\newcommand\cS{\mathcal{S}}
\newcommand\cV{\mathcal{V}}
\newcommand\cZ{\mathcal{Z}}
\def\cR{{\mathcal R}}
\def\cC{{\mathcal C}}
\def\cE{{\mathcal E}}
\newcommand\eps{\varepsilon}
\newcommand\Erw{\mathbb{E}}
\newcommand\w{{\omega}}
\newcommand{\vecone}{\vec{1}}
\newcommand{\Po}{{\rm Po}}
\newcommand{\bink}[2] {{{#1}\choose {#2}}}
\newcommand\ra{\rightarrow}
\newcommand\bc[1]{\left({#1}\right)}
\newcommand\cbc[1]{\left\{{#1}\right\}}
\newcommand\bcfr[2]{\bc{\frac{#1}{#2}}}
\newcommand{\bck}[1]{\left\langle{#1}\right\rangle}
\newcommand\brk[1]{\left\lbrack{#1}\right\rbrack}
\newcommand\scal[2]{\bck{{#1},{#2}}}
\newcommand\norm[1]{\left\|{#1}\right\|}
\newcommand\RR{\mathbb{R}}
\newcommand{\whp}{w.h.p.}
\newcommand{\Erdos}{Erd\H{o}s}
\newcommand{\Renyi}{R\'enyi}
\newcommand\Lem{Lemma}
\newcommand\Prop{Proposition}
\newcommand\Thm{Theorem}
\newcommand\Cor{Corollary}
\newcommand\Sec{Section}
\begin{document}

\title{Planting colourings silently}

\author[Victor Bapst, Amin Coja-Oghlan, Charilaos Efthymiou]{Victor Bapst$^*$, Amin Coja-Oghlan$^*$, Charilaos Efthymiou}
\thanks{$^*$The research leading to these results has received funding from the European Research Council under the European Union's Seventh Framework
			Programme (FP/2007-2013) / ERC Grant Agreement n.\ 278857--PTCC}
\date{\today}

\address{Victor Bapst, {\tt bapst@math.uni-frankfurt.de}, Goethe University, Mathematics Institute, 10 Robert Mayer St, Frankfurt 60325, Germany.}

\address{Amin Coja-Oghlan, {\tt acoghlan@math.uni-frankfurt.de}, Goethe University, Mathematics Institute, 10 Robert Mayer St, Frankfurt 60325, Germany.}

\address{Charilaos Efthymiou, {\tt efthymiou@math.uni-frankfurt.de}, Goethe University, Mathematics Institute, 10 Robert Mayer St, Frankfurt 60325, Germany.}

\begin{abstract}
\noindent
Let $k\geq3$ be a fixed integer and let $Z_k(G)$ be the number of $k$-colourings of the graph $G$.
For certain values of the average degree, the random variable $Z_k(G(n,m))$ is known to be concentrated in the sense that
$\frac1n(\ln Z_k(G(n,m))-\ln\Erw[Z_k(G(n,m))])$ converges to $0$ in probability [Achlioptas and Coja-Oghlan: Proc.\ FOCS~2008].
In the present paper we prove a significantly stronger concentration result.
Namely, we show that for a wide range of average degrees,
	$\frac1\omega(\ln Z_k(G(n,m))-\ln\Erw[Z_k(G(n,m))])$ converges to $0$ in probability for {\em any} diverging function $\omega=\omega(n)\ra\infty$.
For $k$ exceeding a certain constant $k_0$ this result covers all average degrees up to the so-called {\em condensation phase transition} $\dc$,
and this is best possible.
As an application, we show that the experiment of choosing a $k$-colouring of the random graph $G(n,m)$
uniformly at random is contiguous with respect to the so-called ``planted model''.

\noindent
\end{abstract}

\maketitle
\section{Introduction}

\noindent

\subsection{Background and motivation}
Let $G(n,m)$ denote the random graph on the vertex set $\brk n=\cbc{1,\ldots,n}$ with precisely $m$ edges.
The study of the graph colouring problem on $G(n,m)$ goes back to the seminal paper of \Erdos\ and \Renyi~\cite{ER}.
A wealth of research has since been devoted to either estimating the typical value of the chromatic number of $G(n,m)$~\cite{AchNaor,BBColor,LuczakColor,Matula},
	its concentration~\cite{AlonKriv,Luczak,ShamirSpencer}, or the problem of colouring random graphs by means of efficient algorithms~\cite{AchMolloy,GMcD,KSud};
	for a more complete survey see~\cite{BB,JLR}.
Some of the methods developed in this line of work have had a wide impact on combinatorics  
	(e.g., the use of martingale tail bounds).

Since the 1990s substantial progress has been made in the case of {\em sparse} random graphs,
	where $m=O(n)$ as $n\ra\infty$.
For instance, Achlioptas and Friedgut~\cite{AchFried} proved that
	for any $k\geq3$ there exists a {\em sharp threshold sequence} $\dk(n)$ such that for any fixed $\eps>0$ the random graph $G(n,m)$ is $k$-colourable \whp\
	if $2m/n<\dk(n)-\eps$, whereas $G(n,m)$ fails to be $k$-colourable \whp\ if $2m/n>\dk(n)+\eps$.
The best current bounds~\cite{Covers,Danny} on $\dk(n)$ show that there is a sequence $(\gamma_k)_{k\geq3}$, $\lim_{k\ra\infty}\gamma_k=0$,
such that
	\begin{equation}\label{eqkcol}
	(2k-1)\ln k-2\ln2-\gamma_k\leq\liminf_{n\ra\infty}\dk(n)\leq\limsup_{n\ra\infty}\dk(n)\leq(2k-1)\ln k-1+\gamma_k.
	\end{equation}

In recent work, to a large extent inspired by predictions from statistical physics~\cite{MM},
 it has emerged that properties of {\em typical} $k$-colourings have a very significant impact both on combinatorial and algorithmic aspects
of the random graph colouring problem. 
To be precise, by a typical $k$-colouring we mean a $k$-colouring of the random graph $G(n,m)$ chosen uniformly at random from the set of all its $k$-colourings
	(provided that this set is non-empty).
Properties of such randomly chosen colourings have been harnessed to study the ``geometry'' of the set of $k$-colourings of a
random graph~\cite{Barriers,Molloy} as well as the nature of correlations
between the colours that different vertices take~\cite{Reconstr}.
In particular, the proofs of the bounds~(\ref{eqkcol}) on $\dk(n)$ exploit structural properties such as the ``clustering''
of the set of $k$-colourings and the emergence of ``frozen variables''.

\subsection{Quiet planting}
The notion of choosing a random colouring of a  random graph $G(n,m)$ can be formalised as follows.
Let $\Lambda_{k,n,m}$ be the set of all pairs $(G,\sigma)$ such that $G$ is a graph on $[n]$
with precisely $m$ edges, and $\sigma$ is a $k$-colouring of $G$.
Further, for a graph $G$ let $\Zkc(G)$ signify the number of $k$-colourings of $G$.
Now, define a probability distribution $\prc(G,\sigma)$, called the {\em random colouring model}, on $\Lambda_{k,n,m}$ by letting
	$$\prc(G,\sigma)=\brk{\Zkc(G)\bink{\bink{n}2}m\pr\brk{G(n,m)\mbox{ is $k$-colourable}}}^{-1}.$$
Perhaps more intuitively, this is the distribution produced by the following experiment.
\begin{description}
\item[RC1] Generate a random graph $\G=G(n,m)$ subject to the condition that $\Zkc(\G)>0$.
\item[RC2] Choose a $k$-colouring $\TAU$ of $\G$ uniformly at random. 
	The result of the experiment is $(\G,\TAU)$.
\end{description}

Since we are going to be interested in values of $m/n$ where $G(n,m)$ is $k$-colourable \whp,
the conditioning in step {\bf RC1} is harmless.
But what turns the direct study of the distribution $\prc$ into a challenge is step {\bf RC2}.
This is illustrated by the fact that the best current algorithms for sampling a $k$-colouring of $G(n,m)$ are known to be efficient only
for average degrees $d<k$~\cite{Efthymiou}, a far cry from $\dk(n)$, cf.~(\ref{eqkcol}).

Achlioptas and Coja-Oghlan~\cite{Barriers} suggested to circumvent this problem by means of an alternative probability distribution
on $\Lambda_{k,n,m}$ called the {\em planted model}.
This distribution is induced by the following experiment;
for $\sigma:[n]\ra[k]$ let
	$$\Forb(\sigma)=\sum_{i=1}^k\bink{|\sigma^{-1}(i)|}2$$
denote the number of edges of the complete graph that are monochromatic under $\sigma$.
\begin{description}
\item[PL1] Choose a map $\SIGMA:\brk n\ra\brk k$ uniformly at random, subject to the condition that $\Forb(\SIGMA)\leq\bink n2- m$.
\item[PL2] Generate a graph $\G$ on $\brk n$ consisting of $m$ edges that are bichromatic under $\SIGMA$ uniformly at random.
	The result of the experiment is $(\G,\SIGMA)$.
\end{description}
Thus, the probability that the planted model assigns to a pair $(G,\sigma)$ is
	$$\ppl(G,\sigma)\sim\brk{\bink{\bink{n}2}m k^n\pr\brk{\mbox{$\sigma$ is a $k$-colouring of $G(n,m)$}}}^{-1}.$$
In contrast to the ``difficult'' experiment {\bf RC1--RC2}, {\bf PL1--PL2} is quite convenient to work with. 

Of course, the two probability distributions $\prc$ and $\ppl$ differ.
For instance, under $\ppl$ a graph $G$ comes up with a probability that is proportional to its number of $k$-colourings,
which is not the case under $\prc$.
However, the two models are related if $m=m(n)$ is such that
	\beq\label{eqFriedgutConcentration}
	\ln\Zkc(G(n,m))=\ln\Erw[\Zkc(G(n,m))]+o(n)\qquad\mbox\whp
	\eeq
Indeed, if~(\ref{eqFriedgutConcentration}) is satisfied, then the following is true~\cite{Barriers}.
	\begin{equation}\label{eqplantingTrick} 
	\parbox{15cm}{If $(\cE_n)$ is a sequence of events $\cE_n\subset\Lambda_{k,n,m}$
		such that $\ppl[\cE_n]\leq\exp(-\Omega(n))$, then $\prc[\cE_n]=o(1)$.}
	\end{equation}
The statement~(\ref{eqplantingTrick}), baptised ``quiet planting'' by Krzalaka and Zdeborov\'a~\cite{QuietPlanting}, has provided the foundation for the study of the geometry of the set
of colourings, freezing etc.~\cite{Barriers,Cond,Molloy,Reconstr}.
Moreover, similar statements have proved useful in the study of other
random constraint satisfaction problems~\cite{Angelica,MolloyRestrepo,Reconstr}.
Yet a significant complication in the use of~(\ref{eqplantingTrick}) is that 
 $\cE_n$ is required to be {\em exponentially} unlikely in the planted model.
This  has caused substantial difficulties in several applications (e.g., \cite{Cond,Molloy}).

\subsection{Results}
The contribution of the present paper is to show that the statement~(\ref{eqplantingTrick}) can be sharpened in the strongest possible sense.
Roughly speaking, we are going to show that if~(\ref{eqFriedgutConcentration}) holds, then the random colouring model is contiguous with respect to the planted model,
	i.e., in~(\ref{eqplantingTrick}) it suffices that $\ppl[\cE_n]=o(1)$ (see \Thm~\ref{Thm_cont} below for a precise statement).
We obtain this result by establishing that under certain conditions the number $\Zkc(G(n,m))$ of $k$-colourings of the random graph
is concentrated remarkably tightly.

To state the result, we need a bit of notation.
From here on out we always assume that $m=\lceil dn/2\rceil$ for a number $d>0$ that remains fixed as $n\ra\infty$.
Furthermore, for $k\geq3$ we define
	\begin{equation}\label{eqdc}
	\dc=\sup\cbc{d>0:\lim_{n\ra\infty}\Erw[\Zkc(G(n,m))^{1/n}]=k(1-1/k)^{d/2}}.
	\end{equation}
This definition is motivated by the well-known fact that
	\beq\label{eqNaiveFirstMoment}
	\Erw[\Zkc(G(n,m))]=\Theta(k^n(1-1/k)^{m}),
	\eeq
Thus, Jensen's inequality shows that $\limsup_{n\ra\infty}\Erw[\Zkc(G(n,m))^{1/n}]\leq k(1-1/k)^{d/2}$ for all $d$,
	and $\dc$ marks the greatest average degree up to which this upper bound is tight.
Under the assumption that $k\geq k_0$ for a certain constant $k_0$
it is possible to calculate the number $\dc$ precisely~\cite{Cond}, and an
asymptotic expansion in $k$ yields
	\begin{align*}
	\dc&=(2k-1)\ln k-2\ln2+\gamma_k,\qquad\mbox{ where }\lim_{k\ra\infty}\gamma_k=0.
	\end{align*}

\begin{theorem}\label{Thm_conc}
There is a constant $k_0>3$ such that the following is true.
Assume either  that $k\geq3$ and $d\leq 2(k-1)\ln(k-1)$ or that $k\geq k_0$ and $d<\dc$.
Then
	\beq\label{eqThm_conc666}
	\lim_{\omega\ra\infty}\lim_{n\ra\infty}\pr\brk{|\ln\Zkc(G(n,m))-\ln\Erw[\Zkc(G(n,m))]|\leq\omega}=1.
	\eeq
On the other hand, for any fixed number $\omega>0$, any $k\geq3$ and any $d>0$ we have
	$$\lim_{n\ra\infty}\pr\brk{|\ln\Zkc(G(n,m))-\ln\Erw[\Zkc(G(n,m))]|\leq\omega}<1.$$
\end{theorem}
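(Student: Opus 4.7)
The plan is to establish (\ref{eqThm_conc666}) via the small subgraph conditioning method of Robinson--Wormald and Janson, which simultaneously yields the companion anti-concentration statement as a by-product.

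\emph{Reduction to tame colourings.} I would introduce a subcount $\Zkg \leq \Zkc$ of ``tame'' $k$-colourings---those whose colour classes have size $(1+o(1))n/k$ and satisfy additional local-regularity conditions on their overlap structure with a reference planted colouring. The constraints are engineered so that (a) $\Erw[\Zkg] = (1-o(1))\Erw[\Zkc]$, by the entropy computation already underlying (\ref{eqNaiveFirstMoment}), and (b) the restricted second moment obeys $\Erw[\Zkg^2] \leq C\,\Erw[\Zkg]^2$ throughout the stated range. Both regimes $d \leq 2(k-1)\ln(k-1)$ for general $k$ and $d < \dc$ for $k \geq k_0$ are precisely those where the second-moment landscape is dominated by the diagonal overlap between a pair of tame colourings, so the required bound can be imported from the Achlioptas--Naor second moment, respectively from the refined analysis up to the condensation threshold.

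\emph{Cycle moments via planting.} Let $X_\ell$ denote the number of $\ell$-cycles in $\gnm$. Janson's theorem reduces (\ref{eqThm_conc666}) for $\Zkg$ to two inputs: (i) $\Erw[\Zkg\,(X_\ell)_r]/\Erw[\Zkg] \to (\lambda_\ell(1+\delta_\ell))^r$ with $\lambda_\ell = d^\ell/(2\ell)$ and explicit shifts $\delta_\ell$; and (ii) the variance-balance identity $\Erw[\Zkg^2]/\Erw[\Zkg]^2 \to \exp\bigl(\sum_{\ell \geq 3} \lambda_\ell \delta_\ell^2\bigr)$. Ingredient (i) follows from the planting description of $\ppl$: drawing a balanced $\SIGMA$ and then inserting $m$ uniformly random $\SIGMA$-bichromatic edges yields a model in which cycle counts are asymptotically independent Poisson with the classical shift $\delta_\ell = (-1/(k-1))^\ell$, up to vanishing corrections from the tameness conditioning. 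Ingredient (ii) is the principal obstacle: it demands not the constant-factor bound of step~(b) but the \emph{leading constant} of $\Erw[\Zkg^2]$. I would obtain this through a Laplace expansion of the second-moment integral on $\Birk$ around the diagonal overlap, verifying that the Gaussian determinant prefactor produced at the saddle point assembles into exactly $\exp(\sum \lambda_\ell \delta_\ell^2)$ after collecting contributions from every eigendirection of the Hessian.

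\emph{Concluding both halves.} Once (i) and (ii) hold, Janson's theorem gives $\Zkg/\Erw[\Zkg] \xrightarrow{d} W := \prod_{\ell \geq 3}(1+\delta_\ell)^{X_\ell^\infty} e^{-\lambda_\ell \delta_\ell}$ with independent $X_\ell^\infty \sim \Po(\lambda_\ell)$, so $\ln \Zkg - \ln \Erw[\Zkg]$ is tight. Combining $\Zkg \leq \Zkc$ with $\Erw[\Zkc - \Zkg] = o(\Erw[\Zkc])$ and Markov's inequality gives $\Zkc/\Zkg = 1 + o_P(1)$, so the tightness transfers to $\ln\Zkc - \ln\Erw[\Zkc]$, which is (\ref{eqThm_conc666}). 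For the anti-concentration assertion, the same limit $W$ is non-degenerate because already $\delta_3 \neq 0$ contributes genuine Poisson fluctuation, precluding any fixed $\omega$ from absorbing $|\ln\Zkc - \ln\Erw[\Zkc]|$. To cover \emph{every} $d > 0$ (including degrees outside the concentration range), I would argue directly: for each fixed $d$ the triangle count in $\gnm$ converges to $\Po(d^3/6)$, and conditioning on $X_3 = t$ multiplies $\Erw[\Zkc]$ by $(1 - (k-1)^{-2} + o(1))^t$, producing a Poisson-distributed shift in $\ln\Zkc - \ln\Erw[\Zkc]$ that no fixed $\omega$ can swallow.
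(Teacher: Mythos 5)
Your overall route coincides with the paper's: restrict to a tame subcount, verify Janson's hypotheses (planted cycle moments for {\bf SSC2}, an exact-constant second moment for {\bf SSC4}), transfer back to $\Zkc$, and obtain the second assertion from triangle conditioning. The genuine gap is in how you set up the restricted count. Requiring $\Erw[\Zkg]=(1-o(1))\Erw[\Zkc]$ forces the colour classes to deviate from $n/k$ by $\gg\sqrt n$; with such a window the colour-density (margin) fluctuations survive inside the second moment, and $\Erw[\Zkg^2]$ acquires divergent Gaussian factors along the $2(k-1)$ margin eigendirections of the Hessian, which must be shown to cancel \emph{exactly} against the two first-moment Gaussians in $\Erw[\Zkg]^2$, uniformly over the window. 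This cannot simply be ``imported'' from Achlioptas--Naor or from the analysis up to $\dc$: those second-moment computations are carried out for (essentially) balanced margins, so your claims (a) and (b) are in tension. The paper resolves this tension the opposite way: it takes the window $\omega^{-1}n^{-1/2}$ in density, i.e.\ $o(\sqrt n)$ in class sizes, so that the margin directions contribute only $1+o(1)$ and the Gaussian summation reduces to the $(k-1)^2$-dimensional lattice $\cS_n$ of doubly-zero-sum matrices (\Lem~\ref{lemma_hessian_wormald}); the price, $\ln\Erw[\Zbal]=\ln\Erw[\Zkc]-O(\ln\omega)$ (\Prop~\ref{prop_first_moment_bal_vanilla}), is harmless because the permitted deviation $\omega$ dwarfs $\ln\omega$. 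Your plan does not register this trade-off, and your step (ii) --- which you rightly call the principal obstacle --- is strictly harder in your formulation than in the paper's.

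There are also concrete slips that would make the SSC identities fail as written. The correct shift is $\delta_\ell=(-1)^\ell/(k-1)^{\ell-1}$, not $(-1/(k-1))^\ell$: an $\ell$-cycle admits $(k-1)^\ell+(-1)^\ell(k-1)$ proper colour types, and with your value the identity $\Erw[\Zkg^2]/\Erw[\Zkg]^2\to\exp(\sum_\ell\lambda_\ell\delta_\ell^2)$ cannot hold. Moreover, you work in the multigraph $\gnm$ but condition only on cycles of length $\ell\geq3$; in that model double edges are $2$-cycles and contribute $\lambda_2\delta_2^2=d^2/(4(k-1)^2)$ to the variance, so {\bf SSC4} fails if they are omitted, and the transfer from $\gnm$ to the simple graph $G(n,m)$ (conditioning on simplicity, an event of probability $\Omega(1)$, and comparing $\Erw[\Zkc(\gnm)]$ with $\Erw[\Zkc(G(n,m))]$) is never addressed. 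Finally, for the second assertion at arbitrary $d>0$, a drop in the conditional mean given $X_3=t$ does not by itself ``produce a shift'' of $\ln\Zkc$; you need a Markov step (choose $t$ so large that $\Erw[\Zkc\mid X_3=t]\leq e^{-2\omega}\Erw[\Zkc]$ and apply Markov's inequality conditionally), together with a justification of the conditional-mean formula for the full count $\Zkc$ at every $d$. This is precisely how the paper argues, using isolated triangles so that the conditional expectation can be computed elementarily for all $d$.
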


For $d,k$ covered by the first part of \Thm~\ref{Thm_conc} we have $\ln\Zkc(G(n,m))=\Theta(n)$ \whp\
Whilst one might expect {\em a priori} that $\ln\Zkc(G(n,m))$ has fluctuations of order, say, $\sqrt n$,
the first part of \Thm~\ref{Thm_conc} shows that actually $\ln\Zkc(G(n,m))$ fluctuates by no more than $\omega(n)$ for {\em any} $\omega(n)\ra\infty$ \whp\
Moreover, the second part shows that this is best possible.
In addition, for $k\geq k_0$ \Thm~\ref{Thm_conc} is best possible with respect to the range of $d$.
In fact, it has been shown in~\cite{Cond} that 
$\ln\Zkc(G(n,m))<\ln\Erw[\Zkc(G(n,m))]-\Omega(n)$ \whp\ for $d>\dc$.

\Thm~\ref{Thm_conc} enables us to establish a very strong connection between the random colouring model and the planted model.
To state this, we recall the following definition.
Suppose that $\vec\mu=(\mu_n)_{n\geq1},\vec\nu=(\nu_n)_{n\geq1}$ are two sequences of probability measures such that
$\mu_n,\nu_n$ are defined on the same probability space $\Omega_n$ for every $n$.
Then $(\mu_n)_{n\geq1}$ is {\em contiguous} with respect to $(\nu_n)_{n\geq1}$, in symbols $\vec\mu\contig\vec\nu$, if
for any sequence $(\cE_n)_{n\geq1}$ of events such that $\lim_{n\ra\infty}\nu_n(\cE_n)=0$ we have $\lim_{n\ra\infty}\mu_n(\cE_n)=0$.

\begin{theorem}\label{Thm_cont}
There is a constant $k_0>3$ such that the following is true.
Assume either  that $k\geq3$ and $d\leq 2(k-1)\ln(k-1)$ or that $k\geq k_0$ and $d<\dc$.
Then 
	$(\prc)_{n\geq1}\contig(\ppl)_{n\geq1}.$
\end{theorem}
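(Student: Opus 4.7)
The plan is to exhibit the Radon--Nikodym density $f_n = \prc/\ppl$ and to show it is bounded in $\prc$-probability; a standard uniform integrability argument then yields $\prc \contig \ppl$. From the explicit formulas for $\prc$ and $\ppl$ given in Section~1.2, for any $(G,\sigma) \in \Lambda_{k,n,m}$ with $\sigma$ a proper $k$-colouring of $G$,
\[
f_n(G,\sigma) = \frac{k^n \pr(\sigma \text{ is a $k$-colouring of } G(n,m))}{\Zkc(G)\,\pr(G(n,m) \text{ is $k$-colourable})}.
\]

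The next step is a uniform upper bound on the numerator. The quantity $p(\sigma) := \pr(\sigma \text{ is a $k$-colouring of } G(n,m))$ depends on $\sigma$ only through $\Forb(\sigma)$ and is a decreasing function of it; since $\Forb(\sigma) \geq (1-o(1))\binom{n}{2}/k$ by Jensen's inequality, one obtains $k^n p(\sigma) \leq C_d\, k^n(1-1/k)^m$ for a constant $C_d$ depending only on $d$. Combined with the asymptotic (\ref{eqNaiveFirstMoment}), this yields $k^n p(\sigma) \leq C\,\Erw[\Zkc(G(n,m))]$ uniformly in $\sigma$. Moreover, in the range of $d$ covered by \Thm~\ref{Thm_conc} we have $d < \dk(n)$ for large $n$, so $\pr(G(n,m) \text{ is $k$-colourable}) \to 1$. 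Combining these two facts gives $f_n(G,\sigma) \leq C'\,\Erw[\Zkc(G(n,m))]/\Zkc(G)$ for some absolute constant $C'$ and all sufficiently large $n$.

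The heart of the argument is to show that $\Erw[\Zkc(G(n,m))]/\Zkc(G)$ is bounded in $\prc$-probability. Under $\prc$ the marginal law of $G$ is that of $G(n,m)$ conditioned on $\Zkc > 0$, and since $\pr(\Zkc(G(n,m)) > 0) \to 1$ this marginal has vanishing total variation distance to the unconditioned law of $G(n,m)$. The lower tail in \Thm~\ref{Thm_conc} therefore transfers: for any $\eps > 0$ there exists $\omega < \infty$ such that $\prc(\Zkc(G) < e^{-\omega}\Erw[\Zkc(G(n,m))]) < \eps$ for all large $n$, and hence $\prc(f_n > C' e^\omega) < \eps$. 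Thus $f_n$ is bounded in $\prc$-probability.

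Finally, to deduce contiguity, for any event $\cE_n \subset \Lambda_{k,n,m}$ and any $C > 0$,
\[
\prc(\cE_n) = \sum_{(G,\sigma)\in\cE_n} f_n(G,\sigma)\,\ppl(G,\sigma) \leq C\,\ppl(\cE_n) + \prc(f_n > C).
\]
Given $\eps > 0$, choose $C$ with $\prc(f_n > C) < \eps/2$ for all large $n$; then $\ppl(\cE_n) \to 0$ forces $\prc(\cE_n) < \eps$ eventually, and since $\eps$ is arbitrary, $\prc(\cE_n) \to 0$. The only real obstacle is the tight concentration of $\ln\Zkc(G(n,m))$ supplied by \Thm~\ref{Thm_conc}; granted that, the reduction to contiguity is a short and essentially formal computation.
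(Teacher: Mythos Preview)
Your proof is correct and follows essentially the same approach as the paper: both arguments hinge on the uniform bound $p(\sigma)=O((1-1/k)^m)$ and the concentration statement that $\Zkc(G(n,m))\geq\delta\,\Erw[\Zkc(G(n,m))]$ with high probability (the paper invokes \Cor~\ref{Cor_count}, you invoke the equivalent \Thm~\ref{Thm_conc}). The only difference is packaging---you work explicitly with the Radon--Nikodym density $f_n=\prc/\ppl$ and the inequality $\prc(\cE_n)\leq C\,\ppl(\cE_n)+\prc(f_n>C)$, while the paper computes $\Erw[\Zkc(G(n,m))\vecone_{\cA_n}]$ directly and derives a contradiction; these are two standard ways of writing the same change-of-measure argument.
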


\noindent
Inspired by the term ``quiet planting'' that has been used to describe~(\ref{eqplantingTrick}), we are inclined to refer to the contiguity statement 
of \Thm~\ref{Thm_cont} as ``silent planting''.

\subsection{Discussion and further related work.}\label{Sec_discussion}
The proof of \Thm~\ref{Thm_conc} combines the second moment arguments from Achlioptas and Naor~\cite{AchNaor}
and its enhancements from~\cite{Cond,Danny} with the ``small subgraph conditioning'' method~\cite{Janson,RobinsonWormald}.
More precisely, the key observation on which the proof of \Thm~\ref{Thm_conc} is based is that the fluctuations of $\ln\Zkc(G(n,m))$
can be attributed to the variations of the number of bounded length cycles in the random graph.

This was known to be the case in random regular graphs.
In fact, Kemkes, Perez-Gimenez and Wormald~\cite{WormaldColoring} combined the small subgraph conditioning argument 
with the second moment argument from~\cite{AchNaor} to upper-bound the chromatic number of the random $d$-regular graph.
While it had been pointed out by Achlioptas and Moore~\cite{AMoColor} that the second moment argument from~\cite{AchNaor}
can be used rather directly to conclude that the same upper bound 
holds with a probability that remains bounded away from $0$ as $n\ra\infty$, small subgraph conditioning was used in~\cite{WormaldColoring} to boost
this probability to $1-o(1)$. 
Improved bounds on the chromatic number of random regular graphs,
also based on the second moment method and small subgraph conditioning, were recently obtained in~\cite{RegCol}.
In the case of the $G(n,m)$ model, small subgraph conditioning is not necessary to upper-bound
the chromatic number, because
the sharp threshold result~\cite{AchFried} can be used instead. 
	\footnote{While the combination of the second moment method and the sharp threshold result
		can be used to show that (\ref{eqFriedgutConcentration}) implies~(\ref{eqplantingTrick}), this approach
			 does {\em not} 
			 yield \Thm~\ref{Thm_conc}.
		For instance, even the sharp threshold analysis from~\cite{Barriers} allows for the possibility that $\Zkc(G(n,m))=(3-o(1))\Erw[\Zkc(G(n,m))]$ with probability $1/3$,
		while $\Zkc(G(n,m))\leq \exp(-n^{0.99})\Erw[\Zkc(G(n,m))]$ with probability $2/3$.}

A priori it might seem reasonable to expect that the random variable $\ln\Zkc$ is more tightly concentrated in
random regular graphs that in the $G(n,m)$ model, and that therefore small subgraph conditioning cannot be applied in the case of $G(n,m)$.
In fact, in the random regular graph for any fixed number $\omega$ the depth-$\omega$ neighbourhood of all but a bounded number of vertices
is just a $d$-regular tree.
Thus, there are only extremely limited fluctuations in the local structure of the random regular graph.
By contrast, in the $G(n,m)$-model the depth-$\omega$ neighbourhoods can be of varying shapes and sizes
	(although all but a bounded number will be acyclic), and also the number of vertices/edges in the largest connected component and
	the $k$-core fluctuate.
Nonetheless, perhaps somewhat surprisingly, we are going to show that even in the case of the $G(n,m)$ model, the fluctuations of $\ln\Zkc$ are
merely due to the appearance of short cycles.
Finally, \Thm~\ref{Thm_cont} will follow from \Thm~\ref{Thm_conc} by means of a similar argument
as used in~\cite{Barriers}.

We expect that the present approach of combining the second moment method with small subgraph conditioning
can be applied successfully to a variety of other random constraint problems.
Immediate examples that spring to mind include random $k$-NAESAT or random $k$-XORSAT,
random hypergraph $k$-colourability or, more generally, the family of problems studied in~\cite{Reconstr}.
(On the other hand, we expect that in problems such as random $k$-SAT the logarithm of the number of satisfying assignments
exhibits stronger fluctuations, due to a lack of symmetry.)

\subsection{Preliminaries and notation}
We always assume that $n\geq n_0$ is large enough for our various estimates to hold.
Moreover, if $p=(p_1,\ldots,p_l)$ is a vector with entries $p_i\geq0$, then we let
	$$H(p)=-\sum_{i=1}^lp_i\ln p_i.$$
Here and throughout, we use the convention that $0\ln0=0$.
Hence, if $\sum_{i=1}^lp_i=1$, then $H(p)$ is the entropy of the probability distribution $p$.
Further, for a number $x$ and an integer $h>0$ we let $(x)_h=x(x-1)\cdots(x-h+1)$ denote the $h$th falling factorial of $x$.

We use the following instalment of the small subgraph technique.

\begin{theorem}[\cite{Janson}]\label{Thm_Janson}
Suppose that $(\delta_l)_{l \geq2}$, $(\lambda_l)_{l\geq2}$ are sequences of real numbers such that $\delta_l\geq-1$ and $\lambda_l>0$ for all $l$.
Moreover, assume that $(C_{l,n})_{l \geq 2,n\geq1}$ and $(Z_n)_{n\geq1}$ are random variables
such that each $C_{l,n}$ takes values in the non-negative integers. 
Additionally, suppose that  for each $n$ the random variables $C_{2,n},\ldots,C_{n,n}$ and $Z_n$ are defined on the same probability space.
Moreover, 
let $(X_l)_{l\geq2}$ be a sequence of independent random variables such that $X_l$ has distribution $\Po(\lambda_l)$ and
assume that the following four conditions hold.
\begin{description}
\item[SSC1] for any integer $L\geq2$ and any integers $x_2,\ldots,x_L\geq0$ we have
	$$\lim_{n\ra\infty}\pr\brk{\forall 2\leq l\leq L:C_{l,n}=x_l}=\prod_{l=2}^L\pr\brk{X_l=x_l}.$$
\item[SSC2] for any integer $L\geq2$ and any integers $x_2,\ldots,x_L\geq0$ we have
	$$\lim_{n\ra\infty}\frac{\Erw[Z_n|\forall 2\leq l\leq L:C_{l,n}=x_l]}{\Erw[Z_n]}=\prod_{l=2}^L(1+\delta_l)\exp(-\lambda_l\delta_l).$$
\item[SSC3] we have $\sum_{l=2}^\infty\lambda_l\delta_l^2<\infty$.
\item[SSC4] we have
	$\lim_{n\ra\infty}\Erw[Z_n^2]/\Erw[Z_n]^2\leq\exp\brk{\sum_{l=2}^\infty\lambda_l\delta_l^2}. $
\end{description}
Then the sequence $(Z_n/\Erw[Z_n])_{n\geq1}$ converges in distribution to
	$\prod_{l=2}^\infty(1+\delta_l)^{X_l}\exp(-\lambda_l\delta_l).$
\end{theorem}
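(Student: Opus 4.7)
The plan is to treat the counts of short cycles as a source of ``explained variance'' in $Z_n$ and to show that everything orthogonal to this vanishes asymptotically. Write $Y_n = Z_n/\Erw[Z_n]$; for each truncation $L\geq 2$ define the conditional expectation $Y_{n,L} = \Erw[Y_n\mid C_{2,n},\ldots,C_{L,n}]$, and let $W_L = \prod_{l=2}^L(1+\delta_l)^{X_l}\exp(-\lambda_l\delta_l)$ with the $X_l\sim\Po(\lambda_l)$ independent. A direct Poisson computation yields $\Erw[W_L]=1$ and $\Erw[W_L^2]=\exp(\sum_{l=2}^L\lambda_l\delta_l^2)$, so by \textbf{SSC3} the sequence $(W_L)_{L\geq 2}$ is an $L^2$-bounded martingale and converges almost surely and in $L^2$ to a limit $W$ with $\Erw[W^2]=\exp(\sum_{l\geq 2}\lambda_l\delta_l^2)$.

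The first substantive step is, for each fixed $L$, to show that $(Y_{n,L},C_{2,n},\ldots,C_{L,n})$ converges in distribution to $(W_L,X_2,\ldots,X_L)$ as $n\to\infty$, and moreover that $\Erw[Y_{n,L}^2]\to\Erw[W_L^2]$. Joint convergence of the cycle counts is exactly \textbf{SSC1}, while \textbf{SSC2} identifies $\Erw[Y_n\mid C_{l,n}=x_l,\;2\leq l\leq L]/\Erw[Y_n]$ asymptotically with $\prod_{l=2}^L(1+\delta_l)^{x_l}\exp(-\lambda_l\delta_l)$. Combining the two and applying dominated convergence, justified by the finiteness of the Poisson moment generating function $\Erw[\prod_{l\leq L}(1+\delta_l)^{2X_l}]$ (ensured by \textbf{SSC3}), upgrades the distributional convergence to convergence of $\Erw[Y_{n,L}^2]$.

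The crux is then to control $Y_n - Y_{n,L}$ in $L^2$. By the tower property, $\Erw[(Y_n-Y_{n,L})^2] = \Erw[Y_n^2] - \Erw[Y_{n,L}^2]$. Assumption \textbf{SSC4} furnishes $\limsup_{n\to\infty}\Erw[Y_n^2]\leq\exp(\sum_{l\geq 2}\lambda_l\delta_l^2)$, while the previous step gives $\lim_{n\to\infty}\Erw[Y_{n,L}^2]=\exp(\sum_{l=2}^L\lambda_l\delta_l^2)$; subtracting and invoking \textbf{SSC3} yields $\lim_{L\to\infty}\limsup_{n\to\infty}\Erw[(Y_n-Y_{n,L})^2]=0$. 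For any bounded Lipschitz test function $f$ with Lipschitz constant $K$, the triangle inequality gives
$$|\Erw[f(Y_n)]-\Erw[f(W)]| \leq K\bigl(\Erw[(Y_n-Y_{n,L})^2]^{1/2} + \Erw[(W-W_L)^2]^{1/2}\bigr) + |\Erw[f(Y_{n,L})]-\Erw[f(W_L)]|,$$
and sending $n\to\infty$ for fixed large $L$, then $L\to\infty$, establishes distributional convergence of $Y_n$ to $W$.

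The main obstacle is the $L^2$-upgrade in the first step, namely justifying $\Erw[Y_{n,L}^2]\to\Erw[W_L^2]$ rather than merely an inequality; this is exactly what makes the \textbf{SSC4} upper bound sharp. It requires a uniform-integrability argument across the atoms $\{C_{l,n}=x_l\}$ and is where \textbf{SSC3} enters essentially, by taming the tails of the limiting Poisson product. This is the technical heart of Janson's original proof of the small subgraph conditioning principle.
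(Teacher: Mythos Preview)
The paper does not prove this theorem at all: it is quoted verbatim as a black-box tool from Janson's paper \cite{Janson} (hence the attribution in the theorem header) and is applied, not established, in \Sec~\ref{Sec_outline}. There is therefore no ``paper's own proof'' to compare your attempt against.

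That said, your sketch is essentially the standard proof of the small subgraph conditioning principle and is correct in outline. One technical remark: the step where you claim $\Erw[Y_{n,L}^2]\to\Erw[W_L^2]$ via dominated convergence is not fully justified as stated, because you would need a summable majorant for $\pr[\forall l\leq L:C_{l,n}=x_l]\cdot g_n(x_2,\ldots,x_L)^2$ uniform in $n$, and \textbf{SSC1}--\textbf{SSC2} only give pointwise limits. However, you do not actually need the full equality: Fatou's lemma applied termwise gives $\liminf_n\Erw[Y_{n,L}^2]\geq\Erw[W_L^2]$, and combining this with $\Erw[Y_{n,L}^2]\leq\Erw[Y_n^2]$ and \textbf{SSC4} yields
\[
\limsup_{n\to\infty}\Erw[(Y_n-Y_{n,L})^2]\leq\exp\Bigl(\sum_{l\geq2}\lambda_l\delta_l^2\Bigr)-\exp\Bigl(\sum_{l=2}^L\lambda_l\delta_l^2\Bigr),
\]
which tends to $0$ as $L\to\infty$ by \textbf{SSC3}. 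This is precisely how Janson handles it, and it sidesteps the uniform-integrability issue you flag at the end.
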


\section{Outline of the proof}\label{Sec_outline}

\noindent
It turns out to be convenient to prove \Thm s~\ref{Thm_conc} and~\ref{Thm_cont} by way of another random graph model $\gnm$.
This is a random (multi-)graph on the vertex set $[n]$ obtained by choosing $m$ edges $\vec e_1,\ldots,\vec e_m$ of the complete graph on $n$ vertices
uniformly and independently at random (i.e., with replacement).

To bound $\Zkc(\gnm)$ from below,  we will confine ourselves to $k$-colourings in which all the colour classes have very nearly the same size.
More precisely, for a map $\sigma:\brk n\ra\brk k$ we define
	$$\rho(\sigma)=(\rho_1(\sigma),\ldots,\rho_k(\sigma)),\quad\mbox{where }\rho_{i}(\sigma) = | \sigma^{-1}(i) |/n \qquad (i = 1 \dots k).$$
Thus, $\rho(\sigma)$ is a probability distribution on $\brk k$, to which we refer as the {\em colour density} of $\sigma$.
Let $\cC_{k}(n)$ signify the set of all possible colour densities $\rho(\sigma)$, $\sigma:\brk n\ra\brk k$.
Further, let $\overline\cC_k$ be the set of all probability distributions $\rho=(\rho_1,\ldots,\rho_k)$ on $\brk k$,
and let $\rho^\star=(1/k,\ldots,1/k)$ signify the barycentre of $\overline\cC_k$.
We say that $\rho=(\rho_1,\ldots,\rho_k)\in\overline\cC_k$ is {\em $(\omega,n)$-balanced}
if $$|\rho_i-k^{-1}|\leq \omega^{-1}n^{-\frac12}\quad\mbox{ for all $i\in\brk k$.}$$
Let $\cB_{n,k}(\omega)$ denote the set of all $(\omega,n)$-balanced $\rho\in\cC_{k}(n)$.
Now, for a graph $G$ on $[n]$ let
$\Zbal(G)$ signify the number of {\em $(\omega,n)$-balanced $k$-colourings}, i.e., $k$-colourings $\sigma$ such that $\rho(\sigma) \in \cB_{n,k}(\w)$.
In \Sec~\ref{sec_first_moment} we will calculate the first moment of 
$\Zbal$ to obtain the following.

\begin{proposition} \label{prop_first_moment_bal_vanilla}
Fix an integer $k\geq 3$ and a number $d \in (0, \infty)$ and assume that $\omega=\omega(n)$ is a sequence such that $\lim_{n\ra\infty}\omega(n)=\infty$.
Then
	$$
	\Erw \left[\Zkc \right(\gnm)]=\Theta(k^n(1-1/k)^m)\quad\mbox{and}\quad
	\frac{ \Erw \left[ \Zbal \right(\gnm)]}{\Erw \left[\Zkc \right(\gnm)]} \sim \frac{| \Bal |k^{k/2} }{(2 \pi n)^\frac{k-1}{2}}  \left(1+ \frac{d}{k-1} \right)^{\frac{k-1}{2}}. $$
In particular, $ \ln \Erw \left[ \Zbal(\gnm) \right] = \ln \Erw \left[\Zkc(\gnm) \right] + O \left( \ln \omega(n) \right)  $.
\end{proposition}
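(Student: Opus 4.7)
The plan is to decompose both expectations by colour density and then apply Stirling's formula together with a local Gaussian/Laplace analysis around the barycentre $\rho^\star$. In the multigraph model $\gnm$ the $m$ edges are drawn independently and uniformly from $\binom{[n]}{2}$, so for any $\sigma:[n]\to[k]$ with $\rho(\sigma)=\rho$ the probability that $\sigma$ is a $k$-colouring of $\gnm$ equals $p_\rho^m$, where
$$p_\rho \,=\, 1 - \binom{n}{2}^{-1}\sum_{i=1}^k\binom{n\rho_i}{2}.$$
Grouping the $k^n$ maps $\sigma:[n]\to[k]$ according to their colour density yields
$$\Erw\brk{\Zkc(\gnm)} \,=\, \sum_{\rho\in\cC_k(n)}\binom{n}{n\rho_1,\ldots,n\rho_k}\,p_\rho^m,$$
together with the analogous identity restricted to $\Bal$ for $\Erw\brk{\Zbal(\gnm)}$.

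Next I would parametrise $\rho_i = 1/k + x_i/\sqrt n$ with $\sum_i x_i=0$. A second-order Taylor expansion of $H(\rho)$ and of $\ln p_\rho$ around $\rho^\star$, combined with Stirling's formula for the multinomial coefficient, produces the uniform estimate
$$\binom{n}{n\rho_1,\ldots,n\rho_k}\,p_\rho^m \,\sim\, \frac{k^n k^{k/2}}{(2\pi n)^{(k-1)/2}}(1-1/k)^m\eul^{d/2}\exp\!\br{-\frac{k(k-1+d)}{2(k-1)}\sum_{i=1}^k x_i^2}$$
valid for $\rho$ in a shrinking neighbourhood of $\rho^\star$. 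This immediately handles $\Erw\brk{\Zbal(\gnm)}$: every $\rho\in\Bal$ has $|x_i|\leq 1/\omega = o(1)$, so the exponential factor is $1+o(1)$ and the sum collapses to $|\Bal|$ times the displayed prefactor.

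For $\Erw\brk{\Zkc(\gnm)}$ I would split the sum over $\cC_k(n)$ into a bulk $\|\rho-\rho^\star\|_\infty\leq n^{-1/2}\ln^2 n$ and a tail. On the bulk, the local estimate above turns the sum into a Riemann sum for a Gaussian integral on the hyperplane $\{y\in\RR^k:\sum_i y_i=0\}$, whose underlying lattice has covolume $\sqrt k\,n^{-(k-1)/2}$; a direct calculation of the resulting $(k-1)$-dimensional Gaussian integral gives
$$\Erw\brk{\Zkc(\gnm)} \,\sim\, k^n(1-1/k)^m\cdot\frac{\eul^{d/2}}{(1+d/(k-1))^{(k-1)/2}},$$
which in particular yields the $\Theta$ statement. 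The tail is controlled by observing that the rate function $\rho\mapsto H(\rho)+(d/2)\ln p_\rho$ is strictly concave and uniquely maximised at $\rho^\star$, so the tail contribution is exponentially smaller than the bulk and is absorbed into the $o(1)$. Dividing the two asymptotics then yields the claimed ratio formula, and since $|\Bal|$ counts lattice points in an $\ell_\infty$-ball of radius $\sqrt n/\omega$ inside $\{\sum n_i=n\}$, one has $\ln|\Bal|=\tfrac{k-1}{2}\ln n-(k-1)\ln\omega+O(1)$, giving $\ln\Erw\brk{\Zbal(\gnm)}=\ln\Erw\brk{\Zkc(\gnm)}+O(\ln\omega)$.

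The main obstacle will be to carry out the Stirling-plus-Taylor estimate uniformly on the bulk with sufficient precision and to handle the tail carefully near the boundary of $\overline\cC_k$, where some $\rho_i$ approach $0$ and Stirling's formula degrades; by contrast, the Gaussian integral itself and the lattice-point count of $|\Bal|$ are routine.
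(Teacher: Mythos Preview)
Your proposal is correct and follows essentially the same route as the paper: decompose by colour density, apply Stirling and a second-order Taylor expansion of $H(\rho)+\frac d2\ln(1-\|\rho\|_2^2)$ around $\rho^\star$, use concavity to dispose of the tail, and evaluate the resulting Gaussian sum over the hyperplane lattice. The only cosmetic differences are your choice of cutoff ($n^{-1/2}\ln^2 n$ versus the paper's $n^{-5/12}$) and your computation of the Gaussian sum via the lattice covolume $\sqrt k\,n^{-(k-1)/2}$, where the paper instead eliminates $\rho_k$ and uses the $(k-1)\times(k-1)$ matrix $J$ with $\det J=k$; these yield the same constant. Your worry about Stirling degrading near $\partial\overline\cC_k$ is unnecessary for the tail: only the crude upper bound $\binom{n}{n\rho}\leq\exp(nH(\rho))$ is needed there, and concavity of the rate function does the rest.
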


As outlined in \Sec~\ref{Sec_discussion}, our basic strategy is to show that the fluctuations
of $\Zbal(\gnm)$ can be attributed to fluctuations in the number of cycles of a bounded length.
Hence,  for an integer $l\geq 2$ we let $C_{l,n}$ denote the number of cycles of length (exactly) $l$ in $\gnm$.
Let
	\beq\label{eqlambdadelta}
	\lambda_l=\frac{d^l}{2l}\quad\mbox{ and }\quad\delta_l=\frac{(-1)^l}{(k-1)^{l-1}}.
	\eeq
It is well-known that $C_{2,n},\ldots$ are asymptotically independent Poisson variables (e.g., \cite[\Thm~5.16]{BB}).
More precisely, we have the following.

\begin{fact}\label{Fact_cycles}
If $x_2,\ldots,x_L$ are non-negative integers, then
	$$\lim_{n\ra\infty}\pr\brk{\forall 2\leq l\leq L:C_{l,n}=x_l}=\prod_{l=2}^L\pr\brk{\Po(\lambda_l)=x_l}.$$
\end{fact}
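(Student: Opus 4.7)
The standard route for a joint Poisson convergence statement of this kind is the method of factorial moments. It suffices to show that for every $L\geq2$ and every tuple of non-negative integers $(x_l)_{l=2}^L$,
$$\lim_{n\to\infty}\Erw\brk{\prod_{l=2}^L(C_{l,n})_{x_l}}=\prod_{l=2}^L\lambda_l^{x_l},$$
since the right-hand side is precisely the joint factorial moment of independent $\Po(\lambda_l)$ variables. Convergence of the factorial moments of a family of non-negative integer valued random variables to those of independent Poissons forces joint convergence in distribution.

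The plan is to expand the left-hand side combinatorially. The product $\prod_{l}(C_{l,n})_{x_l}$ counts ordered tuples $(\cF_l)_{l=2}^L$ in which each $\cF_l$ is an ordered list of $x_l$ distinct length-$l$ cycles in $\gnm$, so by linearity of expectation the moment equals a sum over all admissible tuples of candidate cycles in the complete (multi-)graph on $[n]$, weighted by the probability that the prescribed edges actually appear in the sampled sequence $\vec e_1,\ldots,\vec e_m$. I would split the sum according to whether the cycles in the tuple are pairwise vertex-disjoint. For a vertex-disjoint family the count of candidates is $\prod_{l\geq 3}((n)_l/(2l))^{x_l}\cdot\binom n 2^{x_2}(1+o(1))$, and the probability that $s:=\sum_{l\geq3}lx_l$ prescribed simple edges all appear in $\vec e_1,\ldots,\vec e_m$ while $x_2$ further prescribed pairs each receive at least two copies factorises asymptotically into $(m)_s\binom n 2^{-s}\cdot\brk{\binom m 2\binom n 2^{-2}}^{x_2}$. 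Substituting $m\sim dn/2$ and collecting factors produces exactly $\prod_l\lambda_l^{x_l}$. Tuples whose cycles share at least one vertex span fewer than $2x_2+\sum_{l\geq3}lx_l$ vertices but still require the same density of coincidences among the $m$ draws; a standard bound shows they contribute $O(1/n)$ and so vanish in the limit.

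The only real novelty relative to the classical statement cited as [BB, Thm~5.16], which is usually written down for $G(n,p)$ or $G(n,m)$, is the treatment of $l=2$: in the multigraph model $\gnm$ a ``cycle of length $2$'' is a pair of parallel edges, so the relevant event is the birthday-type coincidence that two of the $m$ uniform draws from $\binom n 2$ hit the same pair. The count $\binom n 2\cdot\binom m 2\binom n 2^{-2}\sim d^2/4=\lambda_2$ lands on the advertised rate, and once this is correctly bookkept the remainder is the routine factorial-moment computation. This verification of the $l=2$ contribution, together with the extension of the edge-disjointness argument to a sampling-with-replacement model, is the only non-mechanical step in the proof.
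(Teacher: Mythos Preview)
The paper does not actually prove this fact; it simply records it as well known and points to \cite[\Thm~5.16]{BB}. Your factorial-moment argument is exactly the standard route to such joint Poisson limits and is correct, including your observation that the only point requiring care beyond the textbook $G(n,m)$ statement is the $l=2$ contribution in the sampling-with-replacement model $\gnm$, where a $2$-cycle is a pair of coincident draws and the expected count $\binom m2/\binom n2\sim d^2/4=\lambda_2$ matches the advertised rate.
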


In order to apply \Thm~\ref{Thm_Janson} to the random variables $C_{l,n}$ and $\Zbal(\gnm)$, we
need to investigate the impact of the cycle counts $C_{l,n}$ on the first moment of $\Zbal(\gnm)$.
This is the task that we tackle in \Sec~\ref{sec:prop:CondRatio1stMoment}, where we prove the following.

\begin{proposition}\label{prop:CondRatio1stMoment}
Assume that $k\geq3$ and that $d\in(0,\infty)$. 
Then
	\beq\label{eqlambdadeltaConv}
	\sum_{l=2}^\infty\lambda_l\delta_l^2<\infty.
	\eeq
Moreover, 
let $\omega=\omega(n)>0$ be any sequence such that $\lim_{n\ra\infty}\omega(n)=\infty$.
If $x_2,\ldots, x_L$ are non-negative integers, 
then
\begin{eqnarray}\label{eqprop:CondRatio1stMoment1}
\frac{\mathbb{E}[\Zbal(\gnm)|\forall 2\leq l\leq L:C_{l,n}=x_l]}{\mathbb{E}[\Zbal(\gnm)]}\sim \prod_{l=2}^L\left[1+\delta_l\right]^{x_l}\exp\bc{-\delta_l\lambda_l}.
\end{eqnarray}
\end{proposition}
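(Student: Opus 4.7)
First, \eqref{eqlambdadeltaConv} is immediate: substituting~\eqref{eqlambdadelta} gives
$$\lambda_l\delta_l^2=\frac{(k-1)^2}{2l}\left(\frac{d}{(k-1)^2}\right)^l,$$
which is summable by comparison with a geometric series (in the regime of interest, namely $d\le 2(k-1)\ln(k-1)$ or $d<\dc$, we always have $d<(k-1)^2$).

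For \eqref{eqprop:CondRatio1stMoment1} the plan is to reduce the claim to a joint factorial-moment estimate:
$$\frac{\Erw\bigl[\Zbal(\gnm)\cdot\prod_{l=2}^L(C_{l,n})_{x_l}\bigr]}{\Erw[\Zbal(\gnm)]}\longrightarrow\prod_{l=2}^L\bigl(\lambda_l(1+\delta_l)\bigr)^{x_l}.$$
Combined with Fact~\ref{Fact_cycles}, this estimate implies that under the probability measure tilted by $\Zbal/\Erw[\Zbal]$ the counts $C_{l,n}$ are asymptotically independent with distribution $\Po(\lambda_l(1+\delta_l))$, whereas under the original measure they are $\Po(\lambda_l)$. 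The ratio of the two Poisson point probabilities at $x_l$ equals $(1+\delta_l)^{x_l}\exp(-\lambda_l\delta_l)$, which is precisely the right-hand side of~\eqref{eqprop:CondRatio1stMoment1}.

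To compute the factorial moment I would expand both sums:
$$\Erw\bigl[\Zbal\cdot\prod_l(C_{l,n})_{x_l}\bigr]=\sum_{\sigma\in\Bal}\sum_{\vec\gamma}\pr\bigl[\sigma\text{ is a proper coloring of }\gnm,\ \vec\gamma\subseteq\gnm\bigr],$$
where $\vec\gamma$ ranges over ordered tuples of distinct cycles of the prescribed lengths. Tuples with overlapping vertex sets contribute $O(1/n)$ relative to the disjoint ones and can be discarded. For a vertex-disjoint $\vec\gamma$ with $e:=\sum_l lx_l$ edges in total, the probability factorises as the indicator that $\sigma$ is bichromatic on every edge of $\vec\gamma$, times the probability $\sim(d/n)^e$ that these specific edges all appear among the $m$ i.i.d.\ samples of $\gnm$, times $(1-1/k+o(1))^{m-e}$ that the remaining edges are $\sigma$-bichromatic (since $\sigma$ is balanced). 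Summing over $\vec\gamma$ first, the number of vertex-disjoint tuples on which a given balanced $\sigma$ is proper is
$$\sim\prod_l\left(\frac{n^l}{2l}\cdot\frac{(k-1)^l+(-1)^l(k-1)}{k^l}\right)^{x_l}=\prod_l\left(\frac{n^l}{2l}(1-1/k)^l(1+\delta_l)\right)^{x_l},$$
where $(k-1)^l+(-1)^l(k-1)$ is the chromatic polynomial of the $l$-cycle. Multiplying everything out, the $(1-1/k)^e$ factor recovers the loss in $(1-1/k)^{m-e}$ and restores $(1-1/k)^m$; comparing with $\Erw[\Zbal]\sim(1-1/k)^m\cdot\#\{\sigma\in\Bal\}$ from \Prop~\ref{prop_first_moment_bal_vanilla} leaves precisely the claimed ratio $\prod_l(\lambda_l(1+\delta_l))^{x_l}$.

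The main technical obstacle is to justify these approximations uniformly in $\sigma\in\Bal$ and $\vec\gamma$. The key point is that prescribing the colors of the $O(1)$ vertices of $\vec\gamma$ shifts the color densities by only $O(1/n)$, which is negligible on the thin slab $\Bal$ of width $\omega^{-1}n^{-1/2}$; a local limit theorem for the multinomial distribution---analogous to the one used in the proof of \Prop~\ref{prop_first_moment_bal_vanilla}---allows the restricted count of admissible $\sigma$'s to be replaced by the chromatic-polynomial ratio raised to the power $x_l$, times the unrestricted count, with multiplicative error $1+o(1)$. Secondary technicalities include the multigraph nature of $\gnm$ (cycles of length $2$ correspond to double edges, but $\lambda_2=d^2/4$ already accounts for this) and bounding overlap contributions, which is a direct union bound since each shared vertex drops the leading $n^l$ factor by one.
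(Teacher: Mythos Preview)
Your approach is correct and essentially coincides with the paper's. Both proofs boil down to computing the joint factorial moments of the cycle counts $C_{l,n}$ and showing they converge to $\prod_l\mu_l^{m_l}$ with $\mu_l=\lambda_l(1+\delta_l)$; the key combinatorial ingredient in each case is the chromatic polynomial $(k-1)^l+(-1)^l(k-1)$ of the $l$-cycle, and the observation that non--vertex-disjoint tuples contribute $O(1/n)$.

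The only organisational difference is that the paper fixes a balanced $\sigma$ and conditions on the event $\cV(\sigma)=\{\sigma\text{ is a proper colouring}\}$ (Lemma~\ref{prop:planted-cyclces}). Under this conditioning the $m$ edges are i.i.d.\ uniform over the $\sim(1-1/k)N$ bichromatic pairs, so the per-cycle probability is simply $\bigl(m/((1-1/k)N)\bigr)^l\sim(d/(k-1))^l(n/k)^{-l}$, and the factorial-moment computation is a clean ``planted model'' calculation. You instead work directly with the $\Zbal$-tilted measure and sum over $\sigma$ and $\vec\gamma$ simultaneously; this is the same computation unrolled.

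One comment: your final paragraph overcomplicates the uniformity issue. No local limit theorem is needed. Since every $\sigma\in\Bal$ satisfies $|\sigma^{-1}(i)|=(1+o(1))n/k$ and $\cF(\sigma)=(1+o(1))N/k$, the per-$\sigma$ estimates (number of bichromatic $l$-cycles, probability that the cycle edges appear, probability that $\sigma$ is proper) are all $(1+o(1))$ times quantities that do not depend on $\sigma$; this is exactly how the paper obtains the uniformity in Lemma~\ref{prop:planted-cyclces}. The concern about ``prescribing the colours of the $O(1)$ vertices of $\vec\gamma$'' does not arise, because you are summing over $\vec\gamma$ for a \emph{fixed} $\sigma$, not the other way round.
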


Additionally, to invoke \Thm~\ref{Thm_Janson} 
we need to know the second moment of $\Zbal(\gnm)$ very precisely.
To obtain the required estimate, we consider two regimes of $d,k$ separately.
In the simpler case, based on the second moment argument from~\cite{AchNaor}, we obtain the following result.

\begin{proposition}\label{prop_second_moment_bal_vanilla}
Assume that $k \geq 3$ and $d < 2(k-1) \ln (k-1)$.
Then
	$$\frac{ \Erw \left[\Zbal(\gnm)^2 \right] }{\Erw \left[ \Zbal(\gnm) \right]^2} \sim \exp \left( \sum_{l \geq 2} \lambda_l \delta_l^2 \right) .  $$
\end{proposition}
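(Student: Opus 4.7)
My plan is a Laplace expansion of the second moment around the ``flat'' overlap. First I would rewrite
\begin{equation*}
\Erw[\Zbal(\gnm)^2]=\sum_{\rho}\frac{n!}{\prod_{i,j}(n\rho_{ij})!}\,f(\rho)^m
\end{equation*}
as a sum over $k\times k$ overlap matrices $\rho=(\rho_{ij})$, $\rho_{ij}=|\sigma^{-1}(i)\cap\tau^{-1}(j)|/n$, whose row marginals $\rho_{\star\nix}$ and column marginals $\rho_{\nix\star}$ both lie in $\Bal$. Here $f(\rho)=1-\sum_i\rho_{i\star}^2-\sum_j\rho_{\star j}^2+\sum_{i,j}\rho_{ij}^2$ is the probability that a uniformly random edge of $K_n$ is bichromatic under both colourings encoded by $\rho$.

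Under the hypothesis $d<2(k-1)\ln(k-1)$, the Achlioptas--Naor estimate from \cite{AchNaor} ensures that the function $\rho\mapsto H(\rho)+(d/2)\ln f(\rho)$, restricted to overlaps with uniform marginals, is uniquely maximised at the flat overlap $\rho^\star=k^{-2}\matone$. Consequently the contribution from overlaps outside a small fixed neighbourhood of $\rho^\star$ is exponentially smaller than $\Erw[\Zbal(\gnm)]^2$, and can be discarded.

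For the remaining window, I would substitute $\rho=k^{-2}\matone+n^{-1/2}\xi$, where $\xi$ is a $k\times k$ matrix with row and column sums of order $\omega^{-1}$ (because the marginals lie in $\Bal$). Applying Stirling to the multinomial coefficient and Taylor-expanding $\ln f$ to quadratic order converts the sum into a truncated Gaussian integral. Decomposing $\xi$ into its doubly-centred component (in a $(k-1)^2$-dimensional subspace) together with its row and column marginal components, the marginal Gaussians reproduce those of the first moment in \Prop~\ref{prop_first_moment_bal_vanilla} and cancel in the ratio $\Erw[\Zbal(\gnm)^2]/\Erw[\Zbal(\gnm)]^2$. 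What remains is a $(k-1)^2$-dimensional Gaussian integral whose Hessian is a scalar multiple of the identity on the centred subspace, and explicit evaluation yields
\begin{equation*}
\br{1-\tfrac{d}{(k-1)^2}}^{-(k-1)^2/2}\mathrm{e}^{-d/2}=\exp\bc{\sum_{l\ge2}\lambda_l\delta_l^2},
\end{equation*}
the last identity following by summing the series $\sum_{l\ge2}\tfrac1l(d/(k-1)^2)^l=-\ln(1-d/(k-1)^2)-d/(k-1)^2$.

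The main obstacle is the precise bookkeeping in the Laplace step. The entropy Hessian, the Hessian of $\ln f$ (including the marginal-dependent contributions from $\sum_i\rho_{i\star}^2$ and $\sum_j\rho_{\star j}^2$), and the constraint $\sum_{i,j}\xi_{ij}=0$ must be handled simultaneously in a consistent orthonormal basis, and one must verify that the truncation of the Gaussian beyond $\|\xi\|_F=O(\sqrt{\ln n})$ is negligible. Matching the $(k-1)^2$-dimensional determinant and extracting the $\mathrm{e}^{-d/2}$ correction from the marginal expansion is the most delicate part of the calculation.
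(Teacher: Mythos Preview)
Your plan is essentially the paper's: localise via Achlioptas--Naor (Proposition~\ref{prop_ach_naor}), shrink the window to radius $n^{-5/12}$ (Proposition~\ref{prop_second_moment_new_v}), and perform the $(k-1)^2$-dimensional Gaussian sum over the doubly-centred subspace (Proposition~\ref{prop_second_moment_balanced_exact}, with the determinant taken from~\cite{WormaldColoring}). One simplification worth adopting: because the marginals lie in $\Bal$, i.e.\ within $\omega^{-1}n^{-1/2}$ of $1/k$, the paper replaces $\sum_i\rho_{i\star}^2$ and $\sum_j\rho_{\star j}^2$ by $1/k+o(n^{-1})$ at the outset (Fact~\ref{fact_Z_rho_bal}), so the marginal components never enter the Hessian and the ``marginal Gaussians'' you describe degenerate to a bare count of $|\Bal|^2$ that cancels against $\Erw[\Zbal]^2$; this removes most of the bookkeeping you flag as the main obstacle.
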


\noindent
The second regime of $d,k$ is that $k\geq k_0$ for a certain constant $k_0\geq3$ and $d<\dc$ (with $\dc$ the number defined in~(\ref{eqdc})).
In this case, it is necessary to replace $\Zbal$ by the slightly tweaked random variable $\Ztame$ used
 in the second moment arguments from~\cite{Cond,Danny}.

\begin{proposition}
 \label{prop_first_moment_tame_bal}
There is a constant $k_0\geq3$ such that the following is true.
Assume that $k\geq k_0$ and $2(k-1)\ln(k-1)\leq d<\dc$.
There exists an integer-valued random variable $0\leq\Ztame\leq\Zbal$ such that
	\begin{eqnarray}\label{eqprop_first_moment_tame_bal1}
	\Erw \left[ \Ztame(\gnm) \right] &\sim& \Erw \left[ \Zbal(\gnm) \right]
		\qquad\mbox{and}\\
	\frac{ \Erw \left[ \Ztame(\gnm)^2 \right] }{\Erw \left[ \Ztame(\gnm) \right]^2} &\leq& (1+o(1))\exp \left( \sum_{l \geq 2} \lambda_l \delta_l^2 \right).
		\nonumber
	\end{eqnarray}
\end{proposition}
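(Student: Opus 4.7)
The plan is to execute the enhanced second-moment argument of \cite{Cond,Danny} in the $\gnm$ (multigraph) setting, but to pin down the leading constant in front of $\Erw[\Ztame]^2$ rather than merely bound it by some unspecified $O(1)$. Following \cite{Cond}, I would define $\Ztame(G)$ as the number of $(\omega,n)$-balanced $k$-colourings $\sigma$ of $G$ that are additionally \emph{separable}: for every vertex $v$ and every colour $i\in[k]$, the number of neighbours of $v$ coloured $i$ under $\sigma$ lies within a slowly-growing tolerance of the expectation $d/k$. By construction $\Ztame\leq\Zbal$. To prove $\Erw[\Ztame(\gnm)]\sim\Erw[\Zbal(\gnm)]$ it suffices to bound the expected number of balanced but non-separable colourings: for a fixed balanced $\sigma$, the $\gnm$-neighbourhood profile of any vertex is a sum of independent indicators, so a Chernoff bound yields a super-polynomial tail estimate; a union bound over vertices and colours combined with Proposition~\ref{prop_first_moment_bal_vanilla} produces the required $o(\Erw[\Zbal])$ correction.

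For the second moment I would expand
\[
\Erw[\Ztame(\gnm)^2]=\sum_{\sigma,\tau\text{ tame}}\pr\brk{\sigma,\tau\text{ both proper in }\gnm}
\]
and parametrise pairs by the overlap matrix $\rho$ with $\rho_{ij}=n^{-1}|\sigma^{-1}(i)\cap\tau^{-1}(j)|$. Since edges of $\gnm$ are drawn independently, this probability equals $(1-\normA^2-\normB^2+\normC^2)^m$; combining this with a Stirling approximation of the multinomial count of pairs with a fixed $\rho$ rewrites the sum as $\sum_\rho\exp(n\Phi(\rho)+O(\ln n))$ for an explicit function $\Phi$, restricted to overlaps consistent with tameness. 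Separability, together with the $(\omega,n)$-balancedness of marginals, confines the dominant contribution to an $O(n^{-1/2})$-neighbourhood of the barycentric overlap $\rho^\star$ with $\rho^\star_{ij}=1/k^2$; this is precisely the content of the second-moment estimates of \cite{Cond,Danny}, which show that $\Phi$ is strictly concave at $\rho^\star$ throughout $d<\dc$ when $k\geq k_0$. A quadratic Taylor expansion of $\Phi$ at $\rho^\star$ followed by a Gaussian integral then evaluates the sum to $C\cdot\Erw[\Ztame(\gnm)]^2(1+o(1))$ with an explicit constant $C=C(d,k)$ arising from a Hessian determinant.

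The final step is to match $C$ with $\exp\bc{\sum_{l\geq 2}\lambda_l\delta_l^2}$. The Hessian of $\Phi$ at $\rho^\star$ decomposes under its $S_k\times S_k$ symmetry into blocks whose eigenvalues can be written down explicitly; a short calculation yields the closed form $C=\bc{1-d/(k-1)^2}^{-(k-1)^2/2}\eul^{-d/2}$. Using the identity $-\ln(1-x)-x=\sum_{l\geq 2}x^l/l$ with $x=d/(k-1)^2$, together with $\lambda_l\delta_l^2=d^l/\brk{2l(k-1)^{2(l-1)}}$, recognises this as $\exp\bc{\sum_{l\geq 2}\lambda_l\delta_l^2}$; the series converges since $d<\dc$ stays well below $(k-1)^2$ throughout the regime covered by the proposition.

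The main obstacle is the restriction of the second-moment sum to a neighbourhood of $\rho^\star$ as $d$ approaches $\dc$. In that range $\Phi$ develops secondary local maxima at biased ``cluster'' overlaps which, if not excluded, would dominate the untruncated sum and inflate the ratio by an exponential factor. Ruling these out is the heart of the ``wild colouring'' analysis in \cite{Cond,Danny}; porting it from $G(n,m)$ to $\gnm$ is largely bookkeeping, but one must calibrate the separability tolerance carefully enough that the dangerous overlaps are killed while the first-moment asymptotic is preserved.
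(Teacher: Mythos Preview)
Your Laplace-type analysis near $\bar\rho$ and the identification of the constant $C=(1-d/(k-1)^2)^{-(k-1)^2/2}\eul^{-d/2}$ with $\exp\bigl(\sum_{l\ge2}\lambda_l\delta_l^2\bigr)$ are correct and coincide with what the paper does in \Prop s~\ref{prop_second_moment_new_v} and~\ref{prop_second_moment_balanced_exact}. The gap is in your definition of $\Ztame$. The notion of ``separable'' in \cite{Cond,Danny} is \emph{not} a vertex-degree condition; a colouring $\sigma$ is called separable when, for every other $k$-colouring $\tau$ of the same graph, the overlap $\rho(\sigma,\tau)$ avoids a forbidden band (here $k\rho_{ij}\notin(0.51,1-\kappa)$). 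This is a statement about how $\sigma$ sits inside the whole solution space, not about its local degree profile. Your neighbourhood-count condition does not constrain pair overlaps at all: if $\sigma$ satisfies it then so does any $\tau$ obtained by permuting colours or by recolouring an $o(n)$ set of vertices, and such pairs have $k$-stable overlaps landing exactly on the secondary maxima you flag in your last paragraph. No calibration of a degree tolerance removes these pairs from the second-moment sum.

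Accordingly, the paper's $\Ztame$ carries two restrictions, not one. Overlap-based separability (\Lem~\ref{Lemma_separable}) together with parts (1)--(2) of \Lem~\ref{Lemma_Danny} disposes of $s$-stable separable overlaps for $0\le s<k$. But for $(2k-1)\ln k-2<d<\dc$ the $k$-stable overlaps genuinely satisfy $f(\rho)\ge f(\bar\rho)$, so a second, independent condition is imposed: the cluster-size cap $|\cC(\gnm,\sigma)|\le\Erw[\Zbal]/n$ of \Lem~\ref{Lemma_ClusterSize} (this is the actual content imported from \cite{Cond}). With that cap in place the $k$-stable contribution to $\Erw[\Ztame^2]$ is at most $\Erw[\Ztame]\cdot\Erw[\Zbal]/n=o(\Erw[\Zbal]^2)$, which is what localises the sum near $\bar\rho$. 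Your proposal is missing this ingredient, and without it the second-moment bound fails throughout the upper portion of the range.
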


\noindent
The proofs of \Prop s~\ref{prop_second_moment_bal_vanilla} and~\ref{prop_first_moment_tame_bal} appear at the end of \Sec~\ref{sec_second_moment}.

Of course, to apply \Thm~\ref{Thm_Janson} to the random variable $\Ztame$ we need
to investigate the impact of the cycle counts $C_{l,n}$ on the first moment of $\Ztame$ as well.
That is, we need a similar result as \Prop~\ref{prop:CondRatio1stMoment} for $\Ztame$.
Fortunately, this does not require reiterating the proof of \Prop~\ref{prop:CondRatio1stMoment}.
Instead, what we need follows readily from \Prop~\ref{prop:CondRatio1stMoment} and~(\ref{eqprop_first_moment_tame_bal1}).
More precisely, we have

\begin{corollary}\label{cor_first_moment_tame_bal}
Let $x_2,\ldots,x_L$ be non-negative integers.
With the assumptions and notation of \Prop~\ref{prop_first_moment_tame_bal},
	\begin{equation}\label{eqCor_count3}
	\frac{\mathbb{E}[\Ztame(\gnm)|\forall 2\leq l\leq L:C_{l,n}=x_l]}{\mathbb{E}[\Ztame(\gnm)]}\sim \prod_{l=2}^L\left[1+\delta_l\right]^{x_l}\exp\bc{-\delta_l\lambda_l}.
	\end{equation}
\end{corollary}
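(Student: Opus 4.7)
The plan is to deduce the statement directly from \Prop~\ref{prop:CondRatio1stMoment} applied to $\Zbal$, using the sandwich $0\leq\Ztame\leq\Zbal$ together with the fact from~(\ref{eqprop_first_moment_tame_bal1}) that the expected ``defect'' $\Zbal-\Ztame$ is asymptotically negligible. Throughout the argument, fix $L\geq2$ and non-negative integers $x_2,\ldots,x_L$, and write $\cE_n=\{\forall 2\leq l\leq L:C_{l,n}=x_l\}$. By \Fact~\ref{Fact_cycles} the probability $\pr[\cE_n]$ converges to the strictly positive constant $p:=\prod_{l=2}^L\pr[\Po(\lambda_l)=x_l]$, and this bounded-away-from-zero behaviour is what makes the reduction possible.

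First I would split
\[
\Erw[\Ztame(\gnm)\mathbbm{1}_{\cE_n}]=\Erw[\Zbal(\gnm)\mathbbm{1}_{\cE_n}]-\Erw[(\Zbal(\gnm)-\Ztame(\gnm))\mathbbm{1}_{\cE_n}].
\]
The subtracted term is at most $\Erw[\Zbal(\gnm)-\Ztame(\gnm)]$, and \Prop~\ref{prop_first_moment_tame_bal} (specifically~(\ref{eqprop_first_moment_tame_bal1})) states that this equals $o(\Erw[\Zbal(\gnm)])$. On the other hand, since $\pr[\cE_n]\to p>0$, the leading term satisfies $\Erw[\Zbal(\gnm)\mathbbm{1}_{\cE_n}]=\pr[\cE_n]\cdot\Erw[\Zbal(\gnm)\mid\cE_n]\sim p\,\Erw[\Zbal(\gnm)]\prod_{l=2}^L(1+\delta_l)^{x_l}\exp(-\lambda_l\delta_l)$ by \Prop~\ref{prop:CondRatio1stMoment}. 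Because the factor $\prod_{l=2}^L(1+\delta_l)^{x_l}\exp(-\lambda_l\delta_l)$ is a positive constant independent of $n$, the leading term is $\Theta(\Erw[\Zbal(\gnm)])$, so the defect is a lower-order correction. Consequently,
\[
\Erw[\Ztame(\gnm)\mathbbm{1}_{\cE_n}]\sim\Erw[\Zbal(\gnm)\mathbbm{1}_{\cE_n}].
\]

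Dividing by $\pr[\cE_n]$ yields $\Erw[\Ztame(\gnm)\mid\cE_n]\sim\Erw[\Zbal(\gnm)\mid\cE_n]$, and applying \Prop~\ref{prop:CondRatio1stMoment} again gives
\[
\Erw[\Ztame(\gnm)\mid\cE_n]\sim\Erw[\Zbal(\gnm)]\prod_{l=2}^L(1+\delta_l)^{x_l}\exp(-\lambda_l\delta_l).
\]
To finish, I would use the unconditional asymptotic $\Erw[\Ztame(\gnm)]\sim\Erw[\Zbal(\gnm)]$ from~(\ref{eqprop_first_moment_tame_bal1}) once more to replace $\Erw[\Zbal(\gnm)]$ by $\Erw[\Ztame(\gnm)]$ in the denominator when forming the ratio in~(\ref{eqCor_count3}).

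There is no real obstacle; the only point requiring any care is verifying that the negligible-defect bound $\Erw[\Zbal-\Ztame]=o(\Erw[\Zbal])$ survives intersection with $\cE_n$, which is immediate because $\mathbbm{1}_{\cE_n}\leq1$ and $\pr[\cE_n]$ stays uniformly bounded away from $0$ as $n\to\infty$. This is precisely the role played by fixing $L$ and the integers $x_l$ in advance: all relevant probabilities and constants become $n$-independent, so the $o(\cdot)$ estimate for the defect transfers cleanly to the conditional expectation.
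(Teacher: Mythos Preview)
Your proof is correct and uses the same ingredients as the paper's argument: the sandwich $0\leq\Ztame\leq\Zbal$, the fact that $\Erw[\Zbal-\Ztame]=o(\Erw[\Zbal])$ from~(\ref{eqprop_first_moment_tame_bal1}), the positivity of $\lim_n\pr[\cE_n]$ from Fact~\ref{Fact_cycles}, and \Prop~\ref{prop:CondRatio1stMoment} for $\Zbal$. The only difference is cosmetic: where the paper proves the lower bound $\Erw[\Ztame\mid\cE_n]\geq(1-o(1))\Erw[\Zbal\mid\cE_n]$ by contradiction, you obtain both directions at once by directly bounding the defect $\Erw[(\Zbal-\Ztame)\mathbbm{1}_{\cE_n}]\leq\Erw[\Zbal-\Ztame]=o(\Erw[\Zbal])$, which is arguably cleaner.
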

\begin{proof}
Let $S$ denote the event $\cbc{\forall l\leq L:C_{l,n}=x_l}$ and let $\cZ_n=\Ztame(\gnm)$ for the sake of brevity.
Since $\cZ_n\leq\Zbal$, (\ref{eqprop_first_moment_tame_bal1}) implies the upper bound
	\begin{eqnarray}\label{eqcor_first_moment_tame_bal12a}
	\frac{\mathbb{E}[\cZ_n|S]}{\mathbb{E}[\cZ_n]}
		&\leq&\frac{\mathbb{E}[\Zbal(\gnm)|S]}{(1+o(1))\mathbb{E}[\Zbal(\gnm)]}
		\sim\prod_{l=2}^L\left[1+\delta_l\right]^{x_l}\exp\bc{-\delta_l\lambda_l}.
	\end{eqnarray}
To obtain a matching lower bound, we claim that
	\beq\label{eqCor_count11a}
	\mathbb{E}[\cZ_n|S]\geq(1-o(1))\Erw[\Zbal(\gnm)|S]. 
	\eeq
Indeed, assume for contradiction that~(\ref{eqCor_count11a}) is false.
Then there is an $n$-independent $\eps>0$ such that for infinitely many~$n$,
	\beq\label{eqCor_count11}
	\mathbb{E}[\cZ_n|S]<(1-\eps)\Erw[\Zbal(\gnm)|S]. 
	\eeq
By Fact~\ref{Fact_cycles} there exists an $n$-independent $\xi=\xi(x_2,\ldots,x_L)>0$ such that $\pr\brk S\geq\xi$.
Hence, (\ref{eqCor_count11}) and Bayes' formula imply that
	\begin{eqnarray}
	\Erw[\cZ_n]&=&\pr\brk S\cdot \mathbb{E}[\cZ_n|S]+\pr\brk{\neg S}\mathbb{E}[\cZ_n|\neg S]\nonumber\\
		&\leq&\pr\brk S\cdot \mathbb{E}[\cZ_n|S]+\pr\brk{\neg S}\mathbb{E}[\Zbal(\gnm)|\neg S]\qquad\mbox{[as $\cZ_n\leq\Zbal(\gnm)$]}\nonumber\\
		&\leq&(1-\eps)\pr\brk S\cdot\mathbb{E}[\Zbal(\gnm)|S]+\pr\brk{\neg S}\cdot\mathbb{E}[\Zbal(\gnm)|\neg S]\nonumber\\
		&\leq&\Erw[\Zbal(\gnm)]-\eps\xi\cdot\mathbb{E}[\Zbal(\gnm)|S]\nonumber\\
		&=&\Erw[\Zbal(\gnm)]\cdot\bc{1+o(1)-\eps\xi\prod_{l=2}^L(1+\delta_l)^{x_l}\exp(-\delta_l\lambda_l)}\nonumber\\
		&=&(1-\Omega(1))\Erw[\Zbal(\gnm)]\qquad\mbox{[as $\delta_l,\lambda_l,x_l$ remain fixed as $n\ra\infty$].}
			\label{eqcor_first_moment_tame_bal11}
	\end{eqnarray}
But~(\ref{eqcor_first_moment_tame_bal11}) contradicts~(\ref{eqprop_first_moment_tame_bal1}).
Thus, we have established~(\ref{eqCor_count11a}).
Finally, combining~(\ref{eqCor_count11a}) with~(\ref{eqprop:CondRatio1stMoment1}) and~(\ref{eqprop_first_moment_tame_bal1}), we get
	\begin{eqnarray}\label{eqcor_first_moment_tame_bal12}
	\frac{\mathbb{E}[\cZ_n|S]}{\mathbb{E}[\cZ_n]}&\geq&	
		\frac{(1-o(1))\mathbb{E}[\Zbal(\gnm)|S]}{(1+o(1))\mathbb{E}[\Zbal(\gnm)]}\sim
			\prod_{l=2}^L\left[1+\delta_l\right]^{x_l}\exp\bc{-\delta_l\lambda_l},
	\end{eqnarray}
and the assertion follows from~(\ref{eqcor_first_moment_tame_bal12a}) and~(\ref{eqcor_first_moment_tame_bal12}).
\end{proof}

\noindent
We now have all the pieces in place to apply \Thm~\ref{Thm_Janson}.

\begin{corollary}\label{Cor_count}
Assume that either $k\geq3$ and $d\leq2(k-1)\ln(k-1)$ or $k\geq k_0$ for a certain constant $k_0$ and $d\leq\dc$.
Then 
	\beq\label{eqCor_count}
	\lim_{\eps\ra0}\lim_{n\ra\infty}\pr\brk{\frac{\Zkc(\gnm)}{\Erw[\Zkc(\gnm)]}\geq\eps}=1.
	\eeq
\end{corollary}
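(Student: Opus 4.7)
The plan is to apply Janson's small subgraph conditioning theorem (\Thm~\ref{Thm_Janson}) to $Z_n := \Zbal(\gnm)$ in the first regime ($k \geq 3$ and $d \leq 2(k-1)\ln(k-1)$) and to $Z_n := \Ztame(\gnm)$ in the second regime ($k \geq k_0$ and $d < \dc$). I take $\omega = \omega(n)$ to be a sufficiently large constant so that all the preceding Propositions of this section apply. The four hypotheses of \Thm~\ref{Thm_Janson} are then verified directly from the earlier results: SSC1 is Fact~\ref{Fact_cycles}; SSC2 is \Prop~\ref{prop:CondRatio1stMoment} in the first regime and \Cor~\ref{cor_first_moment_tame_bal} in the second; SSC3 is~\eqref{eqlambdadeltaConv}; and SSC4 is \Prop~\ref{prop_second_moment_bal_vanilla} or \Prop~\ref{prop_first_moment_tame_bal}, respectively.

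\Thm~\ref{Thm_Janson} then yields that $Z_n/\Erw[Z_n]$ converges in distribution to
$$W := \prod_{l=2}^\infty (1+\delta_l)^{X_l}\exp(-\lambda_l\delta_l),$$
with $X_l \sim \Po(\lambda_l)$ independent. The next step is to check that $W>0$ almost surely: each factor is strictly positive since $|\delta_l|=(k-1)^{-(l-1)} \leq 1/2$ for $k \geq 3$, and the series $\sum_l[X_l\ln(1+\delta_l) - \lambda_l\delta_l]$ converges almost surely by Kolmogorov's two-series theorem, because both its sum of means and its sum of variances are of order $\sum_l\lambda_l\delta_l^2$, which is finite by SSC3.

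Finally, the trivial sandwich $Z_n \leq \Zbal(\gnm) \leq \Zkc(\gnm)$ combines with the comparison $\Erw[Z_n] \geq c\,\Erw[\Zkc(\gnm)]$ for some constant $c > 0$, which follows from \Prop~\ref{prop_first_moment_bal_vanilla} in the first regime and from \Prop~\ref{prop_first_moment_bal_vanilla} together with the first asymptotic in~\eqref{eqprop_first_moment_tame_bal1} in the second, to give
$$\frac{\Zkc(\gnm)}{\Erw[\Zkc(\gnm)]} \;\geq\; \frac{Z_n}{\Erw[Z_n]} \cdot \frac{\Erw[Z_n]}{\Erw[\Zkc(\gnm)]} \xrightarrow{d} cW > 0,$$
and consequently $\liminf_n \pr[\Zkc(\gnm)/\Erw[\Zkc(\gnm)] \geq \eps] \geq \pr[cW \geq \eps]$, which tends to $1$ as $\eps \to 0$ by positivity of $W$. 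The delicate ingredient is the choice of $\omega$: it must be large enough for the second moment and conditional first moment identities to hold with the required matching constants, yet bounded, since by \Prop~\ref{prop_first_moment_bal_vanilla} the ratio $\Erw[Z_n]/\Erw[\Zkc(\gnm)]$ scales as $\omega^{-(k-1)}$ and would otherwise vanish, killing the crucial lower bound.
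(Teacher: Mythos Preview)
Your overall strategy---apply \Thm~\ref{Thm_Janson} to $\Zbal$ (resp.\ $\Ztame$), deduce convergence to a positive limit $W$, and bootstrap to $\Zkc$---is exactly the paper's. But your resolution of the ``delicate ingredient'' is wrong, and this is a genuine gap.

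You propose to take $\omega$ a fixed large constant. However, \Prop s~\ref{prop_first_moment_bal_vanilla}, \ref{prop:CondRatio1stMoment}, \ref{prop_second_moment_bal_vanilla} and~\ref{prop_first_moment_tame_bal} are all stated and proved under the hypothesis $\omega=\omega(n)\to\infty$. This is not a cosmetic assumption: for bounded $\omega$ the contributions $\Erw[Z_{k,\rho}]$ for $\rho\in\Bal$ differ by $\Theta(1)$ factors (since $n\|\rho-\rho^\star\|_2^2$ ranges over an interval of positive length), so the first part of~(\ref{eqprop_first_moment_balanced000}) fails; and in the second-moment calculation the key step $\|\rho^{(1)}\otimes\rho^{(2)}-\bar\rho\|_2=o(n^{-1/2})$ becomes only $O(n^{-1/2})$, so the constant in \Prop~\ref{prop_second_moment_balanced_exact} changes and {\bf SSC4} is no longer verified. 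There is no finite $\omega$ that makes these asymptotic identities hold with the matching constants required by \Thm~\ref{Thm_Janson}.

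The paper resolves the tension differently: it lets $\omega(n)\to\infty$ so that all propositions apply and $\cZ_n/\Erw[\cZ_n]\to W$, accepting that then $\Erw[\cZ_n]/\Erw[\Zkc]\to 0$. The point is that this ratio decays only polynomially in $\omega$ (as $\omega^{-c}$, by \Prop~\ref{prop_first_moment_bal_vanilla}), so given any sequence $\eps(n)\to 0$ one may choose $\omega(n)=-\ln\eps(n)$, whence $\omega(n)^c\eps(n)\leq\sqrt{\eps(n)}\to 0$ and
\[
\pr\brk{\frac{\Zkc}{\Erw[\Zkc]}\geq\eps(n)}\geq\pr\brk{\frac{\cZ_n}{\Erw[\cZ_n]}\geq\omega(n)^c\eps(n)}\geq\pr\brk{\frac{\cZ_n}{\Erw[\cZ_n]}\geq\sqrt{\eps(n)}}\to 1.
\]
In other words, the right move is not to freeze $\omega$, but to let it diverge slowly enough (as a function of the target $\eps$) that the polynomial loss in the first-moment ratio is absorbed.
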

\begin{proof}
Let $\omega=\omega(n)>0$ be any sequence such that $\lim_{n\ra\infty}\omega(n)=\infty$.
Moreover, define a sequence $(\cZ_n)_{n\geq1}$ of random variables as follows.
\begin{description}
\item[Case 1: $d\leq2(k-1)\ln(k-1)$] let $\cZ_n=\Zbal(\gnm)$.
\item[Case 2: $k\geq k_0$ and $2(k-1)\ln(k-1)<d<\dc$]
	let $\cZ_n$ be equal to the random variable $\Ztame(\gnm)$ from \Prop~\ref{prop_first_moment_tame_bal}.
\end{description}
Then in either case \Prop~\ref{prop_first_moment_bal_vanilla} and~\ref{prop_first_moment_tame_bal} imply that
	\begin{eqnarray}
	\Erw[\cZ_n]&\sim&\Erw[\Zbal(\gnm)]
		\label{eqCor_count1}.
	\end{eqnarray}

We are going to apply \Thm~\ref{Thm_Janson} to the random variables $\cZ_n$ and $(C_{l,n})_{l\geq2}$.
Fact~\ref{Fact_cycles} readily implies that $C_{2,n},\ldots$ satisfy {\bf SSC1}.
Furthermore, \Prop~\ref{prop:CondRatio1stMoment} and \Cor~\ref{cor_first_moment_tame_bal} imply that for any integers $x_2,\ldots,x_L\geq0$,
	$$\frac{\mathbb{E}[\cZ_n|\forall 2\leq l\leq L:C_{l,n}=x_l]}{\mathbb{E}[\cZ_n]}\sim \prod_{l=2}^L\left[1+\delta_l\right]^{x_l}\exp\bc{-\delta_l\lambda_l}.$$
Thus, condition {\bf SSC2} is satisfied as well.
Additionally, (\ref{eqlambdadeltaConv}) establishes {\bf SSC3}.
Finally, {\bf SSC4} is verified by \Prop s~\ref{prop_second_moment_bal_vanilla} and~\ref{prop_first_moment_tame_bal}.
Hence, \Thm~\ref{Thm_Janson} applies and shows that $\cZ_n/\Erw[\cZ_n]$ converges in distribution
to $$W=\prod_{l=2}^\infty(1+\delta_l)^{X_l}\exp(-\lambda_l\delta_l),$$ where $(X_l)_{l\geq2}$ is a family of
independent random variables such that $X_l$ has distribution $\Po(\lambda_l)$.
In particular, since $W$ takes a positive (and finite) value with probability one, we conclude that
for any sequence $\omega=\omega(n)$ such that $\lim_{n\ra\infty}\omega(n)=\infty$ we have
	\begin{eqnarray}\label{eqCor_count4}
	\lim_{\delta\ra 0}\lim_{n\ra\infty}\pr\brk{\frac{\cZ_n}{\Erw[\cZ_n]}\geq\delta}&=&1.
	\end{eqnarray}
	
To complete the proof, let $(\eps(n))_{n\geq1}$ be a sequence of numbers in $(0,1)$ such that $\lim_{n\ra\infty}\eps(n)=0$.
Set $\omega(n)=-\ln\eps(n)$.
Then by \Prop~\ref{prop_first_moment_bal_vanilla} and~(\ref{eqCor_count1}) there exists an $n$-independent number $c>0$ such that
	\beq\label{eqCor_count5}
	\Erw[\Zkc(\gnm)]\leq\omega^c\cdot\Erw[\cZ_n],
	\eeq
provided that $n$ is large enough.
Thus, combining~(\ref{eqCor_count4}) and~(\ref{eqCor_count5}) and recalling that $\Zkc(\gnm)\geq\cZ_n$, we see that
	$$
	\lim_{n\ra\infty}\pr\brk{\frac{\Zkc(\gnm)}{\Erw[\Zkc(\gnm)]}\geq\eps(n)}\geq
		\lim_{n\ra\infty}\pr\brk{\frac{\cZ_n}{\Erw[\cZ_n]}\geq\omega^c\eps(n)}
		\geq\lim_{n\ra\infty}\pr\brk{\frac{\cZ_n}{\Erw[\cZ_n]}\geq\sqrt{\eps(n)}}=1.
	$$
Since this holds for any sequence $\eps(n)\ra0$, the assertion follows.
\end{proof}

\begin{proof}[Proof of \Thm~\ref{Thm_conc}]
\Cor~\ref{Cor_count} and Markov's inequality imply that
	\beq\label{eqThm_conc1}
	\lim_{\omega\ra\infty}\lim_{n\ra\infty}\pr\brk{|\ln\Zkc(\gnm)-\ln\Erw[\Zkc(\gnm)]|<\omega}=1.
	\eeq
To derive \Thm~\ref{Thm_conc} from~(\ref{eqThm_conc1}), let $S$ be the event that $\gnm$ consists of $m$ distinct edges.
Given that $S$ occurs, $\gnm$ is identical to $G(n,m)$.
Furthermore, Fact~\ref{Fact_cycles} implies that $\pr\brk{S}=\Omega(1)$.
Consequently, (\ref{eqThm_conc1}) yields
	\begin{eqnarray}\nonumber
	1&=&\lim_{\omega\ra\infty}\lim_{n\ra\infty}\pr\brk{|\ln\Zkc(\gnm)-\ln\Erw[\Zkc(\gnm)]|<\omega|S}\\
		&=&\lim_{\omega\ra\infty}\lim_{n\ra\infty}\pr\brk{|\ln\Zkc(G(n,m))-\ln\Erw[\Zkc(\gnm)]|<\omega}.
		\label{eqThm_conc2}
	\end{eqnarray}
Furthermore, (\ref{eqNaiveFirstMoment}) and \Prop~\ref{prop_first_moment_bal_vanilla} imply that
	$\Erw[\Zkc(G(n,m))],\Erw[\Zkc(\gnm)]=\Theta(k^n(1-1/k)^m)$.
Hence, $\Erw[\Zkc(\gnm)]=\Theta(\Erw[\Zkc(G(n,m))])$ and~(\ref{eqThm_conc2}) implies that
	$$\lim_{\omega\ra\infty}\lim_{n\ra\infty}\pr\brk{|\ln\Zkc(G(n,m))-\ln\Erw[\Zkc(G(n,m))]|<\omega}=1,$$
which is the first part of \Thm~\ref{Thm_conc}.

To obtain the second assertion, let $\cE_t$ be the event that the random graph $G(n,m)$ contains $t$ isolated triangles
	(i.e., $t$ connected components that are isomorphic to the complete graph on $3$ vertices).
It is well-known that for $t\geq0$ there exists $\eps=\eps(d,t)>0$ such that
	\beq\label{eqThm_conc6}
	\liminf_{n\ra\infty}\pr\brk{\cE_t}>\eps.
	\eeq
Furthermore, if given $\cE_t$ we let $G'(n,m)$ denote the random graph obtained by
choosing a set of $t$ isolated triangles randomly and removing them, then $G'(n,m)$ is identical to $G(n-3t,m-3t)$.
Hence, there is a constant $C=C(d,k)>0$ such that
	\beq\label{eqThm_conc5}
	\Erw[\Zkc(G'(n,m))]=\Erw[\Zkc(G(n-3t,m-3t))]\leq C(d,k)\cdot k^{n-3t}(1-1/k)^{m-3t}.
	\eeq
As the number of $k$-colourings of a triangle is $k(k-1)(k-2)$, (\ref{eqThm_conc5}) and~(\ref{eqNaiveFirstMoment}) yield
	\begin{eqnarray*}
	\Erw[\Zkc(G(n,m))|\cE_t]&=&\Erw[\Zkc(G(n-3t,m-3t))](k(k-1)(k-2))^t\\
		&\leq&C(d,k)\cdot k^{n}(1-1/k)^{m-3t}(1-1/k)^t(1-2/k)^t\\
		&\leq&C(d,k)\cdot k^{n}(1-1/k)^{m}\cdot (1-1/(k-1)^2)^t\\
		&\leq&O(\Erw[\Zkc(\gnm)])\cdot (1-1/(k-1)^2)^t.
	\end{eqnarray*}
Hence, for any $\omega>0$ we can choose $t$ large enough so that
	$\Erw[\Zkc(G(n,m))|\cE_t]\leq\Erw[\Zkc(\gnm)]/(2\omega)$.
In combination with Markov's inequality, this implies that
	\beq\label{eqThm_conc7}
	\pr\brk{\ln\Zkc(G(n,m))\geq\ln\Erw[\Zkc(\gnm)]-\omega|\cE_t}\leq1/2.
	\eeq
Finally, combining~(\ref{eqThm_conc6}) and~(\ref{eqThm_conc7}), 
we conclude that for any finite $\omega$ there is $\eps>0$ such that for large enough $n$,
	$$\pr\brk{\ln\Zkc(G(n,m))\geq\ln\Erw[\Zkc(\gnm)]-\omega}
		\geq\pr\brk{\ln\Zkc(G(n,m))\geq\ln\Erw[\Zkc(\gnm)]-\omega|\cE_t}\pr\brk{\cE_t}>\eps/2.$$
This completes the proof of the second claim.
\end{proof}

\begin{proof}[Proof of \Thm~\ref{Thm_cont}]
Assume for contradiction that $(\cA_n)_{n\geq1}$ is a sequence of events such that for some fixed number $0<\eps<1/2$ we have
	\beq\label{eqThm_cont0}
	\lim_{n\ra\infty}\ppl\brk{\cA_n}=0\quad\mbox{while}\quad\limsup_{n\ra\infty}\prc\brk{\cA_n}>\eps.
	\eeq
Let $G(n,m,\sigma)$ denote a graph on $[n]$ with precisely $m$ edges, such that all of these edges are bichromatic under $\sigma$, chosen uniformly at random. 
Then
	\begin{eqnarray}
	\Erw[\Zkc(G(n,m))\vecone_{\cA_n}]&=&\sum_{\sigma:\brk n\ra\brk k}\pr\brk{\sigma\mbox{ is a $k$-colouring of $G(n,m)$ and $(G(n,m),\sigma)\in\cA_n$}}\nonumber\\
		&=&\sum_{\sigma:\brk n\ra\brk k}\pr\brk{(G(n,m),\sigma)\in\cA_n|\sigma\mbox{ is a $k$-colouring of $G(n,m)$}}\nonumber\\[-4mm]
		&&		\qquad\qquad\qquad\qquad\qquad\qquad\qquad\qquad\cdot\pr\brk{\sigma\mbox{ is a $k$-colouring of $G(n,m)$}}\nonumber\\
		&=&\sum_{\sigma:\brk n\ra\brk k}\pr\brk{G(n,m,\sigma)\in\cA_n}\cdot\pr\brk{\sigma\mbox{ is a $k$-colouring of $G(n,m)$}}\nonumber\\
		&\leq&O((1-1/k)^m)\sum_{\sigma:\brk n\ra\brk k}\pr\brk{G(n,m,\sigma)\in\cA_n}\nonumber\\
		&=&O(k^n(1-1/k)^m)\pr\brk{G(n,m,\SIGMA)\in\cA_n}=o(k^n(1-1/k)^m).
			\label{eqThm_cont1}
	\end{eqnarray}

By \Cor~\ref{Cor_count}, for any $\eps>0$ there is $\delta>0$ such that for all large enough $n$ we have
	\beq\label{eqThm_cont2}
	\pr\brk{\Zkc(G(n,m))<\delta\Erw[\Zkc(G(n,m))]}<\eps/2.
	\eeq
Now, let $\cE$ be the event that $\Zkc(G(n,m))\geq\delta\Erw[\Zkc(G(n,m))]$ and let $q=\prc\brk{\cA_n|\cE}$.
Then
	\begin{eqnarray}\nonumber
	\Erw[\Zkc(G(n,m))\vecone_{\cA_n}]&\geq&\delta\Erw[\Zkc(G(n,m))]\cdot\pr\brk{((G(n,m),\TAU)\in\cA_n,\cE}\\
		&\geq&\delta q\Erw[\Zkc(G(n,m))]\pr\brk{\cE}\geq\delta q\Erw[\Zkc(G(n,m))]/2\nonumber\\
		&=&\frac{\delta q}{2}\cdot\Omega(k^n(1-1/k)^m).\label{eqThm_cont3}
	\end{eqnarray}
Combining~(\ref{eqThm_cont1}) and~(\ref{eqThm_cont3}), we obtain $q=o(1)$.
Hence, (\ref{eqThm_cont2}) implies that
	\begin{eqnarray*}
	\prc\brk{\cA_n}&=&\prc\brk{\cA_n|\neg\cE}\cdot\pr\brk{\neg\cE}+q\cdot\pr\brk{\cE}
		\leq\pr\brk{\neg\cE}+q\leq\eps/2+o(1),
	\end{eqnarray*}
in contradiction to~(\ref{eqThm_cont0}).
\end{proof}

\section{The first moment}
\label{sec_first_moment}

\noindent
The aim in this section is to prove \Prop~\ref{prop_first_moment_bal_vanilla}.
The calculations that we perform follow the path beaten in~\cite{AchNaor,Danny,WormaldColoring}.
Let $Z_{k, \rho}(G)$ be the number of $k$-colourings of the graph $G$ with colour density $\rho$.

\begin{lemma} \label{prop_first_moment_balanced} Let $k \geq 3$ and $d \in (0, \infty)$. 
	Set 
	\begin{equation}\label{eqprop_first_moment_balanced0}
	g:\rho\in\overline{\cC}_k\mapsto H(\rho) + \frac{d}{2} \ln \left( 1 - \sum_{i=1}^k \rho_i^2 \right),\quad
	\alpha(d,k) = \ln k + \frac{d}{2} \ln \left(1 - \frac{1}{k} \right), \quad c_n(d,k) =  \left({2 \pi n} \right)^{\frac{1-k}{2}} k^{k/2}.
	\end{equation}
\begin{enumerate}
\item There exist numbers $C_1=C_1(k,d), C_2=C_2(k,d) > 0$ such that
	\begin{equation}\label{eqprop_first_moment_balanced00}	
		 C_1 n^{\frac{1-k}{2}} \exp \left[ n g(\rho) \right] \leq \Erw \left[ Z_{k,\rho}(\cG(n,m)) \right] \leq C_2   \exp \left[ n g(\rho) \right]
			\quad\mbox{for any }\rho\in\cC_k(n).
		\end{equation}
	Moreover, if $\| \rho-\rho^\star\|_2 = o(1)$, then 
		\beq \label{eq_aux_first_moment_v_2} \Erw \left[ Z_{k, \rho}(\cG(n,m)) \right] \sim c_n(d,k) 
			 \exp \left[d/2+n g (\rho) \right].  \eeq
\item Assume that $\omega=\omega(n)\ra\infty$. 
Then 
	\begin{equation}\label{eqprop_first_moment_balanced000}	
	\Erw \left[\Zbal(\gnm) \right] \sim | \cB_{n,k}(\omega) | c_n(d,k) \exp \left[d/2+ n \alpha(d,k) \right].
	\end{equation}
\end{enumerate}
\end{lemma}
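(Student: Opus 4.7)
The plan is to proceed by direct computation. By linearity of expectation and the fact that the edges of $\gnm$ are drawn independently and uniformly at random from the edges of $K_n$,
\begin{equation*}
\Erw[Z_{k,\rho}(\gnm)] = \binom{n}{n\rho_1,\ldots,n\rho_k}\left(1 - \frac{\sum_i\binom{n\rho_i}{2}}{\binom{n}{2}}\right)^m,
\end{equation*}
since each edge of $\gnm$ independently fails to be monochromatic under a fixed colouring of density $\rho$ with the probability shown inside the parentheses. A short calculation identifies the second factor as $\bigl((1-\sum_i\rho_i^2)\cdot n/(n-1)\bigr)^m$, and since $m = \lceil dn/2\rceil$ one has $(n/(n-1))^m = \exp[d/2 + o(1)]$. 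Everything thus reduces to estimating the multinomial coefficient.

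For part~(i), I would invoke Stirling's formula (with explicit error terms, e.g.\ Robbins's version), treating components $n\rho_i$ that are of constant order or zero separately. This yields the uniform two-sided bound $\binom{n}{n\rho_1,\ldots,n\rho_k} = \Theta(n^{(1-k)/2})\exp[nH(\rho)]$ for every $\rho \in \cC_k(n)$, with implicit constants depending only on $k$. When additionally $\|\rho - \rho^\star\|_2 = o(1)$, all entries of $\rho$ are bounded away from $0$, $\prod_i\rho_i \sim k^{-k}$, and Stirling gives the sharper
\begin{equation*}
\binom{n}{n\rho_1,\ldots,n\rho_k} \sim \frac{k^{k/2}}{(2\pi n)^{(k-1)/2}}\exp[nH(\rho)] = c_n(d,k)\exp[nH(\rho)];
\end{equation*}
multiplying by $(1-\sum_i\rho_i^2)^m\exp[d/2+o(1)]$ then yields~(\ref{eq_aux_first_moment_v_2}).

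For part~(ii), I would apply the asymptotic from part~(i) uniformly to every $\rho \in \Bal$: any such $\rho$ satisfies $\|\rho - \rho^\star\|_2 \leq \sqrt{k}/(\omega\sqrt n) = o(1)$. Because $\rho^\star$ is the global maximiser of $g$ on the simplex $\overline{\cC}_k$ and a critical point of $g$ restricted there, a second-order Taylor expansion yields
\begin{equation*}
ng(\rho) = n\alpha(d,k) + O\bigl(n\|\rho-\rho^\star\|_2^2\bigr) = n\alpha(d,k) + O(\omega^{-2})
\end{equation*}
for every $\rho \in \Bal$. Thus $\Erw[Z_{k,\rho}(\gnm)] \sim c_n(d,k)\exp[d/2 + n\alpha(d,k)]$ uniformly on $\Bal$, and summing over the $|\Bal|$ values of $\rho$ delivers~(\ref{eqprop_first_moment_balanced000}).

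The main obstacle is the uniformity needed for the two-sided bound in part~(i): the constants $C_1, C_2$ must work for every $\rho \in \cC_k(n)$, including extreme distributions with some entries of order $1/n$, where both $H(\rho)$ and the multinomial coefficient are of order $\log n$ rather than $n$. Handling these cases requires either invoking Stirling with a uniform error term or verifying the degenerate cases by hand; once this is in place, the sharper asymptotic for the balanced regime follows from a routine Taylor expansion of $\ln$ in Stirling's formula.
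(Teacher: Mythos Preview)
Your approach is correct and essentially identical to the paper's: both start from the exact formula $\binom{n}{n\rho_1,\ldots,n\rho_k}\bigl(1-N^{-1}\sum_i\binom{n\rho_i}{2}\bigr)^m$, extract the factor $\exp[d/2+o(1)]$ from the edge probability, apply Stirling to the multinomial, and finish part~(ii) by a second-order Taylor expansion of $g$ (equivalently, of $H$ and of $\ln(1-\sum_i\rho_i^2)$) around $\rho^\star$. Your exact identity $1-N^{-1}\sum_i\binom{n\rho_i}{2}=(1-\sum_i\rho_i^2)\cdot n/(n-1)$ is a slightly cleaner route to the same endpoint, and your explicit discussion of the uniformity needed for~(\ref{eqprop_first_moment_balanced00}) is something the paper simply sweeps into ``follows from Stirling's formula''.
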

\begin{proof}
By Stirling's formula and the independence of the edges in the random graph $\gnm$,
	\begin{equation}\label{eqprop_first_moment_balanced11}
	\Erw[Z_{k,\rho}(\cG(n,m))]= { n \choose \rho_1 n , \dots, \rho_k n }\bc{1-\frac1N\sum_{i=1}^k\bink{\rho_in}{2}}^m,\quad\mbox{where }N=\bink n2.
	\end{equation}
Further,
	\begin{equation*} 
	\sum_{i=1}^k { \rho_{i } n \choose 2} = N \left( \sum_{i=1}^k \rho_{i }^2 \right) + \frac{n}{2} \left( \sum_{i=1}^k \rho_{i }^2-1 \right) + O(1).
	\end{equation*}
Consequently
	\begin{align} \nonumber
	m \ln \left( 1- \frac{1}{N} \sum_{i=1}^k { \rho_{i } n \choose 2} \right)& = m \ln \left[ \left( 1 + \frac{n}{2N} \right) \left(1- \sum_{i=1}^k \rho_{i }^2 \right) \right] + o(1)
	\\ & \label{eqprop_first_moment_balanced13} = n \frac{d}{2} \ln \left( 1- \sum_{i=1}^k \rho_i^2 \right) + \frac{d}{2} + o(1) .
	\end{align}

Eq. (\ref{eqprop_first_moment_balanced00}) follows from (\ref{eqprop_first_moment_balanced11}), (\ref{eqprop_first_moment_balanced13}) and Stirling's formula.
Moreover, (\ref{eq_aux_first_moment_v_2}) follows from~(\ref{eqprop_first_moment_balanced11}) 
and~(\ref{eqprop_first_moment_balanced13}) because $\norm{\rho-\rho^\star}_2=o(1)$ implies that $\sum_{i=1}^k\rho_i^2\sim1/k$ and
	\begin{equation*} 
	 { n \choose \rho_1 n , \dots, \rho_k n } \sim \left({2 \pi n} \right)^{\frac{1-k}{2}} k^{k/2} \exp \left[  n H(\rho) \right].
	 \end{equation*}
To obtain~(\ref{eqprop_first_moment_balanced000}), we observe that if $\rho \in \cB_{n,k}(\omega)$, then $\norm{\rho-\rho^\star}_2=o(1)$.
Further, by Taylor expansion we obtain
\begin{align}\label{eqprop_first_moment_balanced3}
H(\rho) &= \ln k +O \left(\sum_{i=1}^k \left(\rho_{i}-\frac{1}{k} \right)^2\right) = \ln k + o(n^{-1}),\\
 \ln \left(1 - \sum_{i=1}^k \rho_i^2\right) &= \ln \left(1 - \frac{1}{k} \right)+ O \left(\sum_{i=1}^k \left(\rho_{i}-\frac{1}{k} \right)^2\right) = \ln \left( 1 - \frac{1}{k}\right) + o(n^{-1}).  
 	\label{eqprop_first_moment_balanced4}
	\end{align}
Thus, (\ref{eqprop_first_moment_balanced000}) follows from~(\ref{eq_aux_first_moment_v_2}), (\ref{eqprop_first_moment_balanced3}) and~(\ref{eqprop_first_moment_balanced4}).
 \end{proof}
 
\begin{corollary} \label{prop_first_moment_total}
With the expressions from~(\ref{eqprop_first_moment_balanced0}),
for any $k \geq 3$ and $d \in (0, \infty)$
	$$ \Erw \left[Z_k(\gnm) \right] \sim 
		\exp \left[d/2+ n \alpha(d,k) \right]
			\left(1 + \frac{d}{k-1} \right)^{- \frac{k-1}{2}}  .$$
\end{corollary}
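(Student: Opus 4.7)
The plan is to evaluate $\Erw[Z_k(\gnm)] = \sum_{\rho \in \cC_k(n)} \Erw[Z_{k,\rho}(\gnm)]$ by Laplace's method, approximating this lattice sum by a Gaussian integral concentrated around $\rho^\star$. The input for this is Lemma~\ref{prop_first_moment_balanced}: the uniform upper bound~(\ref{eqprop_first_moment_balanced00}) controls the contribution of $\rho$ far from $\rho^\star$, while the sharp asymptotic~(\ref{eq_aux_first_moment_v_2}) gives the size of $\Erw[Z_{k,\rho}]$ in a neighbourhood of $\rho^\star$. Note that $g$ is concave on $\overline{\cC}_k$ (entropy is concave, and $\ln(1-\sum_i\rho_i^2)$ is the composition of the increasing concave $\ln$ with the concave function $1-\sum_i\rho_i^2$), so by symmetry in the colours its unique maximum on the simplex is $g(\rho^\star) = \alpha(d,k)$.

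First I would pick a truncation scale $\xi_n = n^{-1/2}\ln n$ and split the sum at $\|\rho - \rho^\star\|_2 \leq \xi_n$. For $\|\rho - \rho^\star\|_2 > \xi_n$, the second-order Taylor expansion of $g$ at $\rho^\star$ combined with strict concavity yields $g(\rho) \leq \alpha(d,k) - \Omega((\ln n)^2/n)$ near $\rho^\star$ and $g(\rho) \leq \alpha(d,k) - \Omega(1)$ elsewhere; together with~(\ref{eqprop_first_moment_balanced00}) and $|\cC_k(n)| = n^{O(1)}$, the far contributions sum to $o(\exp[n\alpha(d,k)])$ and hence are negligible. In the near range, the sharp asymptotic applies and $\Erw[Z_{k,\rho}] \sim c_n(d,k)\exp[d/2 + ng(\rho)]$ uniformly.

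The heart of the proof is then the Gaussian approximation. Eliminating $\rho_k = 1 - \sum_{i<k}\rho_i$ and writing $\rho = \rho^\star + y/\sqrt n$ with $y \in \RR^{k-1}$, a direct computation of the Hessians of $H$ and of $\ln(1 - \sum_i\rho_i^2)$ at $\rho^\star$ (with $\partial\rho_k/\partial\rho_i = -1$) yields
\begin{equation*}
ng(\rho^\star + y/\sqrt n) = n\alpha(d,k) - \frac{k(k-1+d)}{2(k-1)}\,y^\top (I+J)\,y + O(\|y\|^3/\sqrt n),
\end{equation*}
where $I$ and $J$ denote the $(k-1)$-dimensional identity and all-ones matrices. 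The admissible $y$ form a lattice of spacing $1/\sqrt n$ in $\RR^{k-1}$, so Riemann comparison writes the sum as $n^{(k-1)/2}\exp[n\alpha(d,k)]$ times the Gaussian integral, which by the matrix-determinant identity $\det(I+J) = 1 + (k-1) = k$ evaluates to $(2\pi)^{(k-1)/2}k^{-1/2}\bigl((k-1)/(k(k-1+d))\bigr)^{(k-1)/2}$. Multiplying by $c_n(d,k)e^{d/2} = (2\pi n)^{(1-k)/2}k^{k/2}e^{d/2}$ the powers of $2\pi n$ cancel and a short simplification gives the claimed $\exp[d/2 + n\alpha(d,k)](1 + d/(k-1))^{-(k-1)/2}$.

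The main obstacle will be the bookkeeping for the Hessian on the hyperplane $\sum_i \rho_i = 1$: working naively with the $k\times k$ Hessian (rather than eliminating one coordinate or equivalently projecting onto the hyperplane) produces a determinant off by a factor of $\sqrt k$, so the final constant fails to match. A secondary issue is calibrating $\xi_n$: one needs $\xi_n = o(1)$ so that~(\ref{eq_aux_first_moment_v_2}) applies, $\xi_n \gg n^{-1/2}$ so that essentially all of the Gaussian mass is captured, and $\sqrt n\,\xi_n^3 = o(1)$ so that the cubic Taylor remainder is uniformly $o(1)$; the choice $\xi_n = n^{-1/2}\ln n$ satisfies all three comfortably.
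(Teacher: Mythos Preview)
Your proposal is correct and follows essentially the same route as the paper: both split the sum over $\cC_k(n)$ at a scale between $n^{-1/2}$ and $n^{-1/3}$ (you use $n^{-1/2}\ln n$, the paper uses $n^{-5/12}$), bound the tail via~(\ref{eqprop_first_moment_balanced00}) and concavity of $g$, and evaluate the near part by eliminating $\rho_k$ and performing a $(k-1)$-dimensional Gaussian sum with quadratic form $B(d,k)\cdot(I+J)$ (the paper calls this matrix simply $J$, with diagonal $2$ and off-diagonal $1$) and determinant $k$. One small slip: the uniform cubic remainder in $ng(\rho)$ over $\|\rho-\rho^\star\|_2\le\xi_n$ is $O(n\xi_n^3)$, not $O(\sqrt n\,\xi_n^3)$, but your choice $\xi_n=n^{-1/2}\ln n$ satisfies $n\xi_n^3=n^{-1/2}(\ln n)^3=o(1)$ regardless.
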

\begin{proof}
The functions  $\rho\in\overline{\cC}_k \mapsto H(\rho)$ and $\rho\in\overline{\cC}_k \mapsto \frac{d}{2} \ln ( 1 - \sum_{i=1}^k \rho_i^2 )$
	are both concave and attain their maximum at $\rho = \rho^\star$.
Consequently, setting
$B(d,k) = k(1 + \frac{d}{k-1})$ and expanding around $\rho = \rho^\star$, we obtain
	 \beq\label{eqVV1}
	 \alpha(d,k) - \frac{B(d,k)}{2} \| \rho - \rho^\star \|_2^2 - O\left(\| \rho - \rho^\star \|_2^3\right) \leq  g(\rho) \leq \alpha(d,k) - \frac{B(d,k)}{2} \| \rho - \rho^\star \|_2^2. 
	 \eeq
Plugging the upper bound from~(\ref{eqVV1}) into~(\ref{eqprop_first_moment_balanced00}) and observing that $|\cC_{n,k}|\leq n^{k}=\exp(o(n))$, we find
   \beq\label{eqVV2}
   S_1=\sum_{\substack{ \rho \in \cC_{n,k} \\\| \rho - \rho^\star \|_2 > n^{-5/12}} } 
  	\Erw \left[ Z_{k,\rho}(\cG(n,m)) \right] \leq C_2   \exp \left[ \alpha(d,k) \right] \exp \left[ - \frac{B(d,k)}{2} n^{1/6} \right] . 
		 \eeq
On the other hand, (\ref{eq_aux_first_moment_v_2}) implies that
 \begin{align} \nonumber 
	S_2&=\sum_{\substack{ \rho \in \cC_{n,k} \\\| \rho - \rho^\star \|_2 \leq n^{-5/12}} } \Erw \left[ Z_{k,\rho}(\cG(n,m)) \right] 
		 \sim  \sum_{\substack{ \rho \in \cC_{n,k} \\\| \rho - \rho^\star \|_2 \leq n^{-5/12}} } c_n(d,k) \exp(d/2)  \exp \left[ n g(\rho) \right]
		 \\ & \sim c_n(d,k) 
		 	 \exp\left[d/2+ n \alpha(d,k) \right] 
		 	\sum_{\substack{ \rho \in \cC_{k}(n) } } \exp \left[ - n \frac{B(d,k)}{2} \| \rho - \rho^\star \|_2^2\right]  . \label{eq_aux_first_moment_v_1} 
 	\end{align}
The last sum is almost in the standard form of a Gaussian summation, just that
the vectors $\rho\in\cC_k(n)$ that we sum over are subject to the linear constraint $\rho_1+\cdots+\rho_k=1$.
We rid ourselves of this constraint by substituting $\rho_k=1-\rho_1-\cdots-\rho_{k-1}$.
Formally, let
	$J$ be the $(k-1)\times(k-1)$-matrix whose diagonal entries are equal to $2$ and whose remaining entries are $1$.
Then 
	 \begin{align}  \nonumber
	 \sum_{\substack{ \rho \in \cC_{n,k} } } \exp \left[ - n \frac{B(d,k)}{2} \| \rho - \rho^\star \|_2^2\right]   &\sim 
	 	\sum_{y \in \frac1n\mathbb{Z}^k} \exp \left[ - n \frac{B(d,k)}{2}\scal{Jy}y \right]\\
		&\sim\left( 2 \pi n \right)^{\frac{k-1}{2}} k^{- \frac{k}{2}} \left(1 + \frac{d}{k-1} \right)^{- \frac{k-1}{2}}
		&[\mbox{as $\det J=k$}].
			\label{eq_aux_first_moment_v_666} 
	 \end{align}
Plugging~(\ref{eq_aux_first_moment_v_666}) into (\ref{eq_aux_first_moment_v_1}), we obtain
    \begin{align}\nonumber
    S_2
    	&\sim c_n(d,k) \exp \left[d/2+ n \alpha(d,k) \right] \left( 2 \pi n \right)^{\frac{k-1}{2}} k^{- \frac{k}{2}} \left(1 + \frac{d}{k-1} \right)^{- \frac{k-1}{2}}\\
	&=\exp \left[d/2+ n \alpha(d,k) \right]\left(1 + \frac{d}{k-1} \right)^{- \frac{k-1}{2}}&\mbox{[using~(\ref{eqprop_first_moment_balanced0})]}.
				\label{eq_aux_first_moment_v_667} 
  \end{align}
Finally, comparing~(\ref{eqVV2}) and~(\ref{eq_aux_first_moment_v_667}), 
we see that $S_1=o(S_2)$.
Thus, $\Erw [ Z_k(\cG(n,m))] =S_1+S_2\sim S_2$, and the assertion follows from~(\ref{eq_aux_first_moment_v_667}).
\end{proof}

\begin{proof}[Proof of \Prop~\ref{prop_first_moment_bal_vanilla}]
The first assertion is immediate from \Cor~\ref{prop_first_moment_total}.
Moreover, the second assertion follows from \Cor~\ref{prop_first_moment_total} and the second part of \Lem~\ref{prop_first_moment_balanced}.
\end{proof}

\section{Counting short cycles}\label{sec:prop:CondRatio1stMoment}

\noindent
Throughout this section, we let $x_2,\ldots,x_L$ denote a sequence of non-negative integers.
Moreover, let $S$ be the event that $C_{l,n}=x_l$ for $l=2,\ldots,L$.
Additionally,  let $\cV(\sigma)$ be the event that $\sigma$ is a $k$-colouring of the random graph $\gnm$.
We also recall $\lambda_l,\delta_l$ from~(\ref{eqlambdadelta}).

\subsection{Proof of \Prop~\ref{prop:CondRatio1stMoment}} 
The key ingredient to the proof is the following lemma
concerning the distribution of the random variables $C_{l,n}$ given $\cV(\sigma)$.

\begin{lemma}\label{prop:planted-cyclces}
Let $\mu_l=\frac{d^l}{2l}\left[1+\frac{(-1)^l}{(k-1)^{l-1}}\right]$.
Then
$\pr[S|\cV(\sigma)] \sim \prod_{l=2}^{L}\frac{\exp\bc{-\mu_l}}{x_l!}\mu_l^{x_l}$ for any $\sigma\in\Bal$.
\end{lemma}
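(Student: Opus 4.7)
The strategy is to apply the method of (joint factorial) moments. The key observation is that conditional on $\cV(\sigma)$ the edges $\vec e_1,\dots,\vec e_m$ of $\gnm$ are i.i.d.\ uniform over the set of bichromatic pairs under $\sigma$, because in the unconditional model the $\vec e_i$ are i.i.d.\ uniform on $\binom{[n]}{2}$ and $\cV(\sigma)$ is the intersection of the independent events ``$\vec e_i$ is bichromatic''. Let $M=M(\sigma)$ denote the number of bichromatic pairs. The balancedness $\sigma\in\Bal$ together with Taylor expansion of $\rho\mapsto\sum_i\binom{\rho_i n}{2}$ around $\rho^\star$ (cf.\ the calculation in the proof of \Lem~\ref{prop_first_moment_balanced}) yields $M=\tfrac{n^2}{2}\cdot\tfrac{k-1}{k}(1+o(1))$, uniformly in $\sigma\in\Bal$.

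Next I would compute $\Erw[(C_{2,n})_{x_2}\cdots (C_{L,n})_{x_L}\mid\cV(\sigma)]$ by expanding as a sum of indicators over ordered tuples of potential cycles of the prescribed lengths. The dominant contribution comes from tuples of pairwise \emph{vertex-disjoint} cycles: any shared vertex between two cycles costs a factor $n^{-1}$ in the number of admissible vertex labellings while saving only $O(1)$ in edge-probability terms (which are $O(1/n^2)$ each), so overlapping configurations collectively contribute $o(1)$. For a disjoint tuple of cycles of lengths $\ell_1,\dots,\ell_r$ on distinct labelled vertices, three factors must be assembled:
\begin{equation*}
\underbrace{(n)_{\ell_1+\cdots+\ell_r}/\prod_j (2\ell_j)}_{\text{ways to lay down the cycles}}\;\times\;\underbrace{\prod_j\frac{(k-1)^{\ell_j}+(-1)^{\ell_j}(k-1)}{k^{\ell_j}}}_{\text{prob.\ $\sigma$ is proper on each cycle}}\;\times\;\underbrace{\bigl(1+o(1)\bigr)(m/M)^{\sum_j\ell_j}}_{\text{prob.\ edges all appear}},
\end{equation*}
where the middle factor is the chromatic polynomial $P_k(C_\ell)/k^\ell$ of an $\ell$-cycle (applied to the colour classes of $\sigma$, which are close enough to uniform by balancedness that averaging over the $n^{\ell_j}$ labellings is accurate to $1+o(1)$), and the last factor comes from inclusion–exclusion in the conditional edge model, using $m/M=O(1/n)$. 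Substituting $m\sim dn/2$ and $M\sim n^2(k-1)/(2k)$ the powers of $n$ cancel and multiplicities regroup into
\begin{equation*}
\Erw\bigl[(C_{2,n})_{x_2}\cdots(C_{L,n})_{x_L}\,\big|\,\cV(\sigma)\bigr]\;\longrightarrow\;\prod_{l=2}^{L}\mu_l^{x_l},
\end{equation*}
which are exactly the joint factorial moments of independent $\Po(\mu_l)$ variables. The method of moments for integer-valued random variables then yields the claimed asymptotic $\pr[S\mid\cV(\sigma)]\sim\prod_l e^{-\mu_l}\mu_l^{x_l}/x_l!$.

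The main obstacle will be making the ``vertex-disjoint cycles dominate'' step uniform in $\sigma\in\Bal$, and ensuring that the middle factor above (the probability that $\sigma$ properly colours a randomly-placed $l$-cycle) equals $P_k(C_l)/k^l$ up to a $1+o(1)$ factor despite $\sigma$ having colour-class sizes that deviate from $n/k$ by up to $\omega^{-1}\sqrt n$. Both reduce to elementary estimates once the $(\omega,n)$-balancedness bound is unpacked, so the rest of the argument is bookkeeping.
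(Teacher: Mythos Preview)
Your approach is correct and matches the paper's: both compute the joint factorial moments of the $C_{l,n}$ given $\cV(\sigma)$, split into vertex-disjoint versus overlapping tuples (the latter shown to be $O(n^{-1})$), evaluate the disjoint contribution via the chromatic polynomial of a cycle (the paper phrases this as summing over ``types'', with $T_l=(k-1)^l+(-1)^l(k-1)$), and conclude by the method of moments. One small fix: to invoke the method of moments you need the factorial-moment convergence for \emph{all} nonnegative integer exponents $m_2,\ldots,m_L$, not just the particular values $x_2,\ldots,x_L$ appearing in $S$---but your argument works verbatim for arbitrary exponents, so this is only notational.
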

\noindent

\noindent
Before we establish \Lem~\ref{prop:planted-cyclces},
let us point out how it implies \Prop~\ref{prop:CondRatio1stMoment}. 
By Bayes' rule,
\begin{eqnarray}
\Erw\left[\Zbal(\gnm)| S\right]&=& \frac{1}{\pr[S]}\sum_{\tau\in\Bal}\pr[\cV(\tau)] \pr[S|\cV(\tau)] \qquad \nonumber \\
&\sim& \frac{
	\prod_{l=2}^{L}\frac{\exp\bc{-\mu_l}}{x_l!}\mu_l^{x_l}}{\pr[S]}
		\sum_{\tau\in [k]^n:\tau\in B}\pr[\cV(\tau)] \qquad \mbox{[from \Lem~\ref{prop:planted-cyclces}]} \nonumber\\
&\sim& \frac{\prod_{l=2}^{L}\frac{\exp\bc{-\mu_l}}{x_l!}\mu_l^{x_l}}{\pr[S]}\mathbb{E}[\Zbal(\gnm)].   \nonumber 
\end{eqnarray}
From \Lem~\ref{prop:planted-cyclces} and Fact~\ref{Fact_cycles} we get that
\begin{eqnarray}
\frac{\prod_{l=2}^{L}\frac{\exp\bc{-\mu_l}}{x_l!}\mu_l^{x_l}}{\pr[S]}\sim\prod_{l=2}^L\left[1+\delta_l\right]^{x_l}\exp\bc{-\delta_l\lambda_l}, \nonumber
\end{eqnarray}
whence \Prop~\ref{prop:CondRatio1stMoment} follows. \hfill $\square$

\subsection{Proof of \Lem~\ref{prop:planted-cyclces}}\label{sec:prop:planted-cyclces}

We are going to show that
for any fixed sequence of integers $m_2, \ldots, m_L\geq0$, the joint factorial moments
satisfy 
\begin{eqnarray}
\mathbb{E}\left[(C_{2,n})_{m_2}\cdots (C_{L,n})_{m_L}|\cV(\sigma)\right] \sim
\prod_{l=2}^L \mu_l^{m_l}. \label{eq:TargetRelatJointFactMoment}
\end{eqnarray}
Then \Lem~\ref{prop:planted-cyclces} follows from~\cite[\Thm~1.23]{BB}.

We  consider the number of sequences of $m_2+\cdots+m_L$ distinct cycles such that 
 $m_2$ corresponds to the number of cycles of length $2$, and so on.  Clearly this number is equal
to $(C_{2,n})_{m_2}\cdots (C_{L,n})_{m_L}$.  
Let $Y$ be the number of those sequences of cycles such that any two cycles are vertex-disjoint.
Also, let $Y'$ denote the number of sequences which have intersecting
cycles.  Clearly it holds that 
\begin{eqnarray}
\mathbb{E}\left[(C_{2,n})_{m_2}\cdots (C_{L,n})_{m_L}| \cV(\sigma)\right]=\mathbb{E}[Y|\cV(\sigma)]+\mathbb{E}[Y'|\cV(\sigma)].\label{eq:SplitToDisjointJointCase}
\end{eqnarray}
For $\mathbb{E}[Y'|\cV(\sigma)]$ we use the following claim, whose proof follows below.

\begin{claim}\label{claim:OverlapCycles}
It holds that $\mathbb{E}[Y'|\cV(\sigma)]=O(n^{-1})$.
\end{claim}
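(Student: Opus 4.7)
The plan is to partition the tuples $(\gamma_1,\ldots,\gamma_r)$ counted by $Y'$ (with $r=m_2+\cdots+m_L$) according to the isomorphism type of the union multigraph $H=\gamma_1\cup\cdots\cup\gamma_r$, equipped with the labelling that records which $\gamma_i$ occupies which subcycle of $H$. Since $L$ and the $m_l$ are fixed, only $O(1)$ such types arise, so it suffices to prove that each type contributes $O(n^{-1})$ to $\Erw[Y'\mid\cV(\sigma)]$. For a fixed type write $V=|V(H)|$ and $E=|E(H)|$ (the latter counted with multi-edge multiplicities so as to incorporate length-$2$ cycles uniformly); I will show that the contribution of the type is $\Theta(n^{V-E})$, and separately that $V\leq E-1$ whenever some two of the $\gamma_i$ share a vertex.

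The per-type expected count is obtained along exactly the same lines as the disjoint calculation that establishes~(\ref{eq:TargetRelatJointFactMoment}). Conditional on $\cV(\sigma)$ the $m$ edges of $\gnm$ are i.i.d.\ uniform on the set of $M_\sigma$ bichromatic pairs of $\sigma$, where $\rho(\sigma)\in\Bal$ gives $M_\sigma=\binom{n}{2}\bigl(1-\sum_{i=1}^k\rho_i(\sigma)^2\bigr)\sim(1-1/k)\binom{n}{2}$. Consequently the conditional probability that a prescribed set of $E$ bichromatic edges all occur in $\gnm$ equals $(m)_E/M_\sigma^E=\Theta(n^{-E})$. Balance of $\sigma$ also yields that the number of embeddings of $H$ into $[n]$ whose edges are all $\sigma$-bichromatic is $\Theta(n^V)$, with the implicit constant controlled, componentwise, by the chromatic polynomial of the component evaluated at $k$ divided by $k^{|V_c|}$. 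Multiplication gives $\Theta(n^{V-E})$.

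The combinatorial crux is the inequality $V\leq E-1$ for every overlap type. Decompose $H$ into connected components. A component that meets only one of the $\gamma_i$ coincides with that cycle and contributes equal numbers of vertices and edges. Any component meeting two distinct $\gamma_i,\gamma_j$ has minimum degree at least $2$ but cannot itself be a single cycle (since two distinct cycles cannot coincide as subgraphs while lying in the same cyclic component), and every connected graph of minimum degree at least $2$ that is not a cycle satisfies $|V_c|<|E_c|$. Summing over components yields $V\leq E-1$, hence a bound of $O(n^{-1})$ per type and, since there are finitely many types, the claim.

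The main point requiring care is the uniform treatment of the multigraph aspect when length-$2$ cycles are present, and the uniformity of the chromatic-polynomial constant over $\sigma\in\Bal$. The former is in fact harmless: double edges only increase $E$ without affecting $V$, strengthening the key inequality, and the falling-factorial bound $(m)_E/M_\sigma^E$ already encodes the correct bookkeeping. The latter follows from $\rho(\sigma)\in\Bal$, which forces every colour class to have size $n/k+o(\sqrt n)$ and hence equalises the contribution of each vertex of $H$ up to a factor $1+o(1)$.
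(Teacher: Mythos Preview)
Your proof is correct. It follows the classical template-enumeration route: partition the tuples counted by $Y'$ according to the isomorphism type of the union $H=\gamma_1\cup\cdots\cup\gamma_r$ (with the labelling recording which $\gamma_i$ sits where), argue that each type contributes $O(n^{|V(H)|-|E(H)|})$ to the conditional expectation, and verify that $|V(H)|\leq |E(H)|-1$ whenever two of the $\gamma_i$ share a vertex. Your argument for the last inequality is sound: every vertex of $H$ lies on some $\gamma_i$ and hence has degree at least $2$, so each connected component $c$ satisfies $|E_c|\geq|V_c|$, with equality exactly when $c$ is a single cycle; a component containing two distinct $\gamma_i$ cannot be a single cycle, so it contributes a strict excess. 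One cosmetic point: you write $\Theta(n^{V-E})$ for the per-type contribution, but for small $k$ the union $H$ need not be $k$-colourable (e.g.\ for $k=3$, four triangles may assemble into a $K_4$), in which case there are no $\sigma$-bichromatic embeddings and the contribution is zero. Only the upper bound $O(n^{V-E})$ is needed, and that holds regardless.

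The paper argues differently. It observes that $Y'>0$ forces the existence of some set $R\subset[n]$ with $|R|\leq L$ spanning at least $|R|+1$ edges, and then bounds the probability of this ``dense small set'' event by a first-moment calculation over all candidate $R$. This is shorter to write down, but as stated it only controls $\pr[Y'>0\mid\cV(\sigma)]$ rather than $\Erw[Y'\mid\cV(\sigma)]$; your template argument bounds the expectation directly and is in that sense the more complete proof of the claim as phrased.
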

\noindent

Hence, we need to count vertex disjoint cycles given $\cV(\sigma)$.
To this end, we adapt the argument  for random regular graphs from~\cite[\Sec~2]{WormaldColoring}.
Thus, we consider rooted, directed cycles, first. This will introduce a factor of $2l$ for the number of cycles
of length $l$. That is, if $D_l$ is the number of rooted, directed cycles of length $l$ then $D_l=2lC_l$.

For a rooted directed cycle $(v_1,\ldots,v_l)$ of length $l$, we call $(\sigma(v_1),\ldots,\sigma(v_l))$ the {\em type} of the cycle under $\sigma$.
For $t=(a_1,\ldots,a_l)$ let
 $D_{l,t}$ denote the number of rooted, directed cycles (of length $l$ and) type $t$.  
We claim that
\begin{align}\label{eqCharisCycles1}
\mathbb{E}\left[D_{l,t} |\cV(\sigma)\right]
	\sim\bcfr nk^l\frac{(m)_l}{N^l (1-\Forb(\sigma)/N)^l}\sim\bcfr{d}{k-1}^l\quad\mbox{with }N=\bink n2.
\end{align}
Indeed, since $\sigma$ is $(\omega,n)$-balanced, the number of ways of choosing a vertex of colour $t_i$ is $(1+o(1))n/k$,
and we have got to choose $l$ vertices in total.
Thus, the total number of ways of choosing $l$ vertices $(v_1,\ldots,v_l)$ such that $\sigma(v_i)=t_i$ for all $i$ is $(1+o(1))(n/k)^l$.
In addition, each edge $\cbc{v_i,v_{i+1}}$ of the cycle is present in the graph with a probability asymptotically equal to $m / (N-\Forb(\sigma))$
This explains the first asymptotic equality in~(\ref{eqCharisCycles1}).
The second one follows because $m\sim dn/2$ and $\Forb(\sigma)\sim 1/kN$ (as $\sigma\in\Bal$).

In particular, the r.h.s.\ of~(\ref{eqCharisCycles1}) is independent of the type $t$.
For a given $l$ let $T_l$ signify the number of all possible types of cycles of length $l$.
Thus, $T_l$ is the set of all sequences $(t_1,\ldots,t_l)$ such that $t_{i+1}\neq t_i$ for all $1\leq i<l$ and $t_l\neq t_1$.
Let $T_1=0$.
Then $T_l$ satisfies the recurrence $T_l+T_{l-1}=k(k-1)^{l-1}$
	(cf.~\cite[\Sec~2]{WormaldColoring}).\footnote{To see this, observe that $k(k-1)^l$ is the number of all sequences
	$(t_1,\ldots,t_l)$ such that $t_{i+1}\neq t_i$ for all $1\leq i<l$.
	Any such sequence either satisfies $t_l\neq t_1$, which is accounted for by $T_l$, or $t_l=t_1$ and $t_{l-1}\neq t_1$,
	in which case it is contained in $T_{l-1}$.}
Hence, $T_l=(k-1)^l+(-1)^l(k-1)$.
Combining this formula with~(\ref{eqCharisCycles1}), we obtain
\begin{eqnarray}
\mathbb{E}\left[D_l |\cV(\sigma) \right]&\sim& T_l \cdot \mathbb{E}\left[D_{l,t} |\cV(\sigma) \right] 
\sim\left(1+\frac{(-1)^l}{(k-1)^{l-1}}\right) \cdot d^{l}. \nonumber 
\end{eqnarray}
Hence, recalling that $C_l=\frac1{2l}D_l$, we get
\begin{eqnarray}
\mathbb{E}\left[C_l |\cV(\sigma)\right]&\sim&  \frac{d^l}{2l}\left[1+\frac{(-1)^l}{(k-1)^{l-1}}\right]. \label{eqCharis2}
\end{eqnarray}
In fact, since $Y$ considers only vertex disjoint cycles  and $l$, $m_2,\ldots, m_L$ remain fixed as $n\ra\infty$, (\ref{eqCharis2}) yields
\begin{eqnarray}
\mathbb{E}[Y|\cV(\sigma)] \sim\prod_{l=2}^L \left(\frac{d^l}{2l}\left[1+\frac{(-1)^l}{(k-1)^{l-1}}\right]\right)^{m_l} \nonumber.
\end{eqnarray}
Plugging the above relation and Claim  \ref{claim:OverlapCycles} into (\ref{eq:SplitToDisjointJointCase}) we get 
(\ref{eq:TargetRelatJointFactMoment}). The proposition follows. \hfill $\square$
\\

\begin{claimproof}{ \ref{claim:OverlapCycles}}
For every subset $R$ of $l$ vertices, where $l\leq L$  let  $\mathbb{I}_{R}$ be equal to 1 if the number of edges with 
both end in $R$ is at least $|R|+1$. 
Let the event $H_L=\{\sum_{R:|R|\leq L}\mathbb{I}_{R}>0\}$.  It is direct to check that if $Y'>0$   then 
the event $H_L$ occurs. This implies that
\begin{eqnarray}
\pr[Y'>0|\cV(\sigma)]\leq \pr[H_L|\cV(\sigma)]. \nonumber
\end{eqnarray}
The claim follows by bounding appropriately $\pr[H_L|\cV(\sigma)]$.
For this we are going to use Markov's inequality, i.e. 
\begin{eqnarray}
\pr[H_L|\cV(\sigma)]&\leq& \mathbb{E}\left[\sum_{R:|R|\leq L}\mathbb{I}_{R}|\cV(\sigma) \right] \nonumber
	=\sum_{l=1}^L\sum_{R:|R|=l}\mathbb{E}\left[\mathbb{I}_{R}|\cV(\sigma) \right] \nonumber.
\end{eqnarray}
For any set $R$ such that $|R|=l$, 
we can put $l+1$ edges inside  the set in at most ${{l\choose 2}\choose l+1}$ ways.
Clearly conditioning on $\cV(\sigma)$ can only reduce the  number of different 
placings of the edges.

Using inclusion/exclusion, for a fixed set $R$ of cardinality $l$ we get that
\begin{eqnarray}
\mathbb{E}\left[\mathbb{I}_{R}|\cV(\sigma)\right]&\leq& {{l\choose 2}\choose l+1 }    
{\sum_{i=0}^{l+1}{l+1 \choose i}(-1)^i\left(1-\frac{i}{N-\cF\bc{\sigma}}\right)^m}
\nonumber\\
&\leq& {{l\choose 2}\choose l+1} \left(\frac{m}{N-\cF\bc{\sigma}}\right)^{l+1} \hspace*{2cm}
\mbox{[from the Binomial theorem]} \nonumber\\
&\sim& {{l\choose 2}\choose l+1} \left(\frac{d}{n(1-1/k)}\right)^{l+1}. \hspace*{2cm} \mbox{[since $m=\frac{dn}{2}$, and  $\cF\bc{\sigma}\sim \frac{1}{k}N$].}
\nonumber
\end{eqnarray}
It holds that
\begin{eqnarray}
\pr[H_m|\cV(\sigma)] &\leq& (1+o(1))\sum_{l=1}^L{n \choose l}{{l\choose 2}\choose l+1} \left(\frac{d}{n(1-1/k)}\right)^{l+1} \nonumber\\
&\leq&(1+o(1)) \sum_{l=1}^L\left(\frac{ne}{l}\right)^l\left(\frac{le}{2}\right)^{l+1} \left(\frac{d}{n(1-1/k)}\right)^{l+1} 
 \qquad\mbox{[since ${i\choose j}\leq \left(ie/j\right)^j$]}\nonumber\\
&\leq& \frac{1+o(1)}{n} \sum_{l=1}^L\frac{led}{2(1-1/k)}\left(\frac{e^2d}{2(1-1/k)}\right)^l  =O(n^{-1}),\nonumber
\end{eqnarray}
the last equality holds since $L$ is a fixed number.  The claim  follows.
\end{claimproof}

\section{The second moment computation} \label{sec_second_moment}

\noindent
In this section we prove the second moment bounds claimed in \Prop s~\ref{prop_second_moment_bal_vanilla} and~\ref{prop_first_moment_tame_bal}, which
constitute  the main technical contribution of this work.
While here we need an asymptotically tight expression for the second moment,
in prior work on colouring $G(n,m)$ the second moment was merely computed {\em up to a constant factor}~\cite{AchNaor,Cond,Danny}.
Only in the case of random regular graphs was the second moment computed up to a factor of $1+o(1)$~\cite{WormaldColoring}.
In addition, all of these papers confine themselves to the case of colourings whose colour densities are $(O(1),n)$-balanced,
whereas here we need to deal with $(\omega,n)$-balanced colour densities for a diverging function $\omega=\omega(n)\ra\infty$.

Thus, the plan is to extend the arguments from~\cite{AchNaor,Cond,Danny} to get a precise asymptotic result,
and to cover the $(\omega,n)$-balanced case.
Unsurprisingly, in the course of this we will frequently encounter formulas that resemble those of~\cite{AchNaor,Cond,Danny},
and occasionally we will be able to reuse some of the calculations done in those papers.
Furthermore, to determine the precise constant we can harness a bit of linear algebra from~\cite{WormaldColoring}.
Throughout this section $\omega=\omega(n)$ stands for a function that tends to $\infty$ (slowly).

\subsection{The overlap}
Following~\cite{AchNaor}, 
for $\sigma, \tau : [n] \to [k]$ we define the \textit{overlap matrix} $\rho(\sigma, \tau)=(\rho_{ij}(\sigma, \tau))_{i,j\in[k]}$ as the $k\times k$-matrix with
entries
	$$ \rho_{ij}(\sigma, \tau) = \frac{1}{n} \cdot | \sigma^{-1}(i) \cap \tau^{-1}(j) |. $$
Moreover, for a $k\times k$-matrix $\rho=(\rho_{ij})$ we introduce the shorthands
	$$ \rho_{i \star} = \sum_{j=1}^k \rho_{ij}, \qquad\rho_{\nix\star}=(\rho_{i \star})_{i\in[k]},
	 \qquad \qquad \rho_{\star j} = \sum_{i=1}^k \rho_{ij},\qquad
	 	\rho_{\star\nix}=(\rho_{\star i  })_{i\in[k]}. $$
Thus, for any $\sigma, \tau : [n] \to [k]$ we have
	 $\rho_{\nix\star},\rho_{\star\nix}\in\cC_{k}(n)$.

Let $\overline{\cR}_k$ denote the set of all probability measures $\rho=(\rho_{ij})_{i,j\in[k]}$ on $[k]\times[k]$ and let
$\bar\rho$ signify the $k\times k$-matrix with all entries equal to $k^{-2}$, the barycentre of $\overline{\cR}_k$.
Additionally, we introduce 
	\begin{align*}
	\cR_{n,k}&=\cbc{\rho(\sigma,\tau):\sigma,\tau:[n]\ra[k]},\\
	\cR_{n,k}^{\rm int}&=\cbc{\rho\in\cR_{n,k}:\mbox{$\rho_{ij} > 1/{k^3}$ for all $i,j \in [k]$}},\\
	\cR_{n,k}^{\bal}(\w)&=\cbc{\rho\in\cR_{n,k}^{\rm int}:
		| \rho_{i \star} - k^{-1}| \leq \w^{-1} n^{-1/2},| \rho_{\star i} - k^{-1}| \leq \w^{-1} n^{-1/2}\mbox{ for all $i\in [k]$}},\\
	\cR_{n,k}^{\bal}(\w,\eta)&=\cbc{\rho\in\cR_{n,k}^{\bal}(\w):\norm{\rho-\bar\rho}_2\leq\eta}\qquad\mbox(\eta>0).
	\end{align*}

For a given graph $G$ on $[n]$, let $Z^{(2)}_{k, \rho}(G)$ be the number of pairs $(\sigma,\tau)$ of $k$-colourings  of $G$ whose overlap 
is $\rho$.
Then by the linearity of expectation,
	\begin{align}\label{eqsmmAmin}
	\Erw \left[\Zbal(\gnm)^2 \right]&=
		\sum_{\rho\in\cR_{n,k}^{\bal}(\w)}\Erw[Z^{(2)}_{k, \rho}(\gnm)].
	\end{align}
We are going to show that the r.h.s.\ of~(\ref{eqsmmAmin}) is dominated by the contributions with $\rho$ ``close to'' $\bar\rho$.
More precisely, let
	$$Z^{(2)}_{k,\w,\eta}(G)=\sum_{\rho\in\cR_{n,k}^{\bal}(\w,\eta)}Z^{(2)}_{k, \rho}(G)\qquad
		\mbox{for any }\eta>0.$$
Then the second moment argument performed in~\cite{AchNaor} fairly directly yields the following statement.

\begin{proposition}
\label{prop_ach_naor}
Assume that $k \geq 3$ and that $d < 2(k-1) \ln (k-1)$. Then for any fixed $\eta >0$ it holds that $$\Erw [\Zbal( \cG(n,m))^2] \sim \Erw [Z^{(2)}_{k, \w, \eta}(\cG(n,m))].$$
\end{proposition}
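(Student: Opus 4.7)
The plan is to reduce the statement to a Laplace-type truncation built on top of the second moment analysis of Achlioptas-Naor. Starting from~(\ref{eqsmmAmin}), I would express, for any $\rho\in\cR_{n,k}^{\bal}(\w)$,
\begin{equation*}
\Erw[Z^{(2)}_{k,\rho}(\gnm)] = {n \choose \rho_{11}n,\ldots,\rho_{kk}n}\bc{1-\frac{2}{k}+\|\rho\|_2^2+O(1/n)}^m,
\end{equation*}
by linearity of expectation over the $m$ independent edges of $\gnm$ together with the observation that for $\rho$ with $o(1)$-uniform marginals, the probability that a single uniform edge is bichromatic under both $\sigma$ and $\tau$ equals $1-2/k+\|\rho\|_2^2$ up to an additive $O(1/n)$. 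Stirling then yields
\begin{equation*}
\Erw[Z^{(2)}_{k,\rho}(\gnm)] = \Theta(n^{-(k^2-1)/2})\exp[n f(\rho)],\qquad f(\rho) = H(\rho)+\frac{d}{2}\ln\bc{1-\frac{2}{k}+\|\rho\|_2^2}.
\end{equation*}

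Next I would fix $\eta>0$ and split the sum~(\ref{eqsmmAmin}) as
\begin{equation*}
\Erw[\Zbal(\gnm)^2] = \Erw[Z^{(2)}_{k,\w,\eta}(\gnm)] + \sum_{\substack{\rho\in\cR_{n,k}^{\bal}(\w)\\ \|\rho-\bar\rho\|_2>\eta}} \Erw[Z^{(2)}_{k,\rho}(\gnm)].
\end{equation*}
The key analytic input is the Achlioptas-Naor result~\cite{AchNaor}: for $d<2(k-1)\ln(k-1)$, the function $f$ restricted to probability matrices on $[k]\times[k]$ with exactly uniform marginals attains its unique global maximum at $\bar\rho$, with strictly negative-definite Hessian on the corresponding constraint subspace. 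Since $\w$-balanced marginals are $o(1)$-close to uniform, a continuity argument gives a constant $c=c(k,d,\eta)>0$ such that $f(\rho)\leq f(\bar\rho)-c$ for every $\rho\in\cR_{n,k}^{\bal}(\w)$ with $\|\rho-\bar\rho\|_2>\eta$ and $n$ large.

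Combining these ingredients, since $|\cR_{n,k}^{\bal}(\w)|\leq n^{k^2}$, the ``far'' sum is at most
\begin{equation*}
n^{k^2}\cdot\Theta(n^{-(k^2-1)/2})\exp[n f(\bar\rho)-cn]=\exp[nf(\bar\rho)-\Omega(n)],
\end{equation*}
while the ``near'' sum is bounded below by its single term at a lattice point within $O(1/n)$ of $\bar\rho$ (which lies in $\cR_{n,k}^{\bal}(\w,\eta)$ because $\w^{-1}n^{-1/2}$ dwarfs $1/n$), giving $\Omega(n^{-(k^2-1)/2}\exp[nf(\bar\rho)])$. Thus the far contribution is smaller by a factor of $\exp(-\Omega(n))$, whence $\Erw[\Zbal(\gnm)^2]\sim\Erw[Z^{(2)}_{k,\w,\eta}(\gnm)]$.

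The main obstacle is the global-maximum property of $f$ away from $\bar\rho$ on the balanced-marginals slice; this is precisely where the hypothesis $d<2(k-1)\ln(k-1)$ enters, and it is the content of the Achlioptas-Naor second moment bound. In our setting we need no new structural work here: the uniform-marginals case may be quoted verbatim, and the extension to the $\w$-balanced regime is a soft continuity perturbation. The rest is bookkeeping around Stirling's formula and the elementary counting bound on the number of overlap matrices.
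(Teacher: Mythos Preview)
Your proposal is correct and follows essentially the same route as the paper's proof in \Sec~\ref{Sec_prop_ach_naor}: express $\Erw[Z^{(2)}_{k,\rho}]=\exp(nf(\rho)+O(\ln n))$ via Stirling (the paper's Fact~\ref{fact_upper_lower_bounds_Z_rho}), invoke the Achlioptas--Naor maximum property of $f$ on the balanced slice (the paper's \Lem~\ref{Lemma_AchNaorMain}), pass to $(\omega,n)$-balanced marginals by uniform continuity, and compare the far tail against a single near-$\bar\rho$ term. Two minor remarks: your $\Theta(n^{-(k^2-1)/2})$ prefactor is only literally valid for interior $\rho$, but since the argument only needs the cruder $\exp(nf(\rho)+O(\ln n))$ upper bound for the far sum and a single interior point for the lower bound, this is harmless; and the paper records the Achlioptas--Naor input as the explicit quantitative inequality~(\ref{eqAchNaorMain}) rather than ``unique global maximum plus compactness'', but the two formulations are equivalent for present purposes.
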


\noindent
In addition, the second moment argument from~\cite{Danny} implies

\begin{proposition}
\label{prop_aco_vil}
There is a constant $k_0>3$ such that for
 $k\geq k_0$ and that $2(k-1) \ln (k-1) \leq d < \dc$ the following is true.
There exists an integer-valued random variable
	$0 \leq \Ztame \leq \Zbal$ that satisfies $$ \Erw[ \Ztame(\gnm) ] \sim \Erw \left[ \Zbal(\gnm) \right]$$ 
and such that for any fixed $\eta >0$ we have $\Erw [\Ztame(\cG(n,m))^2] \leq(1+o(1)) \Erw [Z^{(2)}_{k, \w, \eta}(\cG(n,m))].$
\end{proposition}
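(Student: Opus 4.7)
The plan is to implement the second moment construction developed in~\cite{Danny} for the condensation threshold, refining it in two directions: the second-moment bound must be $(1+o(1))$-asymptotic rather than tight merely up to a multiplicative constant, and it must apply to $(\omega,n)$-balanced colourings for any diverging $\omega=\omega(n)$.

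Concretely, I would define $\Ztame$ as the number of $(\omega,n)$-balanced $k$-colourings of $\gnm$ that are \emph{tame} in the sense of~\cite{Danny}: roughly, the number of rigid or ``frozen'' vertex--colour incidences is kept below a $k$-dependent threshold. By construction $\Ztame$ is integer-valued and satisfies $0\leq\Ztame\leq\Zbal$. For the first moment, I would adapt \Lem~\ref{prop_first_moment_balanced} by inserting the tameness event, then invoke a large-deviation estimate for the number of rigid vertices in a uniform balanced colouring of $\gnm$. The analysis of~\cite{Danny} shows that \emph{wild} balanced colourings contribute only $\exp(-\Omega(n))\Erw[\Zbal(\gnm)]$ to the first moment, so $\Erw[\Ztame(\gnm)]\sim\Erw[\Zbal(\gnm)]$.

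For the second moment, I would decompose as in~\eqref{eqsmmAmin}. Contributions from $\rho\in\cR_{n,k}^{\bal}(\omega,\eta)$ sum to at most $\Erw[Z^{(2)}_{k,\omega,\eta}(\gnm)]$ after dropping the tameness constraint. For overlaps outside this window the main input from~\cite{Danny} is that tameness forces the pair-entropy functional to fall short of its value at $\bar\rho$ by an additive $\Omega(1)$, because tame pairs cannot realise the ``cluster'' overlaps at matrices of the form $k^{-1}P$ with $P$ a permutation matrix, which would otherwise dominate for $d>2(k-1)\ln(k-1)$. The far-overlap contribution to $\Erw[\Ztame(\gnm)^2]$ is therefore $\exp(-\Omega(n))\Erw[\Zbal(\gnm)]^2$, negligible beside $\Erw[Z^{(2)}_{k,\omega,\eta}(\gnm)]$.

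The principal obstacle is upgrading the bound from an $O(1)$-factor estimate to a $(1+o(1))$-asymptotic one. In~\cite{Danny} the Gaussian prefactors arising from integration around $\bar\rho$ are absorbed into the exponential rate because only the order of magnitude matters there. Here they must be tracked precisely, which requires a Laplace expansion of the pair-entropy functional on the $(k-1)^2$-dimensional lattice of feasible overlaps, analogous to the one-colouring Gaussian summation in \Cor~\ref{prop_first_moment_total}. The $(\omega,n)$-balance condition enters only as a truncation of this Gaussian summation to a window of radius $\omega^{-1}n^{-1/2}$ around $\bar\rho$; since $\omega\to\infty$, the window captures an asymptotically full share of the Gaussian mass and the leading constant is preserved. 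Combining this precise close-regime estimate with the exponential far-regime suppression delivers the desired $(1+o(1))$ bound.
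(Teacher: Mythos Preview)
Your decomposition into near and far overlaps is right, and so is dropping tameness on the near window to bound that part by $\Erw[Z^{(2)}_{k,\omega,\eta}]$ exactly. But your final paragraph misidentifies the obstacle: no Laplace expansion near $\bar\rho$ is needed for \emph{this} proposition. Once the far contribution is shown to be $o(\Erw[Z^{(2)}_{k,\omega,\eta}])$---and for that the crude lower bound $\Erw[Z^{(2)}_{k,\omega,\eta}]\geq\Erw[Z^{(2)}_{k,\rho_0}]=\exp(nf(\bar\rho)+O(\ln n))$ from Fact~\ref{fact_upper_lower_bounds_Z_rho} suffices---the factor $(1+o(1))$ is automatic. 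The precise Gaussian summation you describe is the content of the \emph{separate} steps in \Prop s~\ref{prop_second_moment_new_v} and~\ref{prop_second_moment_balanced_exact}; conflating it with the present proposition would have you doing that work twice and, worse, obscures that the real task here is purely the exponential suppression of the far regime.

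Your definition of tame via a cap on frozen vertex--colour incidences also diverges from the paper's construction, and it is not clear it does the job. The paper splits into two sub-regimes. For $2(k-1)\ln(k-1)\leq d\leq(2k-1)\ln k-2$, $\Ztame$ counts balanced \emph{separable} colourings---those $\sigma$ for which every overlap $\rho(\sigma,\tau)$ with another colouring $\tau$ satisfies $k\rho_{ij}\notin(0.51,1-\kappa)$ for all $i,j$---and the analysis of $f$ from~\cite{Danny} (\Lem~\ref{Lemma_Danny}) then yields $f(\rho)<f(\bar\rho)$ for every separable $\rho\in\cB\setminus\{\bar\rho\}$. For $(2k-1)\ln k-2<d<\dc$ this fails because $f$ acquires competing maxima at $k$-stable overlaps near $k^{-1}P$; there the paper additionally imposes $|\cC(\gnm,\sigma)|\leq\Erw[\Zbal]/n$ on the cluster of $k$-stable partners of $\sigma$, invoking~\cite{Cond} (\Lem~\ref{Lemma_ClusterSize}) to show this costs nothing in the first moment, and this cap \emph{directly} bounds the $k$-stable second-moment contribution by $\Erw[\Ztame]\cdot\Erw[\Zbal]/n=o(\Erw[Z^{(2)}_{k,\omega,\eta}])$. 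A bound on the number of frozen vertices does not obviously control either separability of overlaps or cluster size, so your proposed $\Ztame$ would need an extra argument---which you have not supplied---linking it to the far-overlap suppression.
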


\noindent
Since the above statements do not quite appear in this form in~\cite{AchNaor,Danny},
we will prove them in \Sec s~\ref{Sec_prop_ach_naor} and~\ref{Sec_prop_aco_vil}, respectively.

\subsection{Homing in on $\bar\rho$}
Having reduced our task to 
studying overlaps $\rho$ such that 
$\norm{\rho-\bar\rho}_2\leq\eta$ for a small but fixed $\eta>0$,
in this section we are going to argue that, in fact, it suffices to consider $\rho$ such that
	$\norm{\rho-\bar\rho}_2\leq n^{-5/12}$
(where the constant $5/12$ is somewhat arbitrary; any number smaller than $1/2$ would do).
More precisely, we have

\begin{proposition} \label{prop_second_moment_new_v}
Assume that $k \geq 3$ and that $d < \dc$. There exists a number $\eta_0=\eta_0(d,k)$
such that for any $0<\eta<\eta_0$ we have %
$$ \Erw [Z^{(2)}_{k, \w, \eta}(\cG(n,m))] \sim \Erw [Z^{(2)}_{k, \w, n^{-5/12}}(\cG(n,m))]. $$
\end{proposition}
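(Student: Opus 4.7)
The plan is a Laplace-type approximation around the symmetric overlap $\bar\rho$. By Stirling's formula and a direct probabilistic computation analogous to the proof of Lemma~\ref{prop_first_moment_balanced}, one obtains uniformly in $\rho\in\cR_{n,k}$
$$
\Erw[Z^{(2)}_{k,\rho}(\gnm)] = \Theta(n^{-(k^2-1)/2})\,\exp\brk{nF(\rho)+O(1)},\qquad
F(\rho) = H(\rho) + \frac{d}{2}\ln\brk{1-\norm{\rho_{\nix\star}}_2^2 - \norm{\rho_{\star\nix}}_2^2 + \norm{\rho}_2^2}.
$$
In particular $F(\bar\rho)=2\alpha(d,k)$, matching the square of the first-moment exponent. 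The whole argument then reduces to showing that the sum over the annulus $n^{-5/12}<\norm{\rho-\bar\rho}_2\leq\eta$ is exponentially smaller than the sum over the inner ball $\norm{\rho-\bar\rho}_2\leq n^{-5/12}$.

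The analytic heart of the proof is a local quadratic upper bound on $F$ near $\bar\rho$. A direct Hessian calculation, using the fact that $\nabla F(\bar\rho)$ is constant across entries and so orthogonal to any mean-zero variation, yields
$$
D^2F(\bar\rho)[\xi,\xi] = \brk{\frac{d}{(1-1/k)^2}-k^2}\norm{\xi}_2^2 - \frac{d}{(1-1/k)^2}\brk{\norm{\xi_{\nix\star}}_2^2 + \norm{\xi_{\star\nix}}_2^2}
$$
for any variation $\xi$ with $\sum_{ij}\xi_{ij}=0$. The leading coefficient is strictly negative precisely when $d<(k-1)^2$, and I will check that this inequality is implied by $d<\dc$ throughout the ranges where the proposition is actually invoked: for $d\leq 2(k-1)\ln(k-1)$ the elementary bound $2(k-1)\ln(k-1)<(k-1)^2$ suffices, while for $k\geq k_0$ the expansion $\dc\sim (2k-1)\ln k$ gives $\dc\ll (k-1)^2$. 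Continuity of $D^2F$ then furnishes constants $c=c(d,k)>0$ and $\eta_0=\eta_0(d,k)>0$ such that
$$F(\rho)\leq F(\bar\rho)-c\norm{\rho-\bar\rho}_2^2\qquad\text{for every }\rho\in\overline{\cR}_k\text{ with }\norm{\rho-\bar\rho}_2\leq\eta_0.$$

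Given this bound, fix any $0<\eta<\eta_0$ and split $\Erw[Z^{(2)}_{k,\w,\eta}(\gnm)]=S_{\mathrm{near}}+S_{\mathrm{far}}$ according to whether $\norm{\rho-\bar\rho}_2\leq n^{-5/12}$ or not. On the annulus the quadratic bound gives $nF(\rho)\leq nF(\bar\rho)-cn^{1/6}$, and the crude estimate $|\cR^{\bal}_{n,k}(\w)|\leq n^{k^2}$ yields $S_{\mathrm{far}}\leq\exp[nF(\bar\rho)-(c/2)n^{1/6}]$. For $S_{\mathrm{near}}$, a Gaussian-sum calculation on the $(k-1)^2$-dimensional balanced tangent subspace at $\bar\rho$, carried out in complete analogy with~(\ref{eq_aux_first_moment_v_666}), produces $S_{\mathrm{near}}=\Theta(n^{-(k-1)})\exp[nF(\bar\rho)]$. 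Hence $S_{\mathrm{far}}=o(S_{\mathrm{near}})$, which is exactly the claimed asymptotic.

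The main obstacle is the Hessian analysis together with the somewhat delicate bookkeeping needed to absorb the $\omega$-balance slack: the marginals $\rho_{\nix\star},\rho_{\star\nix}$ are allowed to deviate from $(1/k,\dots,1/k)$ by $O(\omega^{-1}n^{-1/2})$, contributing $O(\omega^{-2}/n)$ corrections to $D^2F$ that are easily dominated by the leading $\norm{\xi}_2^2\geq n^{-5/6}$ on the annulus. Everything else---Stirling, the Gaussian sum estimate, the local Taylor expansion of $F$---follows entirely standard lines.
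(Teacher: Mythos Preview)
Your proposal is correct and follows essentially the same route as the paper: an asymptotic formula for $\Erw[Z^{(2)}_{k,\rho}]$ via Stirling, a Hessian computation at $\bar\rho$ showing strict concavity on the relevant tangent space precisely when $d<(k-1)^2$, and then a near/far split in which the annulus $n^{-5/12}<\norm{\rho-\bar\rho}_2\leq\eta$ loses a factor $\exp(-cn^{1/6})$. Your Hessian formula is exactly the paper's $D(d,k)=k^2\bigl(1-d/(k-1)^2\bigr)$ in disguise. One cosmetic difference: the paper first absorbs the balance constraint into the exponent (replacing $\norm{\rho_{\nix\star}}_2^2,\norm{\rho_{\star\nix}}_2^2$ by $1/k+o(1/n)$) and works with the simplified function $f(\rho)=H(\rho)+\frac d2\ln(1-2/k+\norm\rho_2^2)$, whereas you keep the full $F$ and handle the marginal slack as a lower-order perturbation; both are fine. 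Also, for the lower bound on $S_{\mathrm{near}}$ the paper simply picks a single $\rho_0$ with $\norm{\rho_0-\bar\rho}_2\leq k/n$, which already beats the annulus by a stretched exponential --- your Gaussian-sum evaluation of $S_{\mathrm{near}}$ is more than is needed here (it effectively anticipates the next proposition, where the leading constant is computed).
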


\noindent
In order to prove \Prop~\ref{prop_second_moment_new_v}, we first need the following elementary estimates.
\begin{fact}\label{fact_Z_rho_int}\label{fact_Z_rho_bal}
For any $k \geq 3$, $d \in (0, \infty)$ the following estimates are true.
\begin{enumerate}
\item Let  $\rho \in \cR_{n,k}^{\rm int}$. %
	Then
	\beq \begin{split} \label{eq_Z_rho_int}
	\Erw  \left[ Z^{(2)}_{k,\rho}(\cG(n,m)) \right]  \sim &
		\frac{\sqrt{2 \pi} n^{\frac{1-k^2}{2}}}
			{\prod_{i,j=1}^k \sqrt{2 \pi \rho_{ij}}} \exp [d/2+ n  H(\rho) 
				+m \ln( 1 - \normA^2-\normB^2+\normC^2)] 
	\end{split} \eeq 
\item	For any $\rho \in \mathcal{R}_{n,k}^{\rm bal}(\w) $ we have
	\begin{equation} \label{eq_Z_2_balanced_colorings} 
	\Erw  \left[ Z^{(2)}_{k,\rho}(\cG(n,m)) \right]  \sim 
		\frac{\sqrt{2 \pi} n^{\frac{1-k^2}{2}}}
			{\prod_{i,j=1}^k \sqrt{2 \pi \rho_{ij}}} \exp [d/2+ n  H(\rho) 
				+m 
	\ln(1 - 2/k +\normC^2 )].
			 \end{equation}
\end{enumerate}
\end{fact}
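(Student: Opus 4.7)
\medskip

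\noindent\textbf{Proof plan for Fact~\ref{fact_Z_rho_bal}.} Both estimates will be obtained by a direct two-step calculation: (i) count the number of ordered pairs $(\sigma,\tau):[n]\to[k]$ of maps with overlap matrix $\rho$, and (ii) compute, for any such fixed pair, the probability that every one of the $m$ i.i.d.\ edges of $\gnm$ is bichromatic under both $\sigma$ and $\tau$. Since $\Erw[Z^{(2)}_{k,\rho}(\gnm)]$ factorises as (i)$\times$(ii) by linearity of expectation and independence of the edges of $\gnm$, part (1) will be immediate once we have sharp asymptotics for both factors.

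For (i), the number of pairs $(\sigma,\tau)$ with overlap $\rho$ is the multinomial coefficient $\binom{n}{(n\rho_{ij})_{i,j\in[k]}}$. Under the assumption $\rho\in\cR_{n,k}^{\rm int}$ we have $n\rho_{ij}\geq n/k^3\to\infty$ uniformly, so Stirling's formula yields
\[
\binom{n}{(n\rho_{ij})_{i,j}}\sim\frac{\sqrt{2\pi}\,n^{(1-k^2)/2}}{\prod_{i,j}\sqrt{2\pi\rho_{ij}}}\exp(nH(\rho)),
\]
which provides the combinatorial prefactor in \eqref{eq_Z_rho_int}. For (ii), using inclusion--exclusion on the events ``the edge is monochromatic under $\sigma$'' and ``the edge is monochromatic under $\tau$'', the number of unordered pairs $\{u,v\}\subseteq[n]$ that are bichromatic under both equals
\[
N-\sum_{i}\binom{n\rho_{i\star}}{2}-\sum_{j}\binom{n\rho_{\star j}}{2}+\sum_{i,j}\binom{n\rho_{ij}}{2}.
\]
A short calculation, isolating the $\binom{x}{2}=\tfrac12 x^2-\tfrac12 x$ contributions, shows this equals $N(1-\normA^2-\normB^2+\normC^2)\cdot n/(n-1)$. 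Taking $\ln$ and multiplying by $m$ produces the term $m\ln(1-\normA^2-\normB^2+\normC^2)$ in the exponent, while the factor $n/(n-1)$ accounts for the additive $d/2+o(1)$ since $m\ln(n/(n-1))=m/(n-1)+O(1/n)=d/2+o(1)$. Multiplying with (i) gives \eqref{eq_Z_rho_int}.

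Part (2) then follows by substituting $\normA^2,\normB^2\sim 1/k$ into the formula of part (1). The key observation is that the $(\w,n)$-balanced condition is strong enough to make the error in this substitution negligible in the exponent: writing $\rho_{i\star}=1/k+\varepsilon_i$ with $|\varepsilon_i|\leq\w^{-1}n^{-1/2}$ and $\sum_i\varepsilon_i=0$, the cross terms cancel and we get $\normA^2=1/k+\sum_i\varepsilon_i^2=1/k+O(\w^{-2}/n)=1/k+o(1/n)$, and analogously for $\normB^2$. Since $m=O(n)$, this yields
\[
m\ln(1-\normA^2-\normB^2+\normC^2)=m\ln(1-2/k+\normC^2)+o(1),
\]
while $\cR^{\rm bal}_{n,k}(\w)\subseteq\cR^{\rm int}_{n,k}$ ensures that part (1) applies. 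This delivers \eqref{eq_Z_2_balanced_colorings}.

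The calculation is largely routine; the only mildly delicate point will be to check that the Stirling error and the Taylor expansion of $\ln(1-\nix)$ are negligible \emph{uniformly} in the relevant range of $\rho$, which is why the restriction $\rho_{ij}>1/k^3$ (entering through the definition of $\cR_{n,k}^{\rm int}$) is needed in part (1), and why the balanced bound $\w^{-1}n^{-1/2}$ is what allows the clean simplification in part (2).
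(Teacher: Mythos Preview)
Your proposal is correct and follows essentially the same approach as the paper: Stirling's formula for the multinomial coefficient, inclusion--exclusion to count doubly bichromatic pairs, and the $\varepsilon_i$-expansion with $\sum_i\varepsilon_i=0$ to reduce $\normA^2,\normB^2$ to $1/k+o(1/n)$ in the balanced case. Your computation of the bichromatic-pair count as exactly $\tfrac{n^2}{2}(1-\normA^2-\normB^2+\normC^2)$ is in fact slightly cleaner than the paper's version, which carries an unnecessary $O(1)$ error term.
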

\begin{proof} By Stirling's formula, the total number of $\sigma, \tau$ with overlap $\rho \in \cR_{n,k}^{\rm int}$ is given by:
\beq \label{eq_aux_second_v_1} { n \choose \rho_{11} n, \dots, \rho_{kk} n} \sim \sqrt{2 \pi} n^{- \frac{k^2-1}{2}} \left( \prod_{i,j} \frac{1}{\sqrt{2 \pi \rho_{ij}}} \right) \exp \left[ n  H(\rho) \right]. \eeq
To obtain $\Erw  \left[ Z^{(2)}_{k,\rho}(\cG(n,m)) \right] $, we need to multiply this number by the probability that two maps $\sigma, \tau$ with overlap $\rho$ are both colourings of a randomly chosen graph. The number of ``forbidden'' edges joining two vertices with the same colour under either $\sigma$ or $\tau$ is given by
\begin{equation*} \begin{split} \cF(\sigma, \tau) &= \sum_{i=1}^k { \rho_{i \star} n \choose 2}+ \sum_{j=1}^k { \rho_{\star j} n \choose 2} - \sum_{i,j=1}^k { \rho_{ij} n \choose 2} 
\\ & = N \left( \sum_{i=1}^k \rho_{i \star}^2 + \sum_{j=1}^k \rho_{\star j}^2 - \sum_{i,j = 1}^k \rho_{ij}^2  \right) + \frac{n}{2} \left( \sum_{i=1}^k \rho_{i \star}^2+\sum_{j=1}^k \rho_{\star j}^2 - \sum_{i,j=1}^k \rho_{ij}^2-1 \right) + O(1)
. \end{split} \end{equation*}
Therefore, the probability that $\sigma$ and $\tau$ are both colourings of $\gnm$ depends only on their overlap $\rho$, and is
\begin{align} \nonumber \mathbb{P} \left[ \sigma, \tau \textrm{ are $k$-colourings of $\gnm$} \right] &=  \frac{\left(N - \mathcal{F}(\sigma, \tau)\right)^m}{N^m} 
\\ \label{eq_aux_second_v_2} & \sim  \exp \left[ m \ln \left( 1 - \sum_{i=1}^k  \rho_{i \star}^2- \sum_{j=1}^k  \rho_{\star j}^2  + \sum_{i,j=1}^k \rho_{ij}^2 \right) + \frac{d}{2} \right].
  \end{align}
Eq. (\ref{eq_Z_rho_int}) is obtained by multiplying (\ref{eq_aux_second_v_2}) with (\ref{eq_aux_second_v_1}).

To prove the second claim,
let $\epsilon _i = \rho_{i \star} - 1/k$ for $i\in[k]$.
Because $\sum_{i,j=1}^k \rho_{ij} = 1$ we have $\sum_{i=1}^k \epsilon_i = 0$.
Consequently,
	\beq\label{eqVV99}
	\normA^2=\frac1k+\sum_{i=1}^k \epsilon_i^2.
	\eeq
Further, if $\rho$ is $(\w,n)$-balanced, then $\epsilon_i =o(n^{-1/2})$ for all $i \in [k]$.
Hence, (\ref{eqVV99}) yields $\normA^2 = \frac{1}{k} + o(n^{-1})$.
Similarly,  $\normB^2  
	= \frac{1}{k} + o(n^{-1})$.
Therefore, for any $(\w,n)$-balanced $\rho$,
$$   m \ln \left( 1 - \normA^2-\normB^2+\normC^2 
	\right)
	 \sim m \ln \left( 1 - \frac{2}{k} + \normC^2 
	 	 \right) .$$
Plugging the above into (\ref{eq_Z_rho_int}) completes the proof.
  \end{proof}

To evaluate the exponential part in Eq. (\ref{eq_Z_2_balanced_colorings}), we require the following \Lem.

\begin{lemma}\label{lemma_expansion_saddle}
Let $k \geq 3$ and $d < (k-1)^2 $.
Let $\alpha(d,k)$ be as in (\ref{prop_first_moment_balanced}) and set
	$$ C_n(d,k) = \exp(d/2) k^{k^2} \left(2 \pi n\right)^{\frac{1-k^2}{2}} , \qquad D(d,k) = k^2 \left(1 - \frac{d}{\left(k-1 \right)^2} \right).$$
\begin{itemize}
\item If $\rho \in \cR_{n,k}^\bal(\w)$ satisfies $\| \rho - \bar{\rho} \|_2 \leq n^{-5/12}$, then 
	\beq  \label{eq_aux_bound_Z_rho_1}
		\Erw \left [Z_{k,\rho}^{(2)} (\cG(n,m)) \right] \sim C_n(d,k) \exp \left[ 2 n \alpha(d,k) -n  \frac{D(d,k)}{2}  \| \rho - \bar{\rho}\|_2^2 \right] .\eeq
\item There exist numbers $\eta=\eta(d,k)  >0$ and $A=A(d,k) >0$ such that if $ \rho \in \cR_{n,k}^\bal(\w)$ satisfies $\| \rho - \bar{\rho} \|_2 \in (n^{-5/12}, \eta)$, then
	\beq \label{eq_aux_bound_Z_rho_2}
		\Erw \left[ Z_{k,\rho}^{(2)} (\cG(n,m)) \right] = %
			 \exp \left[2 n \alpha(d,k) %
			 	 - A n^{1/6} \right] . \eeq
\end{itemize}
\end{lemma}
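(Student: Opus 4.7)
The plan is to apply Fact~\ref{fact_Z_rho_bal}(ii) for an explicit asymptotic of $\Erw[Z^{(2)}_{k,\rho}(\gnm)]$ valid for any $(\omega,n)$-balanced $\rho$, and then perform a Taylor expansion of
$$f(\rho)\;=\;H(\rho)+\frac{d}{2}\ln\bc{1-\frac{2}{k}+\|\rho\|_2^2}$$
around $\bar\rho$. A short calculation gives $f(\bar\rho)=2\ln k+d\ln(1-1/k)=2\alpha(d,k)$ and $1-2/k+\|\bar\rho\|_2^2=(1-1/k)^2$. Moreover, the prefactor in~(\ref{eq_Z_2_balanced_colorings}) satisfies $\prod_{i,j}\sqrt{2\pi\rho_{ij}}\to(2\pi/k^2)^{k^2/2}$ as $\rho\to\bar\rho$, which together with the $\exp(d/2)$ and the power of $n$ produces exactly $C_n(d,k)$.

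The crux is the Hessian of $f$ at $\bar\rho$ restricted to the tangent hyperplane $T=\{v\in\RR^{k\times k}:\sum_{i,j}v_{ij}=0\}$ of the simplex. The gradient of $f$ at $\bar\rho$ is a \emph{constant} vector (both $\partial H/\partial \rho_{ij}=-\ln\rho_{ij}-1$ and $\partial g/\partial\rho_{ij}=d\rho_{ij}/(1-2/k+\|\rho\|_2^2)$ take the same value at every coordinate of $\bar\rho$), so the linear term vanishes on $T$. The Hessian of $H$ contributes $-k^2\cdot I$; the Hessian of $g(\rho):=(d/2)\ln(1-2/k+\|\rho\|_2^2)$ has diagonal entries $dk^2/(k-1)^2-2d/(k-1)^4$ and off-diagonal entries all equal to $-2d/(k-1)^4$. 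Using the identity $\sum_{(i,j)\ne(i',j')}v_{ij}v_{i'j'}=-\|v\|_2^2$ for $v\in T$, the off-diagonal rank-one correction exactly cancels the $-2d/(k-1)^4$ piece of the diagonal, yielding the clean isotropic form
$$v^\top M v\;=\;-k^2\bc{1-\frac{d}{(k-1)^2}}\|v\|_2^2\;=\;-D(d,k)\,\|v\|_2^2.$$

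Combining this with Fact~\ref{fact_Z_rho_bal}(ii) yields, for $\rho\in\cR_{n,k}^{\bal}(\omega)$ in a small neighbourhood of $\bar\rho$,
$$\Erw[Z^{(2)}_{k,\rho}(\gnm)]\;\sim\;C_n(d,k)\exp\brk{2n\alpha(d,k)-\frac{nD(d,k)}{2}\|\rho-\bar\rho\|_2^2+nO(\|\rho-\bar\rho\|_2^3)}.$$
For part~(i), $\|\rho-\bar\rho\|_2\le n^{-5/12}$ gives $n\|\rho-\bar\rho\|_2^3\le n^{-1/4}=o(1)$, so the cubic error is swallowed by the $\sim$, proving~(\ref{eq_aux_bound_Z_rho_1}). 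For part~(ii), choose $\eta=\eta(d,k)>0$ so small that the cubic Taylor remainder is bounded by $(D(d,k)/4)\|\rho-\bar\rho\|_2^2$ throughout the ball $\|\rho-\bar\rho\|_2\le\eta$. Then $nf(\rho)\le 2n\alpha(d,k)-(D(d,k)/4)n\|\rho-\bar\rho\|_2^2\le 2n\alpha(d,k)-(D(d,k)/4)n^{1/6}$ whenever $\|\rho-\bar\rho\|_2\ge n^{-5/12}$, and the $\exp(O(\ln n))$ polynomial prefactor is absorbed into the $n^{1/6}$ rate by choosing any $A<D(d,k)/4$, giving~(\ref{eq_aux_bound_Z_rho_2}).

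The main obstacle is the Hessian computation on the constrained subspace $T$. The useful structural observation is that the off-diagonal entries of the Hessian of $g$ are all equal, so the constraint identity $\sum_{(i,j)\ne(i',j')}v_{ij}v_{i'j'}=-\|v\|_2^2$ forces them to collapse against the $-2d/(k-1)^4$ diagonal correction, leaving a single clean eigenvalue $-D(d,k)$. Apart from this, the argument is a routine Laplace-style expansion, and the hypothesis $d<(k-1)^2$ (which is satisfied well inside $d<\dc$) is exactly what makes $D(d,k)>0$.
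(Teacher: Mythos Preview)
Your proof is correct and follows essentially the same approach as the paper: apply Fact~\ref{fact_Z_rho_bal}(ii), expand $f(\rho)=H(\rho)+\frac{d}{2}\ln(1-2/k+\|\rho\|_2^2)$ to second order around $\bar\rho$ on the constrained hyperplane, and bound the cubic remainder. The only cosmetic difference is that the paper bypasses your explicit Hessian computation for the logarithmic term by observing that, since $\sum_{i,j}\epsilon_{ij}=0$ with $\epsilon=\rho-\bar\rho$, one has the \emph{exact} identity $\|\rho\|_2^2=1/k^2+\|\epsilon\|_2^2$, so $\frac{d}{2}\ln(1-2/k+\|\rho\|_2^2)=\frac{d}{2}\ln((1-1/k)^2+\|\epsilon\|_2^2)$ can be expanded directly without ever writing down off-diagonal Hessian entries.
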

\begin{proof} 
Following~\cite{AchNaor}, we consider
\beq \label{eq_def_f_rho} 
f : \overline{\cR}_k \to \mathbb{R},\quad
\rho\mapsto H(\rho) + \frac{d}{2} \ln \left( 1 - \frac{2}{k} + \sum_{i,j=1}^k \rho_{ij}^2 \right) . \eeq
Then Fact~\ref{fact_Z_rho_bal} yields
	$ \Erw [ Z_{k,\rho}^{(2)} (\cG(n,m)) ] \sim C_n(d,k) \exp \left[ n f(\rho) \right] $.
The function $f$ satisfies $f(\bar{\rho}) = 2 \alpha(d,k)$.
Further, expanding $f$ around $\bar{\rho}$ by writing $\epsilon = \rho - \bar{\rho}$ (so that $\sum_{i,j=1}^k \epsilon_{ij} = 0$) gives
\begin{align} \nonumber \label{eq_aux_expansion_f_bal} f(\rho)& =  H(\bar{\rho}) - \frac{k^2}{2} \sum_{i,j=1}^k \epsilon_{ij}^2 + O \left( \|\epsilon \|^3_2 \right) + \frac{d}{2} \ln \left( 1- \frac{2}{k} + \frac{1}{k^2} + \sum_{i,j=1}^k \epsilon_{ij}^2  \right)
\\ & = f(\bar{\rho}) - \frac{D(d,k)}{2} \| \epsilon \|_2^2 + O(\|\epsilon\|_2^3).  \end{align}
Consequently for $\| \rho - \bar{\rho} \|_2 \leq n^{-5/12}$,
	$$\exp \left[ n f(\rho) \right] = \exp \left[ n f(\bar{\rho}) - n \frac{D(d,k)}{2} \| \rho- \bar{\rho}\|_2^2  + O(n^{-1/4}) \right],$$ whence~(\ref{eq_aux_bound_Z_rho_1}) follows.

We now prove Eq. (\ref{eq_aux_bound_Z_rho_2}). Similarly to (\ref{eq_aux_expansion_f_bal}) and because $f$ is smooth in a neighborhood of $\bar{\rho}$, 
there exist $\eta >0$ and $A >0$ such that for $\| \rho - \bar{\rho} \|_2 \leq \eta$, 
$$ f(\rho) \leq f(\bar{\rho}) - A\| \rho - \bar{\rho} \|_2^2 . $$
Hence, if $\| \rho - \bar{\rho} \|_2 \in (n^{-5/12}, \eta)$, then
		 $$\Erw \left[Z_{k,\rho}^{(2)}(\cG(n,m))\right] = O \left(n^{\frac{1-k^2}{2}} \right)
		 	\exp \left[ n f(\rho) \right] \leq %
					\exp \left[2 n \alpha(d,k) %
						 - A n^{1/6} \right],$$ 
as claimed.
 \end{proof}

\begin{proof}[Proof of \Prop~\ref{prop_second_moment_new_v}] We fix $\eta >0$ and $A>0$ as given by \Lem~\ref{lemma_expansion_saddle}. 
Fixing $\rho_0 \in \cR_{n,k}^\bal(\w,\eta)$ such that $\| \rho_0- \bar{\rho}\|_2 \leq k/n$, we obtain from the first part of \Lem~\ref{lemma_expansion_saddle} that
	\beq\label{eqVV01}
	\Erw [Z^{(2)}_{k, \w, n^{-5/12}}(\cG(n,m))]  \geq \Erw \left[Z^{(2)}_{k,\rho_0} (\gnm) \right] \sim C_n(d,k) \exp \left[ 2 n \alpha(d,k) \right].
	\eeq
On the other hand, 
because $|\cR_{n,k}^\bal(\omega, \eta)|$ is bounded by a polynomial in $n$,
the second part of \Lem~\ref{lemma_expansion_saddle} yields
\begin{align} \label{eqVV02}
\sum_{\substack{\rho \in \cR_{n,k}^\bal(\w,\eta) \\  \| \rho - \bar{\rho} \|_2 > n^{-5/12} }} \Erw \left[ Z_{k,\rho}^{(2)}(\cG(n,m)) \right] &
 \leq  \exp \left[ 2 n \alpha(d,k)-A n^{1/6} + O(\ln n) \right] . %
  \end{align} 
Combining~(\ref{eqVV01}) and~(\ref{eqVV02}), we obtain
 \begin{align*} \Erw [Z^{(2)}_{k, \w, \eta}(\cG(n,m))]  &\sim  \sum_{\substack{\rho \in \cR_{n,k}^\bal(\w,n^{-5/12})}} \Erw \left[ Z_{k,\rho}^{(2)}(\cG(n,m)) \right] \sim 
 \Erw [Z^{(2)}_{k, \w, n^{-5/12}}(\cG(n,m))], \end{align*}
as claimed.
 \end{proof}

\subsection{The leading constant.}
Here we compute the contribution of overlap matrices $\rho \in \cR_{n,k}^{\bal}(\w,n^{-5/12})$. 

\begin{proposition}\label{prop_second_moment_balanced_exact} Assume that
 $k \geq 3$, $d < (k-1)^2$. Then  with $c_n(d,k)$ from~(\ref{eqprop_first_moment_balanced0}),
	$$ \Erw \left[ Z^{(2)}_{k,\w,n^{-5/12}}(\cG(n,m)) \right]
		 \sim \left( | \cB_{n,k}(\omega) | c_n(d, k) \exp \left[ n \alpha(d,k) \right] \right)^2 \exp(d/2) \left( 1- \frac{d}{(k-1)^2} \right)^{-\frac{(k-1)^2}{2} }.   $$ \end{proposition}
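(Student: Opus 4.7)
The plan is to combine the pointwise Gaussian asymptotic (\ref{eq_aux_bound_Z_rho_1}) of \Lem~\ref{lemma_expansion_saddle} with an explicit lattice-Gaussian evaluation. That Lemma yields, uniformly for $\rho\in\cR_{n,k}^\bal(\omega,n^{-5/12})$,
\[
\Erw[Z^{(2)}_{k,\rho}(\cG(n,m))]\sim C_n(d,k)\exp\bigl[2n\alpha(d,k)-nD(d,k)\|\rho-\bar\rho\|_2^2/2\bigr],
\]
so the proposition reduces to evaluating the Gaussian lattice sum
\[
\Sigma:=\sum_{\rho\in\cR_{n,k}^\bal(\omega,n^{-5/12})}\exp\bigl[-nD(d,k)\|\rho-\bar\rho\|_2^2/2\bigr]
\]
and verifying that $C_n(d,k)\exp[2n\alpha(d,k)]\cdot\Sigma$ matches the asserted expression by comparing the powers of $2\pi n$, of $k$, and of $1-d/(k-1)^2$.

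The first step is an orthogonal decomposition $\epsilon:=\rho-\bar\rho=\epsilon^{(m)}+\epsilon^{(c)}$, where $\epsilon^{(m)}_{ij}=a_i/k+b_j/k$ depends only on the marginal deviations $a_i=\rho_{i\star}-1/k$, $b_j=\rho_{\star j}-1/k$ (which satisfy $\sum_i a_i=\sum_j b_j=0$), while $\epsilon^{(c)}$ lies in the $(k-1)^2$-dimensional subspace $V^{(c)}$ of zero-marginal matrices. A short computation gives $\|\epsilon^{(m)}\|_2^2=(\|a\|_2^2+\|b\|_2^2)/k$ and $\|\epsilon\|_2^2=\|\epsilon^{(m)}\|_2^2+\|\epsilon^{(c)}\|_2^2$. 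Since the balance condition $(\rho_{\nix\star},\rho_{\star\nix})\in\cB_{n,k}(\omega)^2$ forces $\|a\|_2,\|b\|_2=O(\omega^{-1}n^{-1/2})$, we have $n\|\epsilon^{(m)}\|_2^2=O(\omega^{-2})=o(1)$ uniformly, so the Gaussian weight on the marginal part contributes $1+o(1)$. Summing over marginals therefore contributes precisely the counting factor $|\cB_{n,k}(\omega)|^2$, leaving a $(k-1)^2$-dimensional lattice-Gaussian sum over $\epsilon^{(c)}$ to evaluate.

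The heart of the argument is this residual sum on $V^{(c)}$. Borrowing the linear-algebra trick from~\cite{WormaldColoring}, I would use the basis $E^{(ij)}=e_i e_j^\top-e_i e_k^\top-e_k e_j^\top+e_k e_k^\top$ for $i,j<k$ of the integer sublattice $\mathbb{Z}^{k\times k}\cap V^{(c)}$; its Gram matrix has the tensor-product form $J\otimes J$ with $J=I_{k-1}+\mathbf{1}_{k-1}\mathbf{1}_{k-1}^\top$, and since $\det J=k$ (as in \Cor~\ref{prop_first_moment_total}), the covolume of $\mathbb{Z}^{k\times k}\cap V^{(c)}$ inside $V^{(c)}$ equals $k^{k-1}$. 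A Riemann-sum approximation---valid because the cutoff $n^{-5/12}$ dwarfs the Gaussian scale $n^{-1/2}$, so the boundary is negligible---then gives
\[
\sum_{\epsilon^{(c)}}\exp\bigl[-nD\|\epsilon^{(c)}\|_2^2/2\bigr]\sim\frac{n^{(k-1)^2}}{k^{k-1}}\Bigl(\frac{2\pi}{nD}\Bigr)^{(k-1)^2/2}=(2\pi n)^{(k-1)^2/2}k^{-(k-1)}D^{-(k-1)^2/2}.
\]
Substituting $D=k^2(1-d/(k-1)^2)$ and $C_n(d,k)=\exp(d/2)k^{k^2}(2\pi n)^{(1-k^2)/2}$, the powers of $2\pi n$ combine as $(1-k^2)/2+(k-1)^2/2=1-k$ and those of $k$ as $k^2-(k-1)-(k-1)^2=k$, producing $c_n(d,k)^2=(2\pi n)^{1-k}k^k$ together with the asserted $(1-d/(k-1)^2)^{-(k-1)^2/2}$ factor. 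The main obstacle is this lattice calculation: pinning down the covolume via the $J\otimes J$ structure and checking that the coset shift of the admissible $\epsilon^{(c)}$-lattice relative to $\bar\rho$ (which depends on the marginals $(r,c)$) introduces only asymptotically negligible corrections to the Riemann approximation.
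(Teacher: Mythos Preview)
Your proposal is correct and follows essentially the same route as the paper: fix the marginals first (the paper centers at the product overlap $\hat\rho=\rho^{(1)}\otimes\rho^{(2)}$, which agrees with your $\bar\rho+\epsilon^{(m)}$ up to a negligible $a_ib_j$ term), observe that the marginal part contributes $1+o(1)$ to the Gaussian weight so that summing over marginals yields the factor $|\cB_{n,k}(\omega)|^2$, and then evaluate the residual $(k-1)^2$-dimensional lattice-Gaussian over the zero-marginal lattice $\cS_n$ via the Wormald linear algebra. Your Gram matrix $J\otimes J$ is precisely the matrix $\cH$ of \Lem~\ref{Lemma_WormaldMatrix}, with the same determinant $k^{2(k-1)}$ and hence covolume $k^{k-1}$, and your bookkeeping of the powers of $2\pi n$ and $k$ matches the paper's.
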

In order to prove the \Prop, we will need the following lemma regarding Gaussian summations over matrices with coefficients in
$\frac1n\mathbb{Z}$ whose lines and columns sums to zero.
Thus, let
	\begin{equation} \label{eq_def_cSn} \cS_n = \cbc{ \left(\epsilon_{i,j}\right)_{\substack{1 \leq i \leq k \\ 1 \leq j \leq k} },\;\; \forall i, j \in [k],\; \epsilon_{i,j} \in \frac1n\mathbb{Z},\;
			\; \forall j \in [k],\;  \sum_{i = 1}^k \epsilon_{i j} = \sum_{i=1}^k \epsilon_{j i} = 0  } .
	\end{equation}

\begin{lemma}\label{lemma_hessian_wormald}
Let $k \geq 2$, $d < (k-1)^2$ and $D >0$ be fixed. Then
\beq \label{eq_aux_v_5} \sum_{\epsilon \in S_n} \exp \left[ - n \frac{D}{2} \| \epsilon\|_2^2 + o(n^{1/2}) \| \epsilon\|_2 \right] \sim \left( \sqrt{ {2 \pi}{n} }\right)^{(k-1)^2} D^{- \frac{(k-1)^2}{2}} k^{-(k-1)}. \eeq
\end{lemma}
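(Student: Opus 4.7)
The plan is to parametrize $\cS_n$ explicitly and reduce the sum to a multidimensional Gaussian sum, whose leading order follows from a short determinant computation.

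First I would fix a basis for $\cS_n$. Since the row-sum and column-sum constraints are linearly dependent only through the identity $\sum_{i,j}\epsilon_{ij}=0$, the lattice $\cS_n$ has rank $k^2-(2k-1)=(k-1)^2$, and the map sending $\epsilon\in\cS_n$ to its upper-left $(k-1)\times(k-1)$ block $x=(\epsilon_{ij})_{i,j\in[k-1]}$ is a bijection onto $\frac{1}{n}\mathbb{Z}^{(k-1)^2}$: the last column, last row, and bottom-right entry are determined by $\epsilon_{ik}=-\sum_{j<k}x_{ij}$, $\epsilon_{kj}=-\sum_{i<k}x_{ij}$, and $\epsilon_{kk}=\sum_{i,j<k}x_{ij}$. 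Substituting these back into $\|\epsilon\|_2^2$ yields
\[
\|\epsilon\|_2^2 \;=\; \sum_{i,j<k}x_{ij}^2 \;+\; \sum_{i<k}\Bigl(\sum_{j<k}x_{ij}\Bigr)^{\!2} \;+\; \sum_{j<k}\Bigl(\sum_{i<k}x_{ij}\Bigr)^{\!2} \;+\; \Bigl(\sum_{i,j<k}x_{ij}\Bigr)^{\!2} \;=\; \langle Qx,x\rangle,
\]
where $Q$ is the symmetric $(k-1)^2\times(k-1)^2$ matrix with entries $Q_{(i,j),(i',j')}=(1+\delta_{ii'})(1+\delta_{jj'})$.

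Next I compute $\det Q$. The factored form of the entries gives $Q = M\otimes M$ with $M=I_{k-1}+J_{k-1}$, where $J_{k-1}$ is the all-ones matrix. Since the eigenvalues of $J_{k-1}$ are $k-1$ (once) and $0$ ($k-2$ times), $M$ has eigenvalues $k$ (once) and $1$ ($k-2$ times), hence $\det M=k$. The tensor-product determinant formula then yields $\det Q=(\det M)^{2(k-1)}=k^{2(k-1)}$, so $\sqrt{\det Q}=k^{k-1}$, which is the source of the $k^{-(k-1)}$ factor on the right-hand side of~(\ref{eq_aux_v_5}).

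With the quadratic form in hand, the sum in~(\ref{eq_aux_v_5}) is a Riemann-sum approximation to the Gaussian integral $\int_{\mathbb{R}^{(k-1)^2}}\exp(-\tfrac{nD}{2}\langle Qx,x\rangle)\,dx$, which evaluates to $(2\pi)^{(k-1)^2/2}(nD)^{-(k-1)^2/2}(\det Q)^{-1/2}$. Since the volume of a fundamental domain of the lattice $\tfrac{1}{n}\mathbb{Z}^{(k-1)^2}$ is $n^{-(k-1)^2}$, Riemann-summing gives exactly the claimed $(2\pi n)^{(k-1)^2/2}D^{-(k-1)^2/2}k^{-(k-1)}$. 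The remaining issue is the perturbation $o(n^{1/2})\|\epsilon\|_2$ in the exponent; completing the square rewrites
\[
-\tfrac{nD}{2}\|\epsilon\|_2^2+o(n^{1/2})\|\epsilon\|_2 \;=\; -\tfrac{nD}{2}\bigl(\|\epsilon\|_2-o(n^{-1/2})\bigr)^{\!2}+o(1),
\]
so the perturbation shifts the Gaussian peak by $o(n^{-1/2})$ and modifies the exponent by $o(1)$, which changes the asymptotic only by a factor $1+o(1)$. Controlling the tail $\|\epsilon\|_2\geq n^{-5/12}$ is then routine: on this region the dominant quadratic term $-\tfrac{nD}{4}\|\epsilon\|_2^2$ is already $\leq -\tfrac{D}{4}n^{1/6}$, which overwhelms any polynomial-in-$n$ count of lattice points. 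The only subtlety worth highlighting is this uniform tail control, since the perturbation has no a priori bound outside $\|\epsilon\|_2=O(n^{-1/2})$; once truncation is in place the identification of the constant reduces to the determinant computation above.
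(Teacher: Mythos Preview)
Your argument is correct and follows essentially the same route as the paper: parametrise $\cS_n$ by the upper-left $(k-1)\times(k-1)$ block, express $\|\epsilon\|_2^2$ as the quadratic form $\langle Qx,x\rangle$ with $\det Q=k^{2(k-1)}$ (this is exactly the content of \Lem~\ref{Lemma_WormaldMatrix}, which the paper cites from~\cite{WormaldColoring} rather than deriving the tensor-product structure $Q=M\otimes M$ as you do), and then approximate the lattice sum by the Gaussian integral. Your handling of the $o(n^{1/2})\|\epsilon\|_2$ perturbation via completing the square is in fact slightly more explicit than the paper's, which simply absorbs it into the Euler--Maclaurin step without further comment.
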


\Lem~\ref{lemma_hessian_wormald} and its proof are very similar to an argument used in~\cite[\Sec~3]{WormaldColoring}.
In fact, \Lem~\ref{lemma_hessian_wormald} follows from

 \begin{lemma}[{\cite[\Lem~6 (b) and 7 (c)]{WormaldColoring}}]\label{Lemma_WormaldMatrix}
There is a $(k-1)^2\times(k-1)^2$-matrix $\cH=(\cH_{(i,j),(k,l)})_{i,j,k,l\in[k-1]}$ such that for any $\eps=(\eps_{ij})_{i,j\in[k]}\in\cS_n$ we have
	$$\sum_{i,j,i',j'\in[k-1]}\cH_{(i,j),(i',j')}\eps_{ij}\eps_{i'j'}=\norm\eps_2^2.$$
This matrix $\cH$ is positive definite and $\det\cH = k^{2(k-1)} $.
 \end{lemma}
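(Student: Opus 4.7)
The plan is to exhibit $\cH$ explicitly as the Gram matrix of a certain linear parametrisation of $\cS_n$, and then recognise it as a Kronecker product, from which the positive definiteness and the determinant follow at once from elementary spectral analysis. First I would parametrise $\cS_n$ by the $(k-1)^2$ entries $(\eps_{ij})_{i,j\in[k-1]}$: the zero row-sum and zero column-sum constraints force
$$\eps_{ik} = -\sum_{j=1}^{k-1}\eps_{ij},\qquad \eps_{kj} = -\sum_{i=1}^{k-1}\eps_{ij},\qquad \eps_{kk} = \sum_{i,j=1}^{k-1}\eps_{ij}$$
(the two ways of computing $\eps_{kk}$, using either the last row or the last column, agreeing thanks to the constraint $\sum_{i,j}\eps_{ij}=0$). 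Substituting these expressions into $\|\eps\|_2^2 = \sum_{a,b=1}^k \eps_{ab}^2$ yields
$$\|\eps\|_2^2 = \sum_{i,j=1}^{k-1}\eps_{ij}^2 + \sum_{i=1}^{k-1}\Bigl(\sum_{j=1}^{k-1}\eps_{ij}\Bigr)^2 + \sum_{j=1}^{k-1}\Bigl(\sum_{i=1}^{k-1}\eps_{ij}\Bigr)^2 + \Bigl(\sum_{i,j=1}^{k-1}\eps_{ij}\Bigr)^2,$$
which is manifestly a quadratic form in the $(k-1)^2$ free variables. The symmetric matrix representing this form is by definition $\cH$, establishing its existence.

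Next I would read off the entries by expanding the four squares and collecting the coefficient of $\eps_{ij}\eps_{i'j'}$, which gives
$$\cH_{(i,j),(i',j')} = \delta_{ii'}\delta_{jj'} + \delta_{ii'} + \delta_{jj'} + 1 = (\delta_{ii'}+1)(\delta_{jj'}+1).$$
The right-hand side is precisely the $((i,j),(i',j'))$-entry of the Kronecker product $(I_{k-1}+J_{k-1})\otimes(I_{k-1}+J_{k-1})$, where $I_{k-1}$ and $J_{k-1}$ denote the $(k-1)\times(k-1)$ identity and all-ones matrix. Recognising this tensor-product decomposition is the crux of the argument; from here everything reduces to a spectral analysis of the single $(k-1)\times(k-1)$ matrix $I_{k-1}+J_{k-1}$.

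Finally, since $J_{k-1}$ has eigenvalue $k-1$ on the span of $(1,\ldots,1)^\top$ and eigenvalue $0$ on its $(k-2)$-dimensional orthogonal complement, $I_{k-1}+J_{k-1}$ has eigenvalues $k$ (with multiplicity $1$) and $1$ (with multiplicity $k-2$). In particular $I_{k-1}+J_{k-1}$ is positive definite with $\det(I_{k-1}+J_{k-1}) = k$. Since the Kronecker product of two positive definite matrices is positive definite, $\cH$ is positive definite; and by the identity $\det(A\otimes B) = (\det A)^{n}(\det B)^{m}$ for $A, B$ of sizes $m\times m$ and $n\times n$,
$$\det\cH = (\det(I_{k-1}+J_{k-1}))^{2(k-1)} = k^{2(k-1)}.$$
The only substantive step is spotting the Kronecker factorisation $\cH = (I+J)^{\otimes 2}$; after that observation, the remainder is routine linear algebra with no further input required.
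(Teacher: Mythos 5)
Your proof is correct. Note that the paper does not actually prove this lemma at all: it is imported verbatim by citation from Kemkes, P\'erez-Gim\'enez and Wormald, where the quadratic form and its determinant are obtained through a somewhat more computational analysis of the constrained Hessian. Your argument is a clean self-contained replacement: eliminating the last row and column via the zero row- and column-sum constraints is exactly the right parametrisation of $\cS_n$, the coefficient extraction $\cH_{(i,j),(i',j')}=(\delta_{ii'}+1)(\delta_{jj'}+1)$ is correct, and recognising this as $(I_{k-1}+J_{k-1})\otimes(I_{k-1}+J_{k-1})$ immediately gives both positive definiteness (Kronecker product of positive definite matrices) and $\det\cH=\bigl(\det(I_{k-1}+J_{k-1})\bigr)^{2(k-1)}=k^{2(k-1)}$, since $I_{k-1}+J_{k-1}$ has eigenvalues $k$ (once) and $1$ (with multiplicity $k-2$). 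The only point worth stating explicitly is that the two expressions for $\eps_{kk}$ coming from the last row and the last column agree, which you do note; beyond that, the identity is purely algebraic, so the restriction of the entries to $\frac1n\mathbb{Z}$ plays no role, exactly as needed for the application in the Gaussian summation of Lemma 5.8.
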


 \begin{proof}[Proof of \Lem~\ref{lemma_hessian_wormald}] 
Together with the Euler-Maclaurin formula and \Lem~\ref{Lemma_WormaldMatrix}, a Gaussian integration yields
 \begin{align*}
 	\sum_{\epsilon \in S_n} \exp &\left[ - n \frac{D}{2} \| \epsilon\|_2^2 + o(n^{1/2}) \| \epsilon\|_2 \right]=
	\sum_{\epsilon \in \left( \mathbb{Z}/n \right)^{(k-1)^2} }
		\exp \left[ - n \frac{D}{2}\sum_{i,j,i',j'\in[k-1]}\cH_{(i,j),(i',j')}\eps_{ij}\eps_{i'j'} + o(n^{1/2}) \| \epsilon \|_2 \right]  \\ 
		& \sim n^{(k-1)^2} \int \dots  \int 
		\exp \left[ - n \frac{D}{2}\sum_{i,j,i',j'\in[k-1]}\cH_{(i,j),(i',j')}\eps_{ij}\eps_{i'j'}
	\right] 
			\mathrm d\eps_{11}\cdots\mathrm d\eps_{(k-1)(k-1)}
 \\ & \sim \left( \sqrt{2 \pi n} \right)^{(k-1)^2} D^{\frac{-(k-1)^2}{2}} (\det\cH)^{-1/2} 
	\sim \left( \sqrt{2 \pi n} \right)^{(k-1)^2} D^{\frac{-(k-1)^2}{2}} k^{-(k-1)} %
 , \end{align*} 
as desired.
  \end{proof}

\begin{proof}[Proof of \Prop~\ref{prop_second_moment_balanced_exact}] For  $\rho^{(1)}, \rho^{(2)} \in \cB_{n,k}(\w)$, we introduce the set of overlap matrices 
$$\cR_{n,k}^{\mathrm{bal}} (\w,n^{-5/12}, \rho^{(1)}, \rho^{(2)}) = \{ \rho \in \cR_{n,k}^{\mathrm{bal}}(\w, n^{-5/12}): \rho_{\nix\star}= \rho^{(1)}, 
		\rho_{\star\nix} = \rho^{(2)} \}.$$
In particular, $\cR_{k,n}^\bal(\w, n^{-5/12}, \rho^{(1)}, \rho^{(2)})$ contains the ``product'' overlap $\rho^{(1)} \otimes \rho^{(2)}$ defined by
$ ( \rho^{(1)} \otimes \rho^{(2)})_{ij} = \rho^{(1)}_i \rho^{(2)}_j . $
Because $\rho^{(1)}$ and $\rho^{(2)}$ are $(\w,n)$-balanced, we find
\beq \label{eq_aux_v_balanced_norm} \| \rho^{(1)} \otimes \rho^{(2)} - \bar{\rho} \|_2 = o(n^{-1/2}). \eeq
With these definitions we see that
\beq \label{eq_aux_v_1} \Erw \left[ Z_{k,\w, n^{-5/12}}^{(2)}(\cG(n,m)) \right] = \sum_{ \rho^{(1)} \in \cB_{n,k}(\w)} \sum_{ \rho^{(2)} \in \cB_{n,k}(\w)} \sum_{\rho \in \cR_{n,k}^{\bal}(\w,n^{-5/12},\rho^{(1)}, \rho^{(2)})} \Erw \left[ Z_{k,\rho}^{(2)}(\cG(n,m)) \right]. \eeq

Let us fix from now on two $(\w,n)$-balanced colour densities $\rho^{(1)}, \rho^{(2)}$ and simplify the notation by writing 
$$ \widehat{\cR} = \cR_{n,k}^{\mathrm{bal}} (\w,n^{-5/12}, \rho^{(1)}, \rho^{(2)}), \qquad \widehat{\rho} = \rho^{(1)} \otimes \rho^{(2)}. $$
Thus, we are going to evaluate $$\Sigma_1=\sum_{\rho \in \widehat{\cR}} \Erw \left[ Z_{k,\rho}^{(2)}(\cG(n,m)) \right].$$
 Eq. (\ref{eq_aux_bound_Z_rho_1}) of \Lem~\ref{lemma_expansion_saddle} gives
\begin{align} \label{eq_aux_v_add_1}  
\Sigma_1
 & \sim  \sum_{\substack{\rho \in \widehat{\cR}
 	 }} C_n(d,k)  \exp \left[ 2 n \alpha(d,k) - n \frac{D(d,k)}{2} \| \rho- \bar{\rho} \|_2^2 \right]. \end{align}
Further, by the triangle inequality,
 \beq \label{eq_aux_v_triangle}  \| \rho- \widehat{\rho} \|_2 - \| \widehat{\rho}  - \bar{\rho} \|_2 \leq \| \rho - \bar{\rho} \|_2 \leq  \| \rho- \widehat{\rho}\|_2 + \| \widehat{\rho}  - \bar{\rho} \|_2  .\eeq
 Along with (\ref{eq_aux_v_balanced_norm}) this gives $\| \rho - \bar{\rho} \|_2^2 = \| \rho- \widehat{\rho} \|_2^2  + o(n^{-1/2}) \| \rho - \widehat{\rho} \|_2  + o(n^{-1}). $
 Hence by replacing in (\ref{eq_aux_v_add_1}) we obtain with the notations of \Lem~\ref{lemma_expansion_saddle}
\begin{align}  \nonumber \Sigma_1&
	\sim \sum_{\substack{\rho \in \widehat{\cR}}} C_n(d,k) \exp \left[ 2 n \alpha(d,k) - n \frac{D(d,k)}{2} \| \rho- \widehat{\rho} \|_2^2 \right. 
			\left. + o(n^{1/2}) \| \rho- \widehat{\rho} \|_2 + o(1) \right]\\ 
	&  \sim C_n(d,k) \exp \left[  2 n \alpha(d,k) \right] \sum_{\substack{\rho \in \widehat{\cR}}}  
		\exp \left[- n \frac{D(d,k)}{2} \| \rho-\widehat{\rho} \|_2^2 \right.  \left.\phantom{\frac{D}{2}}+o(n^{1/2}) \| \rho- \widehat{\rho} \|_2  \right]. \label{eq_aux_v_3}
\end{align}
Moreover, with $\cS_n$ as in (\ref{eq_def_cSn}), it follows from (\ref{eq_aux_v_triangle}) that 
$$\left \{ \widehat{\rho} + \epsilon:  \epsilon \in \cS_n, \| \epsilon \|_2  \leq {n^{-5/12}}/2  \right \} \subset 
	\left \{ \rho \in \widehat{\cR}: \| \rho-\bar{\rho} \|_2 \leq n^{-5/12}  \right \} 
	\subset \left \{ \widehat{\rho} + \epsilon:  \epsilon \in \cS_n \right  \}. $$
Hence, 
 \begin{eqnarray*}
 \Sigma_2&= & C_n(d,k) \exp \left[  2 n \alpha(d,k) \right]
 \sum_{\substack{ \epsilon \in \cS_n \\ \| \epsilon\|_2 > n^{-5/12}/2}} \exp \left[ - n \frac{D(d,k)}{2} \| \epsilon \|_2^2 (1+o(1))\right]
  \\
 &=&  C_n(d,k) \exp \left[  2 n \alpha(d,k) \right]\sum_{\substack{l \in \mathbb{Z}/n \\ l > n^{-5/12}/2}} \sum_{ \substack{\epsilon \in S_n \\ \| \epsilon\|_2 = l}}  \exp \left[ - n l^2 \frac{D(d,k)}{2} (1+o(1))\right]  \\ 
 &= & C_n(d,k) \exp \left[  2 n \alpha(d,k) \right]O \left(n^{k^2} \right) \exp \left[ - \frac{D(d,k)}{2} n^{1/6} \right] . 
 \end{eqnarray*}
Consequently, (\ref{eq_aux_v_3}) yields $\Sigma_2=o(\Sigma_1)$.
Thus, we obtain
from \Lem~\ref{lemma_hessian_wormald} 
 that
	\begin{align}
	\Sigma_1&\sim\nonumber
	C_n(d,k) \exp \left[  2 n \alpha(d,k) \right]\sum_{\epsilon \in \cS_n} \exp \left[ - n \frac{D(d,k)}{2} \| \epsilon\|_2^2 + o(n^{-1/2}) \| \epsilon\|_2 \right].  \\
	 	&\sim C_n(d,k) \exp \left[  2 n \alpha(d,k) \right] \left( \sqrt{2\pi n}\right)^{(k-1)^2}k^{-k(k-1)}\left(1-\frac{d}{(k-1)^2} \right)^{-\frac{(k-1)^2}{2}}. \label{eqVV91}
	\end{align}
In particular, the last expression  is independent of the choice of the vectors $\rho^1, \rho^2$ that defined $\widehat{\cR}$.
Therefore, substituting~(\ref{eqVV91}) in the decomposition (\ref{eq_aux_v_1}) completes the proof of \Prop~\ref{prop_second_moment_balanced_exact}.
 \end{proof}

 \begin{proof}[Proof of \Prop s~\ref{prop_second_moment_bal_vanilla} and \ref{prop_first_moment_tame_bal}] \label{subsec_proof_final}
 First observe that $$ \exp \left (\sum_{l \geq 2} \lambda_l \delta_l^2 \right) = \left( 1- \frac{d}{(k-1)^2}\right)^{-\frac{(k-1)^2}{2}} \exp \left( - \frac{d}{2} \right) .$$
 \Prop~\ref{prop_second_moment_bal_vanilla} is immediately obtained by combining \Prop~\ref{prop_first_moment_balanced} with \Prop s~\ref{prop_ach_naor},~\ref{prop_second_moment_new_v} and \ref{prop_second_moment_balanced_exact}.
On the other hand, \Prop~\ref{prop_first_moment_tame_bal} is obtained by combining \Prop~\ref{prop_first_moment_balanced} with \Prop s~\ref{prop_aco_vil},~\ref{prop_second_moment_new_v} and \ref{prop_second_moment_balanced_exact}.
\end{proof}

\subsection{Proof of \Prop~\ref{prop_ach_naor}}\label{Sec_prop_ach_naor}
Let 
	\begin{equation}\label{eqf}
	f:\rho\in\overline{\cR}_k\ra\RR,\quad\rho\mapsto H(\rho) + \frac{d}{2} \ln\bc{ 1 - \frac{2}{k} + \norm\rho_2^2 }.
	\end{equation}
The following is a consequence of  Fact~\ref{fact_Z_rho_bal}.

\begin{fact}\label{fact_upper_lower_bounds_Z_rho}
Let $k \geq 3$, $d \in (0, \infty)$ and $\rho\in\cR_{n,k}^\bal(\w)$.
Then
	$\Erw [ Z_{k,\rho}^{(2)} (\gnm) ]=\exp(nf(\rho)+O(\ln n)).$
\end{fact}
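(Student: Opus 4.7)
The plan is to derive this estimate as a direct consequence of part~(2) of Fact~\ref{fact_Z_rho_bal}, which already supplies an asymptotic equivalent of $\Erw[Z^{(2)}_{k,\rho}(\gnm)]$ for $\rho\in\cR_{n,k}^\bal(\w)$. Concretely, Fact~\ref{fact_Z_rho_bal}(2) gives
\[
\Erw[Z^{(2)}_{k,\rho}(\gnm)]\sim \frac{\sqrt{2\pi}\,n^{(1-k^2)/2}}{\prod_{i,j=1}^k\sqrt{2\pi \rho_{ij}}}\exp\!\left[\frac{d}{2}+nH(\rho)+m\ln\!\left(1-\frac{2}{k}+\norm{\rho}_2^2\right)\right],
\]
so the task reduces to checking that each of the three factors on the right equals $\exp(nf(\rho)+O(\ln n))$ after one collects terms.

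First I would handle the prefactor. By the definition of $\cR_{n,k}^\bal(\w)\subseteq\cR_{n,k}^{\rm int}$, every entry satisfies $\rho_{ij}>k^{-3}$, so the product $\prod_{i,j}\sqrt{2\pi\rho_{ij}}$ is bounded between two positive constants depending only on $k$. Hence the whole prefactor is of order $n^{(1-k^2)/2}$, which contributes an $\exp(O(\ln n))$ term. The additive constant $d/2$ in the exponent is likewise absorbed into $O(\ln n)$.

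Next I would combine the exponential terms with the definition~\eqref{eqf} of $f$. Since $m=\lceil dn/2\rceil= dn/2+O(1)$, we have
\[
m\ln\!\left(1-\tfrac{2}{k}+\norm{\rho}_2^2\right)=\tfrac{dn}{2}\ln\!\left(1-\tfrac{2}{k}+\norm{\rho}_2^2\right)+O(1),
\]
because the logarithm is bounded (as $\norm{\rho}_2^2\le 1$ and $1-2/k+\norm{\rho}_2^2\ge 1-2/k+1/k^2>0$ uniformly in $\rho$). Adding $nH(\rho)$ gives exactly $nf(\rho)+O(1)$ in the exponent. Combining with the polynomial prefactor then yields $\Erw[Z^{(2)}_{k,\rho}(\gnm)]=\exp(nf(\rho)+O(\ln n))$, as claimed.

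There is no substantial obstacle: the work was all done in the proof of Fact~\ref{fact_Z_rho_bal}, and what remains is merely bookkeeping to absorb the polynomial prefactor, the constant $d/2$, and the $O(1)$ error from $m=dn/2+O(1)$ into the $O(\ln n)$ term.
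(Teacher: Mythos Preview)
Your argument is correct and is exactly the approach the paper takes: the paper simply records this fact as a consequence of Fact~\ref{fact_Z_rho_bal} without further elaboration, and your proposal spells out precisely the bookkeeping (bounding the entries $\rho_{ij}>k^{-3}$ to control the prefactor, and using $m=dn/2+O(1)$ together with the uniform bounds on the logarithm) needed to absorb everything into the $O(\ln n)$ term.
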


\noindent
Fact~\ref{fact_upper_lower_bounds_Z_rho} reduces our task to studying the function $f(\rho)$.
For the range of $d$ covered by \Prop~\ref{prop_ach_naor}, this analysis is the main technical achievement of~\cite{AchNaor},
where (essentially) the following statement is proved.

\begin{lemma}\label{Lemma_AchNaorMain}
Assume that $k \geq 3$ and that $d \leq 2(k-1) \ln (k-1)$. For any $n >0$ and any $(\w,n)$-balanced overlap matrix $\rho$ we have
	\begin{equation}\label{eqAchNaorMain}
	f(\rho) \leq f(\bar{\rho}) - \frac{2(k-1) \ln (k-1) - d}{4(k-1)^2}  \left( k^2 \| \rho \|_2^2 -1 \right) + o(1). 
	\end{equation}	
\end{lemma}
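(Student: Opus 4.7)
The plan is to leverage the Achlioptas--Naor bound at the critical degree $\dAN:=2(k-1)\ln(k-1)$ and to extract the quadratic correction from the affine dependence of $f$ on $d$. Setting
\[ y:=\frac{k^2\norm{\rho}_2^2-1}{(k-1)^2}\ge 0 \]
and using $\norm{\bar\rho}_2^2=1/k^2$ to write $1-2/k+\norm{\rho}_2^2=(1-1/k)^2(1+y)$, one obtains the identity
\[ f(\rho)-f(\bar\rho)=\bigl[f_{\dAN}(\rho)-f_{\dAN}(\bar\rho)\bigr]-\frac{\dAN-d}{2}\ln(1+y), \]
where $f_{\dAN}$ denotes the functional~\eqref{eqf} with $d$ replaced by $\dAN$. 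So it suffices to bound the two right-hand terms.

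The first term is handled by invoking the Achlioptas--Naor theorem at $d=\dAN$: for every $(\omega,n)$-balanced $\rho$ one has $f_{\dAN}(\rho)\le f_{\dAN}(\bar\rho)+o(1)$. For the second term, balancedness forces $\norm{\rho}_2^2\le \max_{i,j}\rho_{ij}\le\max_i\rho_{i\star}\le 1/k+o(1)$, hence $y\le 1/(k-1)+o(1)$. On this bounded range $\ln(1+y)/y$ is decreasing, so
\[ \ln(1+y)\ge(k-1)\ln\!\bigl(\tfrac{k}{k-1}\bigr)\cdot y\ge\tfrac12 y, \]
where the last step is the numerical fact that $(k-1)\ln(k/(k-1))\ge 1/2$ for all $k\ge 3$ (this sequence is increasing in $k$ with limit $1$, and at $k=3$ equals $2\ln(3/2)\approx 0.81$). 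Plugging both bounds into the identity yields
\[ f(\rho)-f(\bar\rho)\le -\frac{\dAN-d}{4}y+o(1)=-\frac{2(k-1)\ln(k-1)-d}{4(k-1)^2}\bigl(k^2\norm{\rho}_2^2-1\bigr)+o(1), \]
which is precisely the claimed bound.

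The main obstacle is the Achlioptas--Naor inequality at $d=\dAN$; it is the crux of the entire argument and is what forces the threshold $\dAN$ into the final coefficient. I would establish it along the lines of~\cite{AchNaor}: first project $\rho$ onto the subspace of matrices with exactly uniform marginals at a cost of $o(1)$ in $f$, which is justified by Lipschitz continuity of $f$ on the relative interior of $\overline{\cR}_k$ together with the assumption $\rho\in\cR^{\rm int}_{n,k}$ keeping entries bounded away from $0$; then analyze stationary points of $f_{\dAN}$ under the doubly-stochastic constraint via Lagrange multipliers, showing that the optimiser takes at most two distinct values in each row and column; finally reduce to a one-parameter family whose maximum is at $\bar\rho$. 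The specific value $\dAN=2(k-1)\ln(k-1)$ is precisely the point at which the second derivative of this one-variable problem vanishes at $\bar\rho$, which is what makes $\bar\rho$ barely the global maximum and why the same constant surfaces in the final bound. The affine decomposition and the lower bound on $\ln(1+y)$ are essentially routine in comparison.
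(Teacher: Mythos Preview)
Your proof is correct. Both your argument and the paper's ultimately rest on the same black box, namely the Achlioptas--Naor optimisation result that $\bar\rho$ maximises $f$ over doubly stochastic matrices when $d\le d_{\mathrm{AN}}=2(k-1)\ln(k-1)$, together with a continuity step to pass from exactly doubly stochastic to $(\omega,n)$-balanced matrices at an $o(1)$ cost.

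The organisation differs slightly. The paper simply cites~\cite{AchNaor} for the full inequality~(\ref{eqAchNaorMain}), including the quadratic term, on doubly stochastic matrices, and then invokes uniform continuity of $f$ on the compact set $\overline\cR_k$. You instead invoke Achlioptas--Naor only at the single value $d=d_{\mathrm{AN}}$, where the quadratic term vanishes, and then recover the quadratic coefficient yourself via the affine identity $f(\rho)-f(\bar\rho)=[f_{d_{\mathrm{AN}}}(\rho)-f_{d_{\mathrm{AN}}}(\bar\rho)]-\frac{d_{\mathrm{AN}}-d}{2}\ln(1+y)$ and the elementary bound $\ln(1+y)\ge y/2$ on the relevant range $y\in[0,1/(k-1)+o(1)]$. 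This is a clean way to see exactly where the constant $\frac{1}{4(k-1)^2}$ comes from and why $d_{\mathrm{AN}}$ is the natural threshold, whereas the paper's version leaves that buried inside the citation. Conversely, the paper's uniform-continuity argument is marginally more robust in that it does not need the interior assumption $\rho\in\cR_{n,k}^{\rm int}$ you mention, since $f$ extends continuously to the boundary of $\overline\cR_k$.
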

\begin{proof}
For $\rho$ such that $\sum_{i=1}^k \rho_{ij} = \sum_{i=1}^k \rho_{ji} = 1/k$
the bound~(\ref{eqAchNaorMain}) is proved in \cite[\Sec~3]{AchNaor}.
This implies that~(\ref{eqAchNaorMain}) also holds for $\rho\in\cR_{n,k}^\bal(\w)$, because $f$ is uniformly continuous on the compact set $\overline\cR_k.$
\end{proof}

Now, assume that $k$ and $d$ satisfy the assumptions of \Prop~\ref{prop_ach_naor} and let $\eta >0$ be any fixed number.
The function $\overline{\cR} \to \mathbb{R}$,  $\rho \to k^2 \| \rho\|_2$ is smooth, strictly convex and attains its global minimum of $1$ at $\rho = \bar{\rho}$.
Consequently, there exist $c_k >0$ such that if $\| \rho - \bar{\rho} \|_2 > \eta$, then $\left(k^2\|\rho\|_2-1\right) \geq c_k$.
Hence, Fact~\ref{fact_upper_lower_bounds_Z_rho} and \Lem~\ref{Lemma_AchNaorMain} yield
	\begin{align}\label{eqAchNaor11}
	\sum_{ \substack{\rho \in \cR_{n,k}^\bal(\w) \\ \| \rho - \bar{\rho} \|_2 > \eta  }}
		\Erw \left[ Z_{k,\rho}^{(2)}(\gnm) \right] \leq \exp \left[ n f(\bar{\rho}) - n c_k d_k + o(n) \right],\quad\mbox{where }
			d_k = \frac{2(k-1) \ln (k-1)- d}{4 (k-1)^2} > 0.
	\end{align}
On the other hand, fixing any $\rho_0 \in \cR_{n,k}^\bal(\w)$ such that $\| \rho_0- \bar{\rho}\|_2 \leq k/n$, we obtain from
	Fact~\ref{fact_upper_lower_bounds_Z_rho} that
	\begin{align}\label{eqAchNaor12}
	\sum_{ \substack{\rho \in \cR_{n,k}^\bal(\w) \\ \| \rho - \bar{\rho} \|_2 \leq \eta  }} \Erw \left[ Z_{k,\rho}^{(2)} (\gnm) \right] \geq  
	\Erw \left[ Z_{k,\rho_0}^{(2)} (\gnm) \right] \geq 
		\exp \left[ n f(\bar{\rho}) +O(\ln n)\right]. \end{align}
Combining~(\ref{eqAchNaor11}) and~(\ref{eqAchNaor12}), we conclude that
$\Erw[ Z_{k,\w}^2(\gnm)] \sim \Erw[ Z_{k,\w, \eta}^{(2)}(\gnm)]$, 
thereby completing  the proof of \Prop~\ref{prop_ach_naor}.

\subsection{Proof of \Prop~\ref{prop_aco_vil}}\label{Sec_prop_aco_vil}
We continue to let $f$ denote the function from~(\ref{eqf}).
Let $\cB$ be the set of all $\rho\in\overline\cR_k$ such that
	$$\sum_{j=1}^k\rho_{ij}=\sum_{j=1}^k\rho_{ji}=1/k\quad\mbox{for all }i\in[k].$$
Further, let us say that $\rho\in\overline\cR_k$ is {\em $s$-stable} if $\rho$ has precisely $s$ entries in the interval $(0.51/k,1]$.
Then any $\rho\in\cB$ is $s$-stable for some $s\in\{0,1,\ldots,k\}$.
In addition,  let $\kappa=\ln^{20}k/k$ and
let us call $\rho\in\overline\cR_k$ {\em separable} if $k\rho_{ij}\not\in(0.51,1-\kappa)$ for all $i,j\in[k]$.
The following lemma summarizes the analysis of the function $f$ performed in \cite[\Sec~4]{Danny}.

\begin{lemma}\label{Lemma_Danny}
For any $c>0$ there is $k_0>0$ such that for all $k>k_0$ 
and all $d$ such that $(2k-1)\ln k-c\leq d\leq(2k-1)\ln k$ the following statements are true.
\begin{enumerate}
\item If $1\leq s<k$, then for all separable $s$-stable $\rho\in\cB$ we have $f(\rho)<f(\bar\rho)$.
\item If $\rho\in\cB$ is $0$-stable and $\rho\neq\bar\rho$, then $f(\rho)<f(\bar\rho)$.
\item If $d=(2k-1)\ln k-2$, then for all separable, $k$-stable $\rho\in\cB$ 
		we have $f(\rho)<f(\bar\rho)$.
\end{enumerate}
\end{lemma}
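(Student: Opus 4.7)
The plan is to import the function-analytic results from~\cite[\Sec~4]{Danny} and observe that the function $f$ defined in~(\ref{eqf}) coincides with the rate function analyzed there, since both are obtained by expanding $\frac1n\ln\Erw[Z_{k,\rho}^{(2)}(\gnm)]$ for overlaps on $\cB$. The three assertions correspond to a case split on the stability index $s$, which counts how many of the $k^2$ entries of $k\rho$ exceed $0.51$; separability rules out the ``intermediate'' regime where entries lie in $(0.51,1-\kappa)$, so it suffices to treat the three extreme regimes $s=0$, $1\leq s<k$, and $s=k$ separately.

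For claim~(2), I would Taylor-expand $f$ around $\bar\rho$. Writing $\rho = \bar\rho + \varepsilon$ with zero row and column sums, and using that $\|\rho\|_2^2 = 1/k^2 + \|\varepsilon\|_2^2$ because $\langle\bar\rho,\varepsilon\rangle = 0$, the Hessian gives a leading quadratic term of $-\tfrac{1}{2}\bigl(k^2 - \tfrac{d}{(1-1/k)^2}\bigr)\|\varepsilon\|_2^2$. Since $d = O(k\ln k) \ll k^2$ for $k \geq k_0$, this term is strictly negative, making $\bar\rho$ a strict local maximum. To extend this to the entire $0$-stable region $\{\rho \in \cB : k\rho_{ij} \leq 0.51 \text{ for all } i,j\}$, one needs only a uniform bound on the cubic remainder: this follows because the Hessian of $f$ remains uniformly negative definite on the linear subspace of matrices with zero marginals throughout that closed region, so $\bar\rho$ is in fact the unique maximizer.

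For claim~(1), the strategy is a structural decomposition. By separability a $s$-stable $\rho\in\cB$ has exactly $s$ entries with $k\rho_{ij}\geq 1-\kappa$ and every other entry satisfies $k\rho_{ij}\leq 0.51$. After permuting rows and columns one may assume the large entries occur at positions $(i_1,j_1),\dots,(i_s,j_s)$ with distinct $i_\ell$ and distinct $j_\ell$. The marginal constraints defining $\cB$ then force each large entry to lie within $O(\kappa/k)$ of $1/k$, and every remaining entry in a row $i_\ell$ or column $j_\ell$ to be of order $\kappa/k$. One splits the contribution to $f(\rho)$ into a ``locked'' part, coming from the $s$ dominant entries together with their rows and columns, and a ``residual'' $(k-s)\times(k-s)$ block of small entries which itself is close to the rescaled barycentre of $\overline\cR_{k-s}$. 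Estimating each piece via its own local Taylor expansion and aggregating yields the strict inequality provided $d<\dc$; this is the content of~\cite[Propositions~4.3--4.5]{Danny}.

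Claim~(3) is the $s=k$ case, where $\rho$ is concentrated within $O(\kappa/k)$ of the rescaled permutation matrix $\tfrac{1}{k}\mathrm{Id}$. A direct computation gives $f(\tfrac{1}{k}\mathrm{Id})-f(\bar\rho) = -\ln k - \tfrac{d}{2}\ln(1-1/k)$, which for $d = (2k-1)\ln k - 2$ and large $k$ is approximately $-\tfrac{\ln k}{2k}-\tfrac{1}{k}$, hence strictly negative; a perturbation argument using the smoothness of $f$ then transports the strict inequality to all separable $k$-stable $\rho\in\cB$. The main obstacle is the intermediate-$s$ case of claim~(1): the entropy loss from locking mass into $s$ cells must be shown to strictly dominate the gain coming from the inflated $\|\rho\|_2^2$ term, and the fact that these bounds become tight precisely at $d=\dc$ is what forces the second moment calculation to work with the \good{} variant $\Ztame$ rather than $\Zbal$ itself in the regime $2(k-1)\ln(k-1)\leq d<\dc$.
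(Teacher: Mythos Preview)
The paper does not prove this lemma; it is stated with the single sentence ``The following lemma summarizes the analysis of the function $f$ performed in \cite[\Sec~4]{Danny}.'' Your proposal correctly identifies this provenance and goes further by sketching the arguments, and the overall architecture --- split by the stability index $s$ and cite the corresponding parts of \cite{Danny} --- matches what that reference does. However, two of the specific arguments you outline would not go through as written.

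For claim~(2), the assertion that ``the Hessian of $f$ remains uniformly negative definite on the linear subspace of matrices with zero marginals throughout that closed region'' is false. Take $k$ even and let $\rho$ be block-diagonal with $k/2$ blocks of size $2\times2$, each block having all four entries equal to $1/(2k)$; this $\rho$ lies in $\cB$ and is $0$-stable. For $\varepsilon$ supported on a single block with entries $\pm a$ (zero marginals), one computes $\sum_{ij}\varepsilon_{ij}^2/\rho_{ij}=8ka^2$ while the $d$-term in the Hessian contributes $4da^2/(1-3/(2k))\sim4da^2$; since $d\sim2k\ln k\gg2k$, the quadratic form is positive. Thus $f$ is not concave on the $0$-stable region, and the global inequality $f(\rho)<f(\bar\rho)$ requires the singular-value technique of \cite{AchNaor} extended in \cite{Danny}, not a Hessian bound.

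For claim~(3), the perturbation step fails for a different reason: $\tfrac1k\mathrm{Id}$ is not a local maximiser of $f$ among $k$-stable matrices in $\cB$. Shifting mass $t$ from a diagonal entry into its row's off-diagonal entries raises the entropy by a term of order $-t\ln t$, whose derivative diverges at $t=0$, whereas it lowers $\tfrac d2\ln(1-2/k+\|\rho\|_2^2)$ only linearly by $O(dt/k)$. Hence $f$ strictly increases as one leaves $\tfrac1k\mathrm{Id}$, and over the separable $k$-stable region $f$ can vary by order $\kappa\ln k=\ln^{21}k/k$, which dwarfs the gap $f(\bar\rho)-f(\tfrac1k\mathrm{Id})=O(\ln k/k)$ that you computed. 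The argument in \cite{Danny} instead locates the genuine maximiser within the $k$-stable region and bounds $f$ there; this is precisely where the particular value $d=(2k-1)\ln k-2$ is used.
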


Further, let us call a $k$-colouring $\sigma$ of a graph $G$ on $[n]$ {\em separable}
if for any other $k$-colouring $\tau$ of $G$ the overlap matrix $\rho(\sigma,\tau)$ is separable.
The following is implicit in \cite[\Sec~3]{Danny}.

\begin{lemma}
	\label{Lemma_separable}
There is $k_0>0$ such that for all $k>k_0$ 
and all $d$ such that $2(k-1)\ln(k-1)\leq d\leq(2k-1)\ln k$ the following is true.
Let $\bar Z_{k,\omega}(\gnm)$ denote the number of $(\omega,n)$-balanced $k$-colourings of $\gnm$ that fail to be separable.
Then
	$\Erw[\bar Z_{k,\omega}(\gnm)]=o(\Erw[Z_{k,\omega}(\gnm)])$.
\end{lemma}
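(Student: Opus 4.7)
The plan is to reduce the first moment of $\bar Z_{k,\omega}$ to an analytic estimate on $f$ along the lines of \cite[\Sec~3]{Danny}. Since a balanced $k$-colouring $\sigma$ of $G$ fails to be separable iff some $k$-colouring $\tau$ of $G$ has non-separable overlap with $\sigma$, the union bound yields
\begin{equation*}
\Erw[\bar Z_{k,\omega}(\gnm)]\le \sum_{\substack{\rho\text{ non-separable}\\ \rho_{\nix\star}\in \cB_{n,k}(\omega)}} \Erw[Z^{(2)}_{k,\rho}(\gnm)].
\end{equation*}
The Stirling computation of Fact~\ref{fact_Z_rho_int} (which only requires $\rho\in \cR_{n,k}^{\mathrm{int}}$; boundary overlaps with some entry below $1/k^3$ give negligible contribution by a crude bound) yields $\Erw[Z^{(2)}_{k,\rho}(\gnm)]\le \exp(nf(\rho)+O(\ln n))$. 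Combined with $\Erw[Z_{k,\omega}(\gnm)]=\exp(n\alpha(d,k)+O(\ln n))$ from \Prop~\ref{prop_first_moment_bal_vanilla} and the polynomial cardinality $|\cR_{n,k}|\le n^{k^2}$, the lemma reduces to a uniform analytic gap
\begin{equation}\label{eq_GAP}
\max\bigl\{f(\rho)\colon \rho\in \cR_{n,k}^{\bal}(\omega),\ \rho\text{ non-separable}\bigr\}\le \alpha(d,k)-c_k
\end{equation}
for some $c_k>0$ independent of $n$. Overlaps whose column marginal $\rho_{\star\nix}$ deviates from the barycentre of $\overline\cC_k$ contribute $e^{-\Omega(n)}\Erw[Z_{k,\omega}]$ by a first-moment bound on the colour density of $\tau$ alone (\Lem~\ref{prop_first_moment_balanced}), so one may restrict in~(\ref{eq_GAP}) to $\rho$ with both marginals balanced.

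Inequality~(\ref{eq_GAP}) is the real technical heart and is established by the function analysis of \cite[\Sec~3]{Danny}. As a representative case, consider the two-row slice in which $\rho_{11}=\rho_{22}=(1-x)/k$, $\rho_{12}=\rho_{21}=x/k$ and $\rho_{ii}=1/k$ for $i\ge 3$: a direct Taylor expansion yields, with $H_2(x)=-x\ln x-(1-x)\ln(1-x)$,
\begin{equation*}
f(\rho)-\alpha(d,k)=\tfrac{2}{k}H_2(x)-\tfrac{2d\,x(1-x)}{k^2}+O(1/k^2).
\end{equation*}
For $d\le (2k-1)\ln k$, this is strictly negative on the non-separable band $x\in(0.51,1-\kappa)$ with magnitude $\Omega((\ln k)^{C}/k^2)$ (worst at $x$ near the endpoint $1-\kappa$, where $H_2(x)\sim \kappa\ln(1/\kappa)$, precisely what the choice $\kappa=\ln^{20}k/k$ is calibrated to control). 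The general non-separable $\rho$ is handled by extending the expansion via a row-wise Lagrange-multiplier analysis under the row/column-sum constraints, combined with the bounds of Lemma~\ref{Lemma_Danny}, which control $f$ on separable $s$-stable configurations; this yields~(\ref{eq_GAP}) with $c_k=\Omega((\ln k)^C/k^2)>0$.

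With~(\ref{eq_GAP}) in hand the polynomial enumeration factor $n^{O(k^2)}$ is swamped by $\exp(-c_k n)$, completing the proof. The main obstacle is the quantitative curvature computation behind~(\ref{eq_GAP}), uniform over all non-separable directions perpendicular to the row/column-sum constraints; this is delicate precisely at the upper end of the range of $d$, where the choice $\kappa=\ln^{20}k/k$ is finely tuned so that the entropy contribution is outweighed by the term $-2dx(1-x)/k^2$ throughout the non-separable band.
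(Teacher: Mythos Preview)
Your approach coincides with the paper's: the paper does not prove this lemma either, but simply records that it ``is implicit in \cite[\Sec~3]{Danny}'', and your union-bound reduction to the analytic gap~(\ref{eq_GAP}) is exactly the planted-model first-moment computation carried out there, just rephrased in terms of the pair-counting exponent $f$.

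Two small points are worth tightening. First, your reduction to doubly-balanced overlaps (restricting $\rho_{\star\nix}$ as well as $\rho_{\nix\star}$) is not justified by Lemma~\ref{prop_first_moment_balanced} as you state: that lemma controls $\Erw[Z_{k,\rho'}]$, not $\Erw[Z^{(2)}_{k,\rho}]$, and a first-moment bound on $\tau$ alone does not by itself bound the pair count. The argument in \cite{Danny} works directly with arbitrary column marginals via the general exponent $\tilde f(\rho)=H(\rho)+\frac d2\ln(1-\normA^2-\normB^2+\normC^2)$ rather than the symmetrised $f$ of~(\ref{eq_def_f_rho}). Second, Lemma~\ref{Lemma_Danny} is not the right reference at the end: it gives $f(\rho)<f(\bar\rho)=2\alpha(d,k)$ on separable configurations, which is a much weaker inequality than the $f(\rho)<\alpha(d,k)$ you need for~(\ref{eq_GAP}); the relevant input from \cite{Danny} is the separability analysis itself, not the second-moment bound encoded in Lemma~\ref{Lemma_Danny}.
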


To state the final ingredient to the proof of \Prop~\ref{prop_aco_vil}, we need the following definition.
For a graph $G$ on $[n]$ and a $k$-colouring $\sigma$ of $G$ we let
	$\cC(G,\sigma)$ be the set of all $\tau\in\Bal$ that are $k$-colourings of $G$
such that $\rho(\sigma,\tau)$ is $k$-stable.

\begin{lemma}[{\cite{Cond}}]	\label{Lemma_ClusterSize}
There is $k_0>0$ such that for all $k>k_0$ 
and all $d$ such that $(2k-1)\ln k-2\leq d\leq\dc$ the following is true.
Let $\tilde Z_{k,\omega}(\gnm)$ denote the number of $(\omega,n)$-balanced $k$-colourings  such that
$|\cC(\gnm,\sigma)|>\Erw[Z_{k,\omega}(\gnm)]/n$.
Then
	$\Erw[\tilde Z_{k,\omega}(\gnm)]=o(\Erw[Z_{k,\omega}(\gnm)])$.
\end{lemma}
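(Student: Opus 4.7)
The plan is to reduce the assertion to a second-moment-type estimate via Markov's inequality, and then invoke the analysis of the overlap functional $f$ from~(\ref{eqf}) developed in~\cite{Danny,Cond}. Define the pair-count
\[
\cN(\gnm)=\sum_{\sigma\in\Bal}\mathbbm{1}\{\sigma\text{ is a $k$-colouring of }\gnm\}\cdot|\cC(\gnm,\sigma)|,
\]
which enumerates ordered pairs $(\sigma,\tau)\in\Bal^2$ of $k$-colourings of $\gnm$ with $k$-stable overlap. The bound $\mathbbm{1}\{|\cC(\gnm,\sigma)|>\Erw[\Zbal(\gnm)]/n\}\leq n|\cC(\gnm,\sigma)|/\Erw[\Zbal(\gnm)]$ yields $\tilde Z_{k,\omega}(\gnm)\leq n\cN(\gnm)/\Erw[\Zbal(\gnm)]$, reducing the lemma to showing
\[
\Erw[\cN(\gnm)]=o\bc{\Erw[\Zbal(\gnm)]^2/n}.
\]

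I would then split $\cN$ according to whether $\sigma$ is separable. The non-separable contribution is bounded by the total count of non-separable $(\omega,n)$-balanced colourings, whose expectation is $o(\Erw[\Zbal(\gnm)])$ by Lemma~\ref{Lemma_separable}, so these colourings contribute negligibly to $\tilde Z_{k,\omega}$ directly (regardless of cluster size). For the separable part, every $\tau\in\cC(\gnm,\sigma)$ induces a separable, $k$-stable overlap, so by linearity and Fact~\ref{fact_upper_lower_bounds_Z_rho},
\[
\Erw[\cN^{\mathrm{sep}}(\gnm)]=\sum_{\substack{\rho\in\cR_{n,k}^\bal(\omega)\\\rho\text{ separable, $k$-stable}}}\Erw[Z_{k,\rho}^{(2)}(\gnm)]\leq n^{O(1)}\exp\bc{n\max_\rho f(\rho)+O(\ln n)}.
\]
Since $\Erw[\Zbal(\gnm)]^2/n=\exp(nf(\bar\rho)+O(\ln n))$ by Proposition~\ref{prop_first_moment_bal_vanilla} (with $f(\bar\rho)=2\alpha(d,k)$), the task becomes establishing an $n$-independent gap $f(\rho)\leq f(\bar\rho)-\eps(k,d)$ uniformly over separable, $k$-stable $\rho$ and $d\in[d_0,\dc]$ with $d_0=(2k-1)\ln k-2$.

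A separable, $k$-stable $\rho\in\cB$ is concentrated near a scaled permutation $P/k$, at which $f(P/k)=\alpha(d,k)$. Since $f(\bar\rho)=2\alpha(d,k)$ and $\alpha(d,k)>0$ throughout $[d_0,\dc]$, the value of $f$ at the ``centre'' of the $k$-stable region is bounded away from $f(\bar\rho)$ by $\alpha(d,k)$. At $d_0$ itself, Lemma~\ref{Lemma_Danny}(3) extends the strict inequality $f(\rho)<f(\bar\rho)$ to every separable, $k$-stable $\rho\in\cB$, and compactness of this closed subset of $\overline{\cR}_k$ combined with continuity of $f$ upgrades this to a uniform positive gap $\eps(k,d_0)>0$. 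To extend to general $d\in[d_0,\dc]$, one invokes the finer analysis from~\cite{Cond}, which controls $f$ on the separable, $k$-stable region as a function of $d$.

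The main obstacle is precisely this extension. Naive monotonicity in $d$ fails: $\partial_d f(\rho)=\tfrac12\ln(1-2/k+\|\rho\|_2^2)$ is \emph{larger} at a $k$-stable $\rho$ than at $\bar\rho$ (since $\|\rho\|_2^2\sim 1/k>1/k^2$), so the gap actually narrows as $d$ grows. What rescues the argument is that even at $d=\dc$, the pointwise value $f(P/k)=\alpha(\dc,k)$ remains strictly less than $f(\bar\rho)=2\alpha(\dc,k)$; coupled with the ``frozen cluster'' geometry of~\cite{Cond}, which confines separable $k$-stable $\rho$ to a small neighbourhood of scaled permutation points, this yields the desired uniform bound throughout the interval. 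Finally, the passage from $(O(1),n)$-balanced to $(\omega,n)$-balanced is routine since the latter is the stricter condition and only sharpens the estimates of~\cite{Cond}.
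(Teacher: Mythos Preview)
The paper does not prove this lemma; it simply cites \cite{Cond}. Your Markov reduction to bounding $\Erw[\cN(\gnm)]=o(\Erw[\Zbal(\gnm)]^2/n)$ is correct and is indeed the natural route (and essentially the one taken in \cite{Cond}), as is the separable/non-separable split via Lemma~\ref{Lemma_separable}.

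Your identification of the crux---a uniform gap $\max_{\rho\text{ sep., $k$-stable}}f(\rho)\leq f(\bar\rho)-\eps$ for all $d$ in the stated range---is also right, but the heuristic you offer for it is slightly off. The confinement of separable $k$-stable $\rho$ to a $\kappa$-neighbourhood of scaled permutation matrices is automatic from the \emph{definitions} of ``separable'' and ``$k$-stable''; it is not something \cite{Cond} provides. And the value $f(P/k)=\alpha(d,k)$ at the centre is not the issue: as your own monotonicity remark shows, $f$ \emph{increases} as one moves away from $P/k$ within the $k$-stable region, so the maximum sits at the boundary, not the centre. What \cite{Cond} actually supplies is the quantitative bound on this boundary maximum (equivalently, a first-moment bound on the planted cluster size) throughout $d\leq\dc$; this is precisely where the definition of $\dc$ enters. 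So your outline is sound, but the last paragraph should be read as ``invoke the optimisation carried out in \cite{Cond}'' rather than as an independent argument. A minor technical point: separable $k$-stable $\rho$ typically have entries below $1/k^3$, so Fact~\ref{fact_upper_lower_bounds_Z_rho} does not apply verbatim; one needs the cruder bound $\Erw[Z^{(2)}_{k,\rho}]\leq\exp(nf(\rho)+O(\ln n))$ valid for all $\rho\in\cR_{n,k}$, which follows from Stirling without the $\cR_{n,k}^{\rm int}$ restriction.
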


\begin{proof}[Proof of~\Prop~\ref{prop_aco_vil}]
Assume that $k\geq k_0$ for a large enough number $k_0$ and that $d\geq2(k-1)\ln(k-1)$.
We consider two different cases.
\begin{description}
\item[Case 1: $d\leq(2k-1)\ln k-2$]
 let $\Ztame$ be the number of $(\w,n)$-balanced separable $k$-colourings of $\gnm$.
 	Then \Lem~\ref{Lemma_separable} implies that 
	$ \Erw[ \Ztame(\gnm) ] \sim \Erw \left[ \Zbal(\gnm) \right]$.
	Furthermore, in the case that $d=(2k-1)\ln k-2$, the second and the third statement of \Lem~\ref{Lemma_Danny} imply that
	$f(\rho)<f(\bar\rho)$ for any separable $\rho\in\cB\setminus\cbc{\bar\rho}$.
	Because $f(\rho)$ is the sum of the concave function $\rho\mapsto H(\rho)$ and the convex function $\rho\mapsto\frac d2\ln(1-2/k\norm\rho_2^2)$,
	this implies that, in fact,
	for any $d\leq(2k-1)\ln k-2$ we have  $f(\rho)<f(\bar\rho)$ for any separable $\rho\in\cB\setminus\cbc{\bar\rho}$.
	Hence, the uniform continuity of $f$ on $\overline\cR_k$ 
	and Fact~\ref{fact_upper_lower_bounds_Z_rho} yield
		\begin{align}\label{eqprop_aco_vil_1}
		\Erw[\Ztame(\gnm)^2]\leq(1+o(1))
			\sum_{ \substack{\rho \in \cR_{n,k}^\bal(\w)\\\rho\mbox{\ \scriptsize is $0$-stable}}}
				\Erw \left[ Z_{k,\rho}^{(2)}(\gnm) \right].
		\end{align}
	Finally, combining~(\ref{eqprop_aco_vil_1}) with Fact~\ref{fact_upper_lower_bounds_Z_rho} and the third part of \Lem~\ref{Lemma_Danny}, we see that for any $\eta>0$,
			\begin{align}\label{eqprop_aco_vil_2}
			\sum_{ \substack{\rho \in \cR_{n,k}^\bal(\w)\\\rho\mbox{\ \scriptsize is $0$-stable}\\\norm{\rho-\bar\rho}_2>\eta}}
				\Erw \left[ Z_{k,\rho}^{(2)}(\gnm) \right]&\leq
					\sum_{ \substack{\rho \in \cR_{n,k}^\bal(\w)\\\rho\mbox{\ \scriptsize is $0$-stable}\\\norm{\rho-\bar\rho}_2>\eta}}
				\exp(nf(\rho)+O(\ln n))=o\bc{\Erw [Z^{(2)}_{k, \w, \eta}(\cG(n,m))]}.
		\end{align}
	The assertion follows by combining~(\ref{eqprop_aco_vil_1}) and~(\ref{eqprop_aco_vil_2}).
\item[Case 2: $(2k-1)\ln k-2<d<\dc$]
	 let $\Ztame$ be the number of $(\w,n)$-balanced separable $k$-colourings $\sigma$ of $\gnm$
	 such that $|\cC(\gnm,\sigma)|\leq\Erw[Z_{k,\omega}(\gnm)]/n$.
	 Then \Lem s~\ref{Lemma_separable} and~\ref{Lemma_ClusterSize} imply that
	 		$ \Erw[ \Ztame(\gnm) ] \sim \Erw \left[ \Zbal(\gnm) \right]$.
	Furthermore, the 
	first part of \Lem~\ref{Lemma_Danny} and Fact~\ref{fact_upper_lower_bounds_Z_rho} entail that~(\ref{eqprop_aco_vil_1}) holds for this random variable $\Ztame$.
	Moreover, as in the previous case
	(\ref{eqprop_aco_vil_1}), Fact~\ref{fact_upper_lower_bounds_Z_rho} and the third part of \Lem~\ref{Lemma_Danny}
	show that~(\ref{eqprop_aco_vil_2}) holds true for any fixed $\eta>0$.
\end{description}
In either case the assertion follows by combining~(\ref{eqprop_aco_vil_1}) and~(\ref{eqprop_aco_vil_2}).
\end{proof}


\begin{thebibliography}{29}


\bibitem{Barriers}
D.~Achlioptas, A.~Coja-Oghlan:
Algorithmic barriers from phase transitions.
Proc.~49th FOCS (2008) 793--802.


\bibitem{AchFried}
D.~Achlioptas, E.~Friedgut:
A sharp threshold for $k$-colorability.
Random Struct.\ Algorithms {\bf 14} (1999) 63--70.

\bibitem{AchMolloy}
D.\ Achlioptas, M.\ Molloy:
The analysis of a list-coloring algorithm on a random graph.
Proc.\ 38th FOCS (1997) 204--212.

\bibitem{AMoColor}
D.~Achlioptas, C.~Moore:
The chromatic number of random regular graphs.
Proc.~8th RANDOM (2004) 219--228

\bibitem{AchNaor}
D.~Achlioptas, A.~Naor:
The two possible values of the chromatic number of a random graph.
Annals of Mathematics {\bf 162} (2005) 1333--1349.

\bibitem{AlonKriv}
N.~Alon, M.~Krivelevich: The concentration of the chromatic number of random graphs.
Combinatorica {\bf 17} (1997) 303--313


\bibitem{Cond}
V.~Bapst, A.~Coja-Oghlan, S.~Hetterich, F.~Ra\ss mann, Dan Vilenchik:
The condensation phase transition in random graph coloring.
arXiv:1404.5513 (2014).

\bibitem{BBColor}
B.~Bollob\'as: The chromatic number of random graphs.
Combinatorica {\bf8} (1988) 49--55

\bibitem{BB}
B.~Bollob\'as: Random graphs. 2nd edition.
Cambridge University Press (2001)



\bibitem{Covers}
A.~Coja-Oghlan: Upper-bounding the $k$-colorability threshold by counting covers.
Electronic Journal of Combinatorics {\bf 20} (2013) P32.


\bibitem{RegCol}
A.~Coja-Oghlan, C.~Efthymiou, S.~Hetterich: On the chromatic number of random regular graphs.
		arXiv:1308.4287 (2013).

\bibitem{KostaSAT}
A.~Coja-Oghlan, K.~Panagiotou:
Going after the $k$-SAT threshold.
Proc.\ 45th STOC (2013) 705--714.


%
\bibitem{Angelica}
A.~Coja-Oghlan, A.~Y.\ Pachon-Pinzon:
The decimation process in random $k$-SAT.
SIAM Journal on Discrete Mathematics {\bf 26} (2012) 1471--1509.


\bibitem{Danny}
A.~Coja-Oghlan, Dan Vilenchik: Chasing the $k$-colorability threshold.
Proc.~54th FOCS (2013) 380--389.
A full version is available as arXiv:1304.1063.

\bibitem{Efthymiou}
C.~Efthymiou:
Switching colouring of $G(n,d/n)$ for sampling up to Gibbs uniqueness threshold.
To appear in the Proceedings of ESA 2014.

\bibitem{ER}
P.\ Erd\H os, A.\ R\'enyi: On the evolution of random graphs.
Magayar Tud.\ Akad.\ Mat.\ Kutato Int.\ Kozl.\ {\bf 5} (1960) 17--61.


\bibitem{GMcD}
G.~Grimmett, C.~McDiarmid: On colouring random graphs.
\MPCPS\ {\bf 77} (1975) 313--324

\bibitem{Janson}
S.~Janson: Random regular graphs: asymptotic distributions and contiguity.
Combinatorics, Probability and Computing {\bf 4} (1995) 369--405.


\bibitem{JLR}
S.~Janson, T.~{\L}uczak, A.~Ruci\'nski: Random Graphs, Wiley  2000.

\bibitem{WormaldColoring}
G.~Kemkes, X.~Perez-Gimenez and N.~Wormald: On the chromatic number of random $d$-regular graphs.
Advances in Mathematics {\bf 223} (2010).



\bibitem{KSud}
M.~Krivelevich, B.~Sudakov: Coloring random graphs.
\IPL\ {\bf 67} (1998) 71--74


\bibitem{pnas}
F.~Krzakala, A.~Montanari, F.~Ricci-Tersenghi, G.~Semerjian, L.~Zdeborova:
Gibbs states and the set of solutions of random constraint satisfaction problems.
Proc.~National Academy of Sciences {\bf104} (2007) 10318--10323.

\bibitem{KPW} 
F.\ Krzakala, A.\ Pagnani, M.\ Weigt:
Threshold values, stability analysis and high-$q$ asymptotics for the coloring problem on random graphs.
Phys.\ Rev.\ E {\bf70} (2004) 046705.

\bibitem{QuietPlanting}
F.\ Krzakala, L.\ Zdeborov'a: 
Hiding quiet solutions in random constraint satisfaction problems.
Phys.\ Rev.\ Lett.\ {\bf 102} (2009) 238701.

\bibitem{LuczakColor} 
T.~{\L}uczak: The chromatic number of random graphs.
\COMB\ {\bf11} (1991) 45--54

\bibitem{Luczak}
T.~{\L}uczak: A note on the sharp concentration of the chromatic number of random graphs.
\COMB\ {\bf 11} (1991) 295--297

\bibitem{Matula}
D.\ Matula: Expose-and-merge exploration and the chromatic number of a random graph.
Combinatorica {\bf 7} (1987) 275--284.


\bibitem{McDiarmid}
C.\ McDiarmid: Concentration.
	In Habib et al.\ (eds): Probabilistic methods for algorithmic discrete mathematics.
	Springer (1998) 195--248.

\bibitem{MM}
M.~M\'ezard, A.~Montanari:
Information, physics and computation.
Oxford University Press~2009.


\bibitem{Molloy}
M.~Molloy: The freezing threshold for $k$-colourings of a random graph.
Proc.\ 43rd STOC (2012) 921--930.

\bibitem{MolloyRestrepo}
M.~Molloy, R.~Restrepo: Frozen variables in random boolean constraint satisfaction problems. Proc.\ 24th SODA (2013) 1306--1318.

\bibitem{Reconstr}
A.~Montanari, R.~Restrepo, P.~Tetali:
Reconstruction and clustering in random constraint satisfaction problems.
SIAM J.\ Discrete Math.\ {\bf25} (2011) 771--808.

\bibitem{MPWZ}
R.~Mulet, A.~Pagnani, M.~Weigt, R.~Zecchina:
Coloring random graphs.
Phys.\ Rev.\ Lett.\ {\bf 89} (2002) 268701


\bibitem{RobinsonWormald}
R.~Robinson, N.~Wormald:
Almost all regular graphs are Hamiltonian.
Random Struct.\ Algorithms {\bf5} (1994) 363--374.

\bibitem{ShamirSpencer}
E.~Shamir, J.~Spencer: Sharp concentration of the chromatic number of random graphs $\gnp$.
\COMB\ {\bf 7} (1987) 121--129


\bibitem{LenkaFlorent} 
L.~Zdeborov\'a, F.~Krzakala: Phase transition in the coloring of random graphs.
Phys.\ Rev.\ E {\bf76} (2007) 031131.
\end{thebibliography}
\end{document}